\documentclass[reqno,12pt]{amsart}

\usepackage{amsmath,amsthm,amssymb,amsfonts,epsfig,color,graphicx,enumerate,accents,subfigure,stackrel}
\usepackage{soul}
\usepackage[a4paper,margin=3truecm]{geometry}
\usepackage[T1]{fontenc}

\DeclareMathOperator{\spt}{spt}

\DeclareMathOperator{\dist}{dist}

\def\rhotild{\tilde{\rho}}

\def\ds{\displaystyle}
\def\eps{{\varepsilon}}

\def\R{\mathbb{R}}

\def\O{\Omega}

\def\C{\mathcal{C}}

\def\M{\mathcal{M}}

\def\rhotild{\tilde{\rho}}

\def\Cbar{C}
\newcommand{\be}{\begin{equation}}
\newcommand{\ee}{\end{equation}}
\newcommand{\bib}[4]{\bibitem{#1}{\sc#2: }{\it#3. }{#4.}}

\newcommand{\weak}{\stackrel{*}{\rightharpoonup}}
\newcommand{\norm}[1]{\left\lvert #1 \right\rvert}
\newcommand{\Norm}[1]{\left\lVert #1 \right\rVert}

\numberwithin{equation}{section}
\theoremstyle{plain}
\newtheorem{theo}{Theorem}[section]
\newtheorem{lemm}[theo]{Lemma}

\newtheorem{prop}[theo]{Proposition}

\theoremstyle{remark}
\newtheorem{rema}[theo]{\bf Remark}
\newtheorem{exam}[theo]{\bf Example}

\title[To write later]{Dissociation limits in Density Functional Theory}

\author{Guy Bouchitt\'e, Giuseppe Buttazzo, Thierry Champion, Luigi De Pascale}

\date{\today}

\begin{document}

\begin{abstract}In this paper we consider the {\it Density Functional Theory} (DFT) framework, where a functional of the form
$$F_\eps(\rho)=\eps T(\rho)+bC(\rho)-U(\rho)$$
has to be minimized in the class of non-negative measures $\rho$ which have a prescribed total mass $m$ (the total electronic charge). The parameter $\eps$ is small and the terms $T$, $C$, $U$ respectively represent the kinetic energy, the electronic repulsive correlation, the potential interaction term between electrons and nuclei. Several expressions for the above terms have been considered in the literature and our framework is general enough to include most of them.

It is known that in general, when the positive charge of the nuclei is small, the so-called {\it ionization phenomenon} may occur, consisting in the fact that the minimizers of $F_\eps$ can have a total mass lower than $m$; this physically means that some of the electrons may escape to infinity when the attraction of the nuclei is not strong enough.

Our main goal, continuing the research we started in \cite{bbcd18}, is to study the asymptotic behavior of the minimizers of $F_\eps$ as $\eps\to0$. We show that the $\Gamma$-limit functional is defined on sums of Dirac masses and has an explicit expression that depends on the terms $T$, $C$, $U$ that the model takes into account.

Some explicit examples illustrate how the electrons are distributed around the nuclei according to the model used.
\end{abstract}

\maketitle

\textbf{Keywords: } Density functional theory, multi-marginal optimal transport, duality and relaxation, Coulomb cost, quantization of minimizers

\textbf{2020 Mathematics Subject Classification:} 49J45, 49N15, 82M30, 35Q40

\section{Introduction}\label{sintro}

The study of atomic configurations, or more generally of configurations of molecules composed of several nuclei, is a key problem in quantum chemistry. A large number of mathematical models have been proposed as approximations of the Schr\"odinger energy, which seems numerically too difficult to handle in its original form. Here we consider the framework of {\it Density Functional Theory} (DFT) which consists in the identification of a functional which depends only on the density of electrons, and on its minimization among measures defined on the Euclidean space $\R^3$.

In its general form, for a general density $\rho$ of electrons, the total energy $F_\eps(\rho)$ to be minimized is the sum of three terms:
$$F_\eps(\rho)=\eps T(\rho)+bC(\rho)-U(\rho)$$
where $b$ is a fixed positive parameter, $\eps$ is a small positive parameter depending on the Planck constant $\hbar$, and $T$, $C$, $U$ respectively denote the kinetic energy, the electron correlation, and the potential energy. We do not consider the nucleus-nucleus interaction because we assume that the nuclei are fixed and this extra term would then simply be a constant in the minimization procedure. In addition, we assume that every nucleus is point-like, neglecting the repulsive terms which arise in nature among nuclear particles with positive charge. For more details on the physical interpretation of this model we refer to \cite{BBL}, \cite{bbcd18}, \cite{bdpgg12}, \cite{cfm14}, \cite{cfk17} and to references therein.

\medskip\noindent
$\bullet$\ For the {\it kinetic energy} we take an integral expression of the form
\be\label{kinetic}
T(\rho)=\int\rho^\alpha|\nabla\rho|^\beta\,dx\;,
\ee
with
$$\beta\ge0,\qquad3\alpha+4\beta=5,$$
and where we use the convention that integrals without domain specification are taken over the whole ambient space $\R^3$. Particular cases are:
\begin{itemize}
\item[-]the von Weizs\"acker energy, where $\beta=2$ and $\alpha=-1$;
\item[-]the Thomas-Fermi energy, where $\beta=0$ and $\alpha=5/3$.
\end{itemize}
Also multiples of terms as in \eqref{kinetic} can be considered, as well as their sums. The condition all these expressions verify is the scaling property
$$T(\rho_s)=s^2T(\rho)\qquad\text{where }\rho_s(x)=s^3\rho(sx).$$

\medskip\noindent
$\bullet$\ The {\it correlation term} $C$ represents the electron-electron repulsion. In the literature functionals of the form
\be \label{defC0D} C_0(\rho)=\frac34\int\rho^{4/3}dx\qquad\text{or}\qquad D(\rho)=\frac12\int\int\frac{1}{|x-y|}d\rho(x)d\rho(y)\ee
have been considered, as well as their combinations (Lieb-Oxford functionals \cite{LiebOxfo81})
$$C(\rho)=c_1\int\rho^{4/3}dx+c_2\int\int\frac{1}{|x-y|}d\rho(x)d\rho(y)\;.$$
Alternatively, the {\it Strongly Correlated Electrons} functional with $N$ electrons can be considered (see for instance \cite{bdpgg12,cfk13,gori2009density,gori2010density, seidl1999strong, seidl2007strictly,seidl1999strictly}); it is defined through a multi-marginal mass transport functional\footnote{For all the details on optimal transport theory we refer to the monographs \cite{santa} and \cite{vi03}.}
\be\label{defC}
C^N_{SCE}(\rho):=\inf\left\{\int_{\R^{3N}}\!\!\!c(x_1,\dots,x_N)\,dP(x_1,\dots,x_N)\, :\, \pi^\#_iP=\rho,\ 1\le i\le N\right\},
\ee
where $P$ is a nonegative measure on $\R^{3N}$, $\pi_i$ is the projection map from $\R^{3N}$ on its $i$-th factor $\R^3$, $\#$ denotes the push forward operator defined for a map $f$ and a measure $\mu$ by
$$f^\#\mu(E)=\mu\big(f^{-1}(E)\big),$$
and $c$ is the Coulomb correlation function
$$c(x_1,\dots,x_N)=\sum_{1\le i<j\le N}\frac{1}{|x_i-x_j|}\;.$$
As noticed in \cite{bbcd18,bbcd19}, the functional $C^N_{SCE}$ is not lower semicontinuous in the space of measures, which may produce a loss of mass at infinity, and a relaxation procedure is necessary to have a well posed optimization problem (see \cite{bbcd19}). Properties of $C^N_{SCE}$, which escapes from some foundational theorem in optimal transport \cite{kel84}, have been widely studied in the last ten years, see for example \cite{bcdp17, cdms18, cope19, dep15}.

When the number $N$ of electrons is not a priori prescribed, the so-called {\it Grand Canonical} functional is the natural extension of $C^N_{SCE}$. We recall its definition in Subsection \ref{canonical}. On the other hand, it is well known that the limit of $\frac1{N^2} C^N_{SCE}$ as $N\to\infty$ is the direct energy $D$ given in \eqref{defC0D} (see for instance \cite{SL, SP, Ssurvey} and \cite{bb22} for more general costs). 

Similarly to the kinetic energy term, all the models above satisfy a common scaling property, namely:
$$C(\rho_s)=s\,C(\rho)\qquad\text{where }\rho_s(x)=s^3\rho(sx) .$$

\medskip\noindent
$\bullet$\ The {\it potential term} $-U$ represents the attractive effect between electrons and nuclei. It is of the form
\be\label{U}
U(\rho)=\int V(x)\,d\rho(x)
\ee
being $V(x)$ the Coulomb potential
$$V(x)=\sum_{1\le k\le M}\frac{Z_k}{|x-X_k|}\,.$$
Here $M$ is the number of nuclei, $X_k$ is the location of the $k$-th nucleus, and $Z_k>0$ is its positive charge.
Notice that, in the case of a single nucleus at the origin, the potential energy $U$ enjoys the same scaling property as $C$ namely
\be\label{defU0}
U_0(\rho_s)=sU_0(\rho) \qquad\text{where $U_0(\rho):= \int \frac{\rho(x)}{|x|}\, dx$}.
\ee
Other kind of potentials $V(x)$ have been considered in \cite{bbcd19}. 

Our main goal is the study of the behavior of minimal or almost minimal configurations $\rho_\eps$ of the energy functional $F_\eps$ as $\eps\to0$. Due the singularity of the Coulomb potentials, it can be checked that the infima of $F_\eps$ go to $-\infty$ at most like $-1 / \eps$ when $\eps\to0$ for the correlation terms $C$ listed above (see Remark \ref{rema:infFeps}). It is then convenient to rescale the total energies $F_\eps$ and to consider the functionals
$$G_\eps(\rho)=\eps F_\eps(\rho),$$
which have the same minimizing sequences $\rho_\eps$. The characterization of the $\Gamma$-limit $G$ of the family $G_\eps$ allows then to establish the behavior of $\rho_\eps$ as $\eps\to0$.

The case $C(\rho)=C^N_{SCE}(\rho)$ was considered in \cite{bdp17}, \cite{bbcd18}, \cite{bbcd19} and \cite{lew17} when $N=2$ (two electrons), while the general case $N>2$ involves more difficult issues related to some expected sub-additivity property of the relaxed interaction energy. In the present paper we consider the other situations, specializing to the cases where
$$C(\rho)=C_0(\rho)\qquad\text{or}\qquad C(\rho)=D(\rho).$$
We prove that the $\Gamma$-limit energy $G$ is finite only on the set of discrete measures supported by the nuclei $X_1,\dots, X_M$ and given by
$$
G(\rho)=\sum_{1\le k\le M}g_b(Z_k,\alpha_k)\qquad\text{where }\alpha_k=\rho(\{X_k\}),
$$
where $g_b$ is a function suitably defined on $\R_+\times\R_+$, see \eqref{defgb} in Section \ref{s:main}. An important consequence of this formula is the so-called dissociation effect: as $\eps\to0$ a system with $M$ nuclei behaves like $M$ independent systems made by only one nucleus with charge $Z_k$ and $\alpha_k$ electrons\footnote{here $\alpha_k$, not necessarily integer, denotes the expectation of the number of electrons attached to the $k$-th nucleus.}, $k=1,\dots,M$. The asymptotic study then reduces to the finite dimensional minimum problem
$$\min\bigg\{\sum_{1\le k\le M}g_b(Z_k,\alpha_k)\ :\ \sum_{1\le k\le M}\alpha_k\le m\bigg\},$$
where $m$ is the total electronic charge. In this model the charges $Z_k$ are prescribed and the given positions $X_k$ of the nuclei do not play any role since no repulsive effect between them is considered.

The determination of the function $g_b(Z,\alpha)$ is fundamental; in order to compute it, we introduce an auxiliary variational problem where the upscaled total energy associated to a single nucleus of charge $Z$ is minimized over all non negative measures whose total mass does not exceed a parameter $t$. More precisely we will consider the nonincreasing function $L:\R_+\to(-\infty,0]$ defined by
$$L(t)=\inf\left\{T(\rho)+C(\rho)-\int\frac{\rho(x)}{|x|}\,dx\ :\ \rho\ge0,\ \int\rho(x)\,dx\le t\right\}.$$
An important question is whether the function $L(t)$ is constant after a suitable threshold $t_*$. In fact, this case encodes the so-called {\it ionization} phenomenon, in which some of the electrons surrounding a nucleus located at the origin escape to infinity. As we will see, the existence of such a $t_*$ depends on the choice of the model for the kinetic energy $T$ and for the electronic interaction $C$. This ionization effect has been widely studied and we refer to \cite{fnvdb18}, \cite{fnvdb18b}, \cite{solo91}, \cite{solo03} and references therein for all the details and the corresponding results and conjectures.


The plan of the paper is as follows.
\begin{itemize}
\item[-]In Section \ref{s:hypo}, we give in details the precise assumptions on the energy terms $T(\rho)$ and $C(\rho)$ and outline their main properties.
\item[-]In Section \ref{s:main} we state the main convergence result, whose proof is postponed to Section \ref{s:proofG}.
\item[-]In Section \ref{s:mono} we focus on the case of only one nucleus and study the functions $L(t)$ and $g_b(Z,\alpha)$, and the relation between them. The existence of a threshold $t_*$ will be then discussed in the different models.
\item[-]Finally, Section \ref{s:examplesnew} is dedicated to the presentation of two examples.
\end{itemize}

\section{Main assumptions and properties of energy terms}\label{s:hypo}

All along the paper we will denote by $\M=\M(\R^3)$ the Banach space of signed Radon measures over $\R^3$ endowed with the total variation norm; the weak* convergence $\rho_\eps\weak\rho$ in $\M$ is understood with respect to the duality with the space $C_0(\R^3)$ of continuous functions vanishing at infinity. The subset consisting of non-negative elements of $\M$ will be denoted by $\M_+$.

For every $\eps>0$, we define $F_\eps:\M\to\R\cup\{+\infty\}$ by:
\be\label{defFeps}
F_\eps(\rho)=\begin{cases}
\eps T(\rho)+bC(\rho)-U(\rho)&\text{if $\rho\in \M_+$ and $U(\rho)<+\infty$}\\
+\infty&\text{otherwise}.
\end{cases}\ee
We now list the main assumptions we will use on the energy terms $T$, $C$ and $U$ that form this energy functional.

\subsection{General hypotheses on the kinetic energy} \label{s:hypoT}
We are interested in kinetic energies of integral form
\be\label{Tgeneral}
T(\rho) = \int f(\rho(x),\nabla \rho(x)) \,dx\,.
\ee
More precisely, we shall mainly focus on the examples of the forthcoming Section \ref{s:mono} related to the particular case (see Subsection \ref{s:kineticW} below)
$$f(r,\xi)=r^\alpha|\xi|^\beta\qquad\text{for }(r,\xi)\in\R_+\times\R^3.$$
However, we state here our hypotheses for the more general case of functionals of the form \eqref{Tgeneral} above.
We hereafter assume that the integrand $f$ ensures the following fundamental properties for $T$.
\begin{itemize}
\item \emph{Convexity and lower semicontinuity}: $T:\M_+\to[0,+\infty]$ is convex and weak* lower semi-continuous (in duality with $C_0(\R^3)$).
\item \emph{Finiteness}: $T(\rho)<+\infty$ for every $\rho\in C^\infty_c(\R^3)$.
\item \emph{Scaling}: for all $\rho\in\M_+$ and $s>0$ it holds
\newcounter{TTT} \setcounter{TTT}{-1}
\be\tag{T\theTTT}\stepcounter{TTT}\label{h:Thomog}
T(\rho_{\#s})=s^{2}\,T(\rho)\qquad\text{with }\rho_{\#s}:=(h_s)_\#\,\rho,\text{ where }h_s(x)=x / s
\ee
\end{itemize}
Note that if $\rho$ is absolutely continuous then $\rho_{\# s} : x \mapsto s^{3} \,\rho(s x)$.

From the general form \eqref{Tgeneral} we easily get that $T$ is translation invariant, meaning
$$T(\rho(\cdot+X))=T(\rho)\qquad\text{for every }X\in\R^3,$$
and that it is also additive in the sense
\be \tag{T\theTTT}\stepcounter{TTT}\label{h:Tadd}
\forall\mu,\nu\in\M_+,\quad\spt(\mu)\cap\spt(\nu)=\emptyset\quad\Longrightarrow\quad T(\mu+\nu)=T(\mu)+T(\nu)\,.
\ee
Moreover, these functionals usually have the following \emph{truncation} property (under mild assumptions on $f$), that we shall also subsequently assume: there exists a sequence of smooth cut-off functions $\theta_n$ with $0\le\theta_n\le1$, $\theta_n=1$ on $B(0,n)$ and $\theta_n=0$ outside $B(0,n+1)$ such that
\be\tag{T\theTTT}\stepcounter{TTT}\label{h:Ttrunc}
\forall\rho\in\M_+,\qquad T(\theta_n\,\rho)\to T(\rho).
\ee
Finally we require two less usual properties for $T$. First, our analysis relies on the fact that the kinetic energy $T$ locally controls the potential energy $U_0$ (see \eqref{defU0}) in the following way:
\be\tag{T\theTTT}\stepcounter{TTT}\label{h:TU}
U_0\left(1_\Omega\,\rho\right) =\int_\O\frac{1}{|x|}\,\rho(dx)\le K_U \,(T(\rho)+1)^q\,\left(\int_\O \,d\rho\right)^p,
\ee
for every $\rho\in \M_+$ and $\Omega$ open subset of $\R^3$, for some constants $q \in \,]0,1[\,$, $p \in \,]0,1]$ and $K_U >0$ which are independent of the open set $\Omega$. Second, we also need some \emph{localization property} property for $T$, in the sense that for every $\delta >0$,
there exists a smooth cut-off function $\theta_\delta$ around $\{X_1,\ldots,X_M\}$ (\emph{i.e.} $0\le\theta_\delta\le1$, $\theta_\delta$ equals $1$ on $\cup_i B(X_i,\delta)$ and $0$ outside $\cup_i B(X_i,2\delta)$) such that
\be\tag{T\theTTT}\stepcounter{TTT}\label{h:Tlocal}
T(\theta_\delta\,\rho)\le T(\rho) + K_T (T(\rho) + 1)^r \omega(\delta)
\ee
for every $\rho\in L^1_+(\R^3)$, for some constants $K_T$ and $r \in [0,1]\,$, with $\omega(\delta) \to 0$ as $\delta \to 0$.


\subsection{Validity of the hypotheses for a class of kinetic energies}
\label{s:kineticW}
We consider kinetic energies of the form \eqref{kinetic}, that is
\be\label{h:kinetic}
T(\rho)=\int\rho^\alpha \norm{\nabla \rho}^\beta dx.
\ee
To ensure weak* lower semi-continuity of $T$ on $\M_+$ it is necessary that $\beta \geq 0$.
The scaling property \eqref{h:Thomog} reduces to $3 \alpha + 4\beta =5$, and the convexity of $T$ is equivalent to $\alpha\notin\,]0,1[$, $\beta\notin\,]0,1[$ and $\alpha\beta(1-\alpha-\beta)\ge0$. As a consequence, we shall hereafter assume
\be\label{h:alphabeta}
\alpha=\frac{5}{3} \;\text{ and }\; \beta=0 \qquad \text{or} \qquad 
\alpha=\frac{5-4\,\beta}{3} \;\text{ and }\; \beta\in[5/4,2]\,.
\ee
The first case is known as Thomas-Fermi model, and we use the notation
$$T_0(\rho)=\frac{3}{5}\int\rho^{5/3}dx.$$
The second case is of the von Weizs\"acker form, and we use the notation
$$W_{\alpha,\beta}(\rho)=\int\rho^\alpha|\nabla\rho|^\beta dx\qquad \mbox{with} \quad \alpha=\frac{5-4\beta}{3},\ \beta\in[5/4,2].$$
For this case we derive the following estimate for future use.

\begin{lemm}\label{lemm:Wsobol}
Assume \eqref{h:alphabeta} with $\beta\in[5/4,2]$, then there exist a constant $\kappa$ such that
$$\forall \rho \in L^1_+(\R^3), \qquad \Norm{\rho}_{L^{(5-\beta)/(3-\beta)}(\R^3)} \,\le\kappa W_{\alpha,\beta}(\rho)^{3/(5-\beta)}\,.$$
Note that $(5-\beta)/(3-\beta)\ge15/7$ for $\beta\in[5/4,2]$.
\end{lemm}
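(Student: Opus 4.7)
The plan is to reduce the estimate to the classical Sobolev embedding in $\R^3$ via a power change of unknown. Set
\[
\gamma := \frac{5-\beta}{3\beta} \quad \text{and} \quad u := \rho^{\gamma},
\]
and let $\beta^* := 3\beta/(3-\beta)$ denote the Sobolev conjugate of $\beta$ in dimension three, which is well defined since $\beta \in [5/4,2] \subset [1,3)$. We may clearly assume $W_{\alpha,\beta}(\rho) < +\infty$, otherwise there is nothing to prove.

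The key algebraic observation is that the exponent $\gamma$ is tuned precisely so as to convert $W_{\alpha,\beta}(\rho)$ into a pure Sobolev seminorm of $u$. Indeed, at least formally $\nabla u = \gamma\, \rho^{\gamma-1} \nabla \rho$, and since $\alpha = (5-4\beta)/3$ a direct computation gives $(\gamma-1)\beta = (5-4\beta)/3 = \alpha$, so that
\[
\int |\nabla u|^{\beta}\,dx \;=\; \gamma^{\beta} \int \rho^{\alpha} |\nabla \rho|^{\beta}\,dx \;=\; \gamma^{\beta} \,W_{\alpha,\beta}(\rho).
\]
On the other hand, one also checks that $\gamma \beta^* = (5-\beta)/(3-\beta)$, which is exactly the target exponent $q$ of the lemma, hence
\[
\int u^{\beta^*}\,dx \;=\; \int \rho^{q}\,dx \;=\; \Norm{\rho}_{L^{q}}^{\,q}.
\]

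Once these two identities are in place, I would apply the Sobolev inequality $\Norm{u}_{L^{\beta^*}} \le C_S \Norm{\nabla u}_{L^{\beta}}$ and substitute, obtaining
\[
\Norm{\rho}_{L^{q}}^{\,\gamma} \;\le\; C_S\, \gamma\, W_{\alpha,\beta}(\rho)^{1/\beta}.
\]
Raising to the power $1/\gamma$ and using $1/(\gamma \beta) = 3/(5-\beta)$ yields the claimed bound with $\kappa := (C_S\, \gamma)^{3\beta/(5-\beta)}$. The observation that $q = (5-\beta)/(3-\beta) \ge 15/7$ is then a direct monotonicity check as $\beta$ ranges in $[5/4,2]$.

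The main obstacle is the rigorous justification of the Sobolev inequality for $u = \rho^{\gamma}$: $\rho$ is only assumed to lie in $L^1_+(\R^3)$ with $W_{\alpha,\beta}(\rho) < +\infty$, so a priori $u$ need not belong to $W^{1,\beta}(\R^3)$ and the chain rule $\nabla u = \gamma\rho^{\gamma-1}\nabla\rho$ requires care on $\{\rho = 0\}$. The standard remedy is approximation: replace $\rho$ by $\rho_{n,\delta} := \eta_\delta \star (\theta_n \rho) + \delta\,\chi_{B(0,n)}$ where $\theta_n$ is a smooth cutoff and $\eta_\delta$ a mollifier, so that the approximants $u_{n,\delta} := \rho_{n,\delta}^{\gamma}$ are smooth, compactly supported and bounded below on their support; for these the chain rule and the Sobolev inequality hold classically. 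One then passes to the limit $\delta \to 0$ and $n \to \infty$ using Fatou's lemma on the left-hand side together with the lower semicontinuity of $W_{\alpha,\beta}$ on the right-hand side. Two endpoint cases are illustrative and provide a sanity check: $\beta = 5/4$ gives $\gamma = 1$, so $u = \rho$ and the statement reduces to $W^{1,5/4}(\R^3) \hookrightarrow L^{15/7}(\R^3)$, while $\beta = 2$ gives $\gamma = 1/2$, so $u = \sqrt{\rho}$ and one recovers the familiar estimate $\Norm{\rho}_{L^{3}} \le C \int |\nabla \sqrt{\rho}|^{2}\,dx$ attached to the von Weizs\"acker model.
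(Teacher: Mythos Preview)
Your approach is essentially identical to the paper's: the paper sets $u=\rho^{(\alpha+\beta)/\beta}$, and since $\alpha=(5-4\beta)/3$ one computes $(\alpha+\beta)/\beta=(5-\beta)/(3\beta)=\gamma$, so the substitution is the same; the paper then applies the Sobolev embedding $\|u\|_{L^{3\beta/(3-\beta)}}\le\kappa_0\|\nabla u\|_{L^\beta}$ and the identity $W_{\alpha,\beta}(\rho)=(\beta/(\alpha+\beta))^\beta\|\nabla u\|_{L^\beta}^\beta$ exactly as you do. Your treatment is in fact more careful than the paper's about the chain rule and the applicability of the Sobolev inequality, which the paper handles only by remarking that $\gamma\in\,]1/2,1]$ forces $u\in L^1_{loc}$.
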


\begin{proof}
Under \eqref{h:alphabeta} with $\beta\in[5/4,2]$, it holds $\frac{\alpha+\beta}{\beta} \in \,]\frac{1}{2},1]$ and we infer that $u=\rho^{\frac{\alpha+\beta}{\beta}}$ is in $L^1_{loc}$. The classical Sobolev embedding inequality yields
$$\|u\|_{L^{3\beta/(3-\beta)}(\R^3)}\le\kappa_0\|\nabla u\|_{L^\beta(\R^3)}$$
for some constant $\kappa_0$. Then the identity
$$W_{\alpha,\beta}(\rho)=\left(\frac{\beta}{\alpha+\beta}\right)^\beta \int \norm{\nabla(\rho^{\frac{\alpha+\beta}{\beta}})}^\beta dx\, = \left(\frac{\beta}{\alpha+\beta}\right)^\beta \Norm{\nabla u}_{L^\beta(\R^3)}^\beta$$
implies the claim.
\end{proof}




The property \eqref{h:Ttrunc} is quite classical in this setting, while property \eqref{h:TU} is a direct consequence of the following result, which was stated in \cite[Lemma 3.6]{bbcd18} for the special case $\alpha=-1, \beta=2$. The proof below follows the same approach.

\begin{lemm}\label{lemm:TU}
Let \eqref{h:kinetic} and \eqref{h:alphabeta} hold, then there exist a constant $\kappa$ such that for every domain $\Omega$ (bounded or not) and every $\rho \in \M_+ \cap L^1(\R^3)$ we have
$$\int_\O\frac{1}{|x|}\rho(x)\,dx\le
\begin{cases}
\kappa\,T_0(\rho)^{1/2}\,\left(\int_\O\rho\,dx\right)^{1/6}&\text{if }\beta=0\\
\kappa\,\big(W_{\alpha,\beta}(\rho)\big)^{1/2}\Big(\int_\O\rho\,dx\Big)^{(\beta+1)/6}&\text{if }\beta\in[5/4,2].
\end{cases}$$
\end{lemm}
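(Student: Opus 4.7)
The plan is to split the integral $\int_\Omega \frac{\rho(x)}{|x|}\,dx$ according to the two natural scales: a small ball $B(0,R)$ around the singularity of $|x|^{-1}$, and its complement. On the complement, $1/|x| \le 1/R$ gives the trivial bound $\int_{\Omega \setminus B(0,R)} \frac{\rho}{|x|}\,dx \le R^{-1} \int_\Omega \rho\,dx$. On $B(0,R)$, I apply Hölder's inequality with an exponent $p$ for which $\rho \in L^p$ (controlled by the kinetic energy) and the conjugate $p'$ such that $1/|x| \in L^{p'}_{\mathrm{loc}}$; by spherical integration,
$$\Norm{1/|x|}_{L^{p'}(B(0,R))} = C_{p'}\, R^{3/p' - 1}\qquad\text{provided }p' < 3.$$
Optimizing the resulting bound $M/R + C \|\rho\|_{L^p} R^{3/p'-1}$ over $R>0$ yields the claim, with $M := \int_\Omega \rho\,dx$.

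For $\beta=0$ (Thomas–Fermi), I take $p = 5/3$, so $\|\rho\|_{L^p}^{5/3} \sim T_0(\rho)$ and $p' = 5/2 < 3$. Then $3/p'-1 = 1/5$, and optimizing $R \mapsto M/R + C \|\rho\|_{L^{5/3}} R^{1/5}$ gives $R_* \sim (M/\|\rho\|_{L^{5/3}})^{5/6}$ and a bound of order $\|\rho\|_{L^{5/3}}^{5/6} M^{1/6} \sim T_0(\rho)^{1/2} M^{1/6}$, as required.

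For $\beta\in[5/4,2]$ (von Weizsäcker type), the kinetic energy does not control an $L^p$ norm directly, so I invoke Lemma \ref{lemm:Wsobol} with $p = (5-\beta)/(3-\beta)$, obtaining $\|\rho\|_{L^p} \le \kappa\, W_{\alpha,\beta}(\rho)^{3/(5-\beta)}$. The conjugate exponent $p' = (5-\beta)/2$ lies in $[3/2, 15/8]$, which is indeed less than $3$, so $1/|x| \in L^{p'}(B(0,R))$ and $3/p'-1 = (1+\beta)/(5-\beta)$. Optimizing gives $R_* \sim (M/\|\rho\|_{L^p})^{(5-\beta)/6}$, whence
$$\int_\Omega \frac{\rho}{|x|}\,dx \;\lesssim\; M^{(1+\beta)/6}\, \|\rho\|_{L^p}^{(5-\beta)/6} \;\le\; \kappa'\, W_{\alpha,\beta}(\rho)^{1/2}\, M^{(\beta+1)/6},$$
matching the announced estimate.

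The only delicate points to watch are (i) verifying $p' < 3$ for the whole range $\beta\in[5/4,2]$ so that $|x|^{-1}$ is locally in $L^{p'}$, which I checked above, and (ii) ensuring uniformity in $\Omega$: the split and both estimates depend only on $M$, on the global $L^p$ norm of $\rho$, and on $R$, none of which involve $\Omega$ beyond the monotonicity of integration — so the constant $\kappa$ is indeed independent of $\Omega$, as claimed. No real obstacle arises beyond bookkeeping the exponents.
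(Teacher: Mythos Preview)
Your proposal is correct and follows essentially the same approach as the paper: split the integral over $\Omega\cap B(0,R)$ and $\Omega\setminus B(0,R)$, apply H\"older with $p=5/3$ (resp.\ $p=(5-\beta)/(3-\beta)$ via Lemma~\ref{lemm:Wsobol}) on the ball, use the trivial bound $1/|x|\le 1/R$ on the complement, and optimize in $R$. The exponent bookkeeping and the verification that $p'<3$ are exactly as in the paper.
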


\begin{proof}
For the Thomas-Fermi case $\alpha=5/3$ and $\beta=0$ for all $\delta>0$ we have
\begin{align*}
\int_\O\frac{1}{|x|}\rho(x)\,dx
&\le\int_{\O\cap B(0,\delta)}\frac{1}{|x|}\rho(x)\,dx + \frac{1}{\delta}\int_{\O\setminus B(0,\delta)}\rho(x)\,dx\\
&\le\|\rho\|_{L^{5/3}}\|1/|x|\|_{L^{5/2}(B(0,\delta))} + \frac{1}{\delta}\int_\O\rho(x)\,dx\\
&=\kappa \ \delta^{1/5}\,T_0(\rho)^{3/5} + \frac{1}{\delta}\int_\O\rho(x)\,dx.
\end{align*}
Optimizing with respect to $\delta>0$ yields the result.

For the $W_{\alpha,\beta}$ case, we have, similarly,
\begin{align*}
\int_\O\frac{1}{|x|}\rho(x)\,dx
&\le\int_{\O\cap B(0,\delta)}\frac{1}{|x|}\rho(x)\,dx + \frac{1}{\delta}\int_{\O\setminus B(0,\delta)}\rho(x)\,dx\\
&\le\|\rho\|_{L^{(5-\beta)/(3-\beta)}}\|1/|x|\|_{L^{(5-\beta)/2}(B(0,\delta))} + \frac{1}{\delta}\int_\O\rho(x)\,dx\\
&=(4 \pi)^{2/(5-\beta)} \|\rho\|_{L^{(5-\beta)/(3-\beta)}}\delta^{(\beta+1)/(5-\beta)} + \frac{1}{\delta}\int_\O\rho(x)\,dx,
\end{align*}
From Lemma \ref{lemm:Wsobol} we obtain the existence of a constant $\kappa$ such that 
$$\int_\O\frac{1}{|x|}\rho(x)\,dx \;\le\; \kappa \, \left(W_{\alpha,\beta}(\rho)\right)^{3/(5-\beta)}\delta^{(\beta+1)/(5-\beta)} + \frac{1}{\delta}\int_\O\rho(x)\,dx.$$
Optimizing with respect to $\delta>0$ yields the result.
\end{proof}



We finally turn to the localization property \eqref{h:Tlocal}, which follows from Lemma \ref{lemm:Tlocal} below.

\begin{lemm}\label{lemm:Tlocal}
Let \eqref{h:kinetic} and \eqref{h:alphabeta} hold, then for all $\delta>0$ there exist a smooth cut-off function $\theta_\delta$ at the points $X_1,\ldots,X_M$ (\emph{i.e.} $0\le\theta_\delta\le1$, $\theta_\delta$ equals $1$ on $\cup_i B(X_i,\delta)$ and $0$ outside $\cup_i B(X_i,2\delta)$) and a constant $K$ such that
$$\forall\rho\in\M_+\cap L^1(\R^3),\qquad T(\theta_\delta\,\rho)\le T(\rho)+K \,T(\rho)\,\delta^\beta\,.$$
\end{lemm}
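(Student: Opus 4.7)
The plan is to expand $T(\theta_\delta\rho)$ via the chain rule, to note that no error arises inside $\{\theta_\delta=1\}$ since $\nabla\theta_\delta=0$ there, and to carefully estimate the boundary correction supported in the annular region $A:=\bigcup_i(B(X_i,2\delta)\setminus B(X_i,\delta))$. The case $\beta=0$ is immediate: $T(\theta_\delta\rho)=\tfrac{3}{5}\int\theta_\delta^{5/3}\rho^{5/3}\,dx\le T(\rho)$ since $\theta_\delta\le 1$. I therefore focus on $\beta\in[5/4,2]$.

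I would then switch to the variable $u:=\rho^{(\alpha+\beta)/\beta}$ as in the proof of Lemma \ref{lemm:Wsobol}, so that $T(\rho)=\bigl(\tfrac{\beta}{\alpha+\beta}\bigr)^\beta\int|\nabla u|^\beta\,dx$. Since $\alpha+\beta=(5-\beta)/3>0$, the function $\tilde\theta_\delta:=\theta_\delta^{(\alpha+\beta)/\beta}\in[0,1]$ is again a smooth cutoff with the same support behaviour as $\theta_\delta$, and $(\theta_\delta\rho)^{(\alpha+\beta)/\beta}=\tilde\theta_\delta\,u$. The chain rule then gives
$$T(\theta_\delta\rho)=\bigl(\tfrac{\beta}{\alpha+\beta}\bigr)^\beta\int|\tilde\theta_\delta\nabla u+u\nabla\tilde\theta_\delta|^\beta\,dx,$$
which I would split through a parametric Young inequality $(a+b)^\beta\le(1+\varepsilon)a^\beta+C_\varepsilon b^\beta$ (valid for $\beta\ge 1$), obtaining a main piece bounded by $(1+\varepsilon)T(\rho)$ (because $\tilde\theta_\delta\le 1$) plus a correction $C_\varepsilon\int u^\beta|\nabla\tilde\theta_\delta|^\beta\,dx$.

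To estimate the correction I would exploit $|\nabla\tilde\theta_\delta|\le C/\delta$ on $A$, the bound $|A|\le C'\delta^3$, the identity $u^\beta=\rho^{(5-\beta)/3}$, and H\"older's inequality combined with Lemma \ref{lemm:Wsobol}, which yield
$$\int_A\rho^{(5-\beta)/3}\,dx\le\|\rho\|_{L^{(5-\beta)/(3-\beta)}(\R^3)}^{(5-\beta)/3}|A|^{\beta/3}\le\kappa^{(5-\beta)/3}(C')^{\beta/3}\,T(\rho)\,\delta^\beta,$$
and hence $\int u^\beta|\nabla\tilde\theta_\delta|^\beta\,dx\le C'' T(\rho)$. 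Balancing $\varepsilon$ against the gain $\delta^\beta$ available on the annulus should then close the proof.

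The main obstacle is actually producing the factor $\delta^\beta$ in front of the correction term: the $\delta^\beta$ gain coming from $|A|^{\beta/3}$ is exactly compensated by the $\delta^{-\beta}$ from $|\nabla\tilde\theta_\delta|^\beta$, so the extra smallness must be extracted indirectly. For $\beta>1$ the Young constant $C_\varepsilon\sim\varepsilon^{1-\beta}$ diverges as $\varepsilon\to 0$, so a naive choice $\varepsilon\simeq\delta^\beta$ does not by itself suffice. The technical heart of the argument is therefore a refined estimate on $\int_A u^\beta\,dx$---for instance a weighted Hardy--Poincar\'e-type bound on the thin annulus, exploiting the higher integrability of $u$ provided by Lemma \ref{lemm:Wsobol}---that absorbs an additional factor of $\delta^\beta$ into the correction before the Young splitting is applied.
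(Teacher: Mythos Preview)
Your Young-inequality splitting is not the device the paper uses, and the obstacle you flag at the end is exactly where the two approaches diverge. Instead of $(a+b)^\beta\le(1+\varepsilon)a^\beta+C_\varepsilon b^\beta$, the paper applies Jensen's inequality for $t\mapsto t^\beta$ with the \emph{data-dependent} convex weights $\theta$ and $1-\theta$:
\[
|\rho\nabla\theta+\theta\nabla\rho|^\beta
=\Big|(1-\theta)\,\frac{\rho\nabla\theta}{1-\theta}+\theta\,\nabla\rho\Big|^\beta
\le(1-\theta)\,\Big|\frac{\rho\nabla\theta}{1-\theta}\Big|^\beta+\theta\,|\nabla\rho|^\beta.
\]
After multiplying by $\theta^\alpha\rho^\alpha$, integrating, and using $\theta^{\alpha+1}\le1$ (note $\alpha+1\ge0$ under \eqref{h:alphabeta}), this gives directly
\[
T(\theta\rho)\le T(\rho)+\int\theta^\alpha(1-\theta)^{1-\beta}|\nabla\theta|^\beta\,\rho^{\alpha+\beta}\,dx,
\]
so the main term carries coefficient $1$ exactly---there is no parameter $\varepsilon$ to balance. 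The price is the singular weight $\theta^\alpha(1-\theta)^{1-\beta}$ (recall $\alpha\le0$ and $1-\beta<0$), which forces a specific choice of cutoff: the paper takes $\theta_\delta$ built from $e^{-1/t}$ (Remark~\ref{rmk:deftheta}) so that $\theta^\alpha(1-\theta)^{1-\beta}|\nabla\theta|^\beta$ remains bounded despite these negative exponents. The correction integral is then handled by the same H\"older/Lemma~\ref{lemm:Wsobol} step you describe.

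Your route, by contrast, leaves an unbalanced $\varepsilon$ and then appeals to an unspecified ``weighted Hardy--Poincar\'e-type bound on the thin annulus'' to manufacture an additional factor $\delta^\beta$. This is a genuine gap: you supply no such inequality, and since Lemma~\ref{lemm:Wsobol} already exhausts the Sobolev integrability of $u$ (and $u$ has no reason to be small on $A$), there is no evident source for a further gain on $\int_A u^\beta\,dx$. The Jensen trick with weights $\theta,1-\theta$ is precisely the idea that removes the need for this missing estimate.
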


\begin{proof}
First note that in the case $\alpha=5/3$ and $\beta=0$ we obviously have $T(\theta_\delta\rho)\le T(\rho)$ for any smooth cut-off function $\theta_\delta$.

We now turn to the case $\beta\in\left[\tfrac{5}{4},2\right]$. For any smooth cut-off function $\theta$, we compute
\begin{align*}
T(\theta\rho)&=\int \theta^\alpha\rho^\alpha|\rho\nabla\theta+\theta\nabla\rho|^\beta\,dx\\
&\le\int\theta^\alpha\rho^\alpha\left((1-\theta)\norm{\frac{\nabla \theta}{1-\theta}\rho}^\beta + \theta |\nabla\rho|^\beta\right)dx\\
&\le\int\theta^{\alpha}(1-\theta)^{1-\beta}|\nabla\theta|^\beta\rho^{\alpha+\beta}\,dx+T(\rho)
\end{align*}
were we used the convexity of the function $t\mapsto t^\beta$. We choose $\theta$ so that $\theta^{\alpha} (1-\theta)^{1-\beta} \norm{\nabla \theta}^\beta$ is bounded by some constant $K_0$ (see Remark \ref{rmk:deftheta} below). Using the fact that $\theta$ is supported on $\cup_i B(X_i,\delta)$, we now apply the H\"older inequality and Lemma \ref{lemm:Wsobol} to get
\begin{align*}
\int \theta^{\alpha} (1-\theta)^{1-\beta} \norm{\nabla \theta}^\beta \rho^{\alpha+\beta}\,dx
&\le K_0\,\left(\int\rho^q dx\right)^{(3-\beta)/3}\left(\int_{\cup_i B(X_i,2\delta)}1\,dx\right)^{\beta/3}\\
&\le K\,W_{\alpha,\beta}(\rho)\,\delta^\beta\,.
\end{align*}
for some constant $K$, which concludes the proof.
\end{proof}

\begin{rema}\label{rmk:deftheta}
As in \cite[Example 4.2]{bbcd18}, we may construct an explicit cut-off $\theta_\delta$ by considering the real functions 
$$f(t)=\begin{cases}
e^{-1/t}&\mbox{if }t>0,\\
0&\mbox{otherwise.}
\end{cases}
\qquad\text{and}\qquad g(t)=\frac{f(t)}{f(t)+f(1-t)}\,.
$$
Then $\theta_\delta(x):=g\left(\norm{x}/\delta\right)$ is such that $\theta_\delta^p \, (1-\theta)^q\, \norm{\nabla \theta_\delta}^r$ is bounded whenever $r>0$. Since this choice does not depend on the parameters $\alpha,\beta$, then any positive linear combination of kinetic energies of the form \eqref{h:kinetic} satisfies the hypotheses of Subsection \ref{s:hypoT}.
\end{rema}

\subsection{General hypotheses on the electron-electron interaction term}
The relevant properties of this part of the energy for our study are the following.

\begin{enumerate}
\item {\it Convexity and lower semicontinuity}: $C:\M_+\to[0,+\infty]$ is convex and weak* lower-semicontinuous.
\item {\it Finiteness}: $C(\rho) < +\infty$ for any $\rho \in C^\infty_c(\R^3;\R_+)$.
\item {\it Translation invariance}: $C$ is translation invariant, meaning
$$C(\rho(\cdot+X))=C(\rho)\qquad\text{for every }X\in\R^3,$$
\item {\it Scaling}: for all $\rho\in\M_+$ and $s>0$ it holds
\newcounter{CCC} \setcounter{CCC}{-1}
\be\tag{C\theCCC}\stepcounter{CCC} \label{h:Chomog}
\Cbar(\rho_{\#s})=s\,\Cbar(\rho)
\qquad\text{with }\rho_{\#s}=(h_s)_\#\,\rho,\text{ where }h_s(x)=x / s.
\ee
\item {\it Monotonicity}: whenever $\rho_1\le\rho_2$ we have
\be\tag{C\theCCC}\stepcounter{CCC}\label{h:Cmono}
C(\rho_1)\le C(\rho_2).
\ee
\item {\it Superadditivity}: whenever $\rho_1$ and $\rho_2$ have supports at distance $R>0$,
\be\tag{C\theCCC}\stepcounter{CCC}\label{h:Csuperadd}
\Cbar(\rho_1+\rho_2)\ge\Cbar(\rho_1)+\Cbar(\rho_2)-K_C(R)\rho_1(\R^3)\rho_2(\R^3)
\ee
for some constant $K_C(R)$ depending only on $R$.
\item {\it Weak subadditivity}: whenever $\rho_1$ and $\rho_2$ have supports at distance $R>0$,
\be\tag{C\theCCC}\stepcounter{CCC} \label{h:Csubadd}
\Cbar(\rho_1+\rho_2)\le\Cbar(\rho_1)+\Cbar(\rho_2)+K_C(R)\rho_1(\R^3)\rho_2(\R^3).
\ee
with $K_C(R)\to0$ as $R\to+\infty$.
\item {\it Domination by $U_0$}: for any $b\ge0$ and $U_0$ given in \eqref{defU0}, it holds
\be\tag{C\theCCC}\stepcounter{CCC}\label{h:CU0}
\inf\left\{b\,C(\rho)-U_0(\rho)\ :\ \rho\in C^\infty_c(\R^3)\right\}<0.
\ee
\end{enumerate} 

\begin{rema}\label{rema:infFeps}
We first note that the homogeneity property \eqref{h:Chomog} of $C$ is also shared by $U_0$, so that \eqref{h:CU0} holds whenever the infimum is in fact $-\infty$.
We proceed as in \cite{bbcd18} to show that $\inf\{F_\eps\}$ goes to $-\infty$ at most like $-1/\eps$. By the translation invariance of $T$ and $C$, we infer that the functional $F_\eps$ given in \eqref{defFeps} satisfies
$$\inf\{F_\eps\}\le\inf\left\{\eps T(\rho)+bC(\rho)-Z_1 U_0(\rho)\ :\ \rho\in C^\infty_c(\R^3)\right\}.$$
Thanks to \eqref{h:CU0} there exists $\rho^*\in C^\infty_c(\R^3)$ such that $b\,C(\rho^*)-Z_1 U_0(\rho^*)<0$. Then from the respective homogeneity properties \eqref{h:Thomog} and \eqref{h:Chomog}
of $T$ and $C$ it follows that
\[
\forall s>0, \qquad \inf\{F_\eps\} \leq s^2 \,\eps T(\rho^*) + s\,b\,C(\rho^*)-s\, Z_1 U_0(\rho^*)\,.
\]
Optimizing with respect to $s>0$ yields that
\[
\inf\{ F_\eps\} \leq -\frac{1}{\eps} \,\frac{(b\,C(\rho^*)-\, Z_1 U_0(\rho^*))^2}{4\,T(\rho^*)}
\]
which proves the claim.
\end{rema}

\medskip
We now discuss several examples.

\subsubsection{The case $C_0$} 
The functional
$$C_0(\rho):=\frac{3}{4}\int\rho^{4/3}dx$$
obviously satisfies the properties above.

\subsubsection{The direct energy $D$}
By standard considerations it is possible to see that the functional (considered in \cite{lieb83})
$$D(\rho):=\frac12\int\!\!\!\int\frac{1}{|x-y|}\,d\rho(x)\,d\rho(y)$$
is convex and lower semi-continuous with respect to the weak* convergence on $\M_+$. Properties \eqref{h:Chomog} and \eqref{h:Cmono} are also satisfied.
Property \eqref{h:CU0} follows from the fact that
$$b\,D(t\,\rho)-U_0(t\,\rho)=b\,t^2\,D(\rho)-t U_0(\rho)<0$$
for any $t>0$ sufficiently small whenever $\rho \in C_c^\infty(\R^3;\R_+)$ is not equal to $0$.

We now turn to the additivity properties.

\begin{lemm}\label{lemm:Daddit} Let $\rho_1,\ \rho_2\in \M_+$ be such that $\displaystyle \delta :=\dist(\spt(\rho_1),\spt(\rho_2)) >0$, then 
$$D(\rho_1)+D(\rho_2)\le D(\rho_1+\rho_2)\leq D(\rho_1)+D(\rho_2)+\frac{2}{\delta}\rho_1(\R^3) \rho_2(\R^3).$$
\end{lemm}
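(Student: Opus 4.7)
The proof is essentially a direct expansion of the bilinear structure of $D$. The plan is to write $D(\rho_1+\rho_2)$ using the bilinearity of the integral:
\begin{align*}
D(\rho_1+\rho_2)
&=\frac12\int\!\!\!\int\frac{1}{|x-y|}\,d(\rho_1+\rho_2)(x)\,d(\rho_1+\rho_2)(y)\\
&=D(\rho_1)+D(\rho_2)+I(\rho_1,\rho_2),
\end{align*}
where, after using the symmetry of the kernel $1/|x-y|$ to merge the two cross terms, we have set
$$I(\rho_1,\rho_2):=\int\!\!\!\int\frac{1}{|x-y|}\,d\rho_1(x)\,d\rho_2(y).$$

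For the lower bound, I just note that $I(\rho_1,\rho_2)\ge 0$ since the integrand is nonnegative, which yields $D(\rho_1)+D(\rho_2)\le D(\rho_1+\rho_2)$ with no need for any hypothesis on the supports.

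For the upper bound, I use the disjointness of supports: for $\rho_1\otimes\rho_2$-a.e.\ $(x,y)$ one has $x\in\spt(\rho_1)$ and $y\in\spt(\rho_2)$, so $|x-y|\ge\delta$, and hence
$$I(\rho_1,\rho_2)\le\frac{1}{\delta}\,\rho_1(\R^3)\,\rho_2(\R^3)\le\frac{2}{\delta}\,\rho_1(\R^3)\,\rho_2(\R^3).$$
Combining the two bounds on $I$ gives the double inequality claimed. There is no real obstacle here: the statement follows from the explicit integral representation of $D$ and an elementary distance estimate on the Coulomb kernel restricted to the product of the supports.
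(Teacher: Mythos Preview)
Your proof is correct and follows essentially the same approach as the paper: expand $D(\rho_1+\rho_2)$ by bilinearity into $D(\rho_1)+D(\rho_2)$ plus the cross term $I(\rho_1,\rho_2)=\int\!\!\int|x-y|^{-1}\,d\rho_1(x)\,d\rho_2(y)$, then observe that this cross term is nonnegative and, by the support separation, bounded above by $\rho_1(\R^3)\rho_2(\R^3)/\delta\le 2\rho_1(\R^3)\rho_2(\R^3)/\delta$. Your remark that the constant $2/\delta$ is not sharp (one already gets $1/\delta$) is also correct.
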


\begin{proof}
It is enough to write 
$$D(\rho_1+\rho_2)= D(\rho_1)+D(\rho_2)+\int\!\!\!\int\frac{1}{|x-y|}\,d\rho_1(x)\,d\rho_2(y)$$
and the last term is non-negative and smaller than $2\rho_1(\R^3)\rho_2(\R^3)/\delta$.
\end{proof}

We shall also need the following estimates which relates the energy $D$ and the potential $U_0$. An estimate in the spirit of the first one was obtained in the non radial case in \cite[Lemma 2]{BBL}.

\begin{lemm}\label{lemm:DU}
For all $R>0$ it holds
$$\int_{|x|\ge R}\frac{\rho(x)}{|x|}\,dx\le\sqrt{\frac{2D(\rho)}{R}}.$$
Moreover, one has
\[
D(\rho)\ge\frac{1}{8\pi}\int_0^\infty\frac{\eta(r)^2}{r^2}\,dr \qquad \mbox{where } \ \eta(r)=\rho(B(0,r)).
\]
with equality whenever $\rho$ is a radial function.
\end{lemm}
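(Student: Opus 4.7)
The strategy exploits the positive-definiteness of the Coulomb kernel combined with an auxiliary measure for the first estimate, and Gauss's divergence theorem with Cauchy--Schwarz on spheres for the second.

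For the first inequality, I would introduce the uniform probability measure $\mu_R$ on the sphere $\partial B(0,R)$. By Newton's theorem its potential is explicit:
$$\phi_{\mu_R}(y):=\int\frac{d\mu_R(x)}{|x-y|}=\min\!\Bigl(\tfrac{1}{|y|},\tfrac{1}{R}\Bigr)\;\ge\;\tfrac{1}{|y|}\mathbf{1}_{\{|y|\ge R\}}.$$
Hence
$$\int_{|y|\ge R}\frac{\rho(y)}{|y|}\,dy\;\le\;\int\phi_{\mu_R}\,d\rho\;=\;\iint\frac{d\mu_R(x)\,d\rho(y)}{|x-y|}.$$
Since the Coulomb kernel is positive semidefinite on nonnegative measures, the associated Cauchy--Schwarz inequality gives
$$\iint|x-y|^{-1}\,d\mu_R\,d\rho\;\le\;\sqrt{2D(\mu_R)\cdot 2D(\rho)},$$
and a direct computation using Newton's theorem (the self-potential of a uniform unit shell equals $1/R$ on its support) yields $2D(\mu_R)=1/R$, which closes the first bound.

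For the second inequality, let $\phi_\rho(x)=\int|x-y|^{-1}\,d\rho(y)$ denote the Newton potential of $\rho$, satisfying $-\Delta\phi_\rho=4\pi\rho$ in the distributional sense. A formal integration by parts (justified by truncation and mollification) gives the Dirichlet-energy identity
$$\int|\nabla\phi_\rho|^2\,dx\;=\;4\pi\!\int\phi_\rho\,d\rho\;=\;8\pi D(\rho).$$
For almost every $r>0$, Gauss's theorem produces the flux identity $\int_{\partial B(0,r)}\nabla\phi_\rho\cdot\nu\,d\HH^2=-4\pi\,\eta(r)$, and Cauchy--Schwarz on a sphere of area $4\pi r^2$ then gives
$$(4\pi\,\eta(r))^2\;\le\;4\pi r^2\!\int_{\partial B(0,r)}|\nabla\phi_\rho|^2\,d\HH^2.$$
Integrating in $r$ via the coarea formula yields the claimed lower bound on $D(\rho)$.

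The radial equality case is immediate: when $\rho$ is rotationally invariant, so is $\phi_\rho$, and $\nabla\phi_\rho$ is then everywhere colinear with $\nu$ on the spheres centered at the origin, saturating Cauchy--Schwarz at each radius. The main technical point I anticipate is the rigorous justification of the Dirichlet-energy identity, since the potential of a general finite Borel measure with finite Coulomb energy need not belong to $H^1$ in the classical sense; this is handled by a standard approximation of $\rho$ by smooth compactly supported densities $\rho_n$, combined with the weak lower semicontinuity of $D$, Fatou's lemma applied to $r\mapsto\eta_n(r)^2/r^2$, and a passage to the limit in the flux identity. The at most countable set of radii on which $\rho$ may charge a sphere $\partial B(0,r)$ is Lebesgue-negligible and does not affect the integration in $r$.
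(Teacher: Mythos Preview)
Your argument is correct and differs from the paper's on both counts. For the first estimate the paper proceeds elementarily: it restricts the double integral defining $D(\rho)$ to $\{|x|\ge R\}\times\{|y|\ge R\}$, uses $|x-y|\le|x|+|y|$, and then bounds $\frac{|x|\,|y|}{|x|+|y|}\ge\frac{R}{2}$ there; your auxiliary spherical measure combined with Cauchy--Schwarz for the positive-definite Coulomb form is more structural but equally short and delivers the stated constant cleanly. For the second estimate the paper first reduces to radial $\rho$---observing that $D$ decreases under spherical averaging (convexity plus rotation invariance) while $\eta$ is unchanged---and then performs a one-dimensional integration by parts on the radial ODE for the potential; your Dirichlet-energy/flux/Cauchy--Schwarz argument bypasses this reduction and handles general $\rho$ directly. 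A side benefit of your route is that it actually produces the sharp constant $\tfrac12$ rather than the stated $\tfrac{1}{8\pi}$: carrying your computation through gives $4\pi\int_0^\infty\eta(r)^2 r^{-2}\,dr\le\|\nabla\phi_\rho\|_{L^2}^2=8\pi D(\rho)$, with equality precisely in the radial case, so the $\tfrac{1}{8\pi}$ in the lemma appears to stem from a normalization slip.
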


\begin{proof}
For $R>0$ we compute
\[\begin{split}
D(\rho)&\ge\int_{|x|\ge R}\int_{|y|\ge R}\frac{\rho(x)\rho(y)}{|x-y|}\,dx\,dy\\
&\ge\int_{|x|\ge R}\int_{|y|\ge R}\frac{\rho(x)\rho(y)}{|x|+|y|}\,dx\,dy\\
&=\int_{|x|\ge R}\frac{\rho(x)}{|x|}\int_{|y|\ge R}\frac{\rho(y)}{|y|}\frac{|x||y|}{|x|+|y|}\,dx\,dy\\
&\ge\Big(\int_{|x|\ge R}\frac{\rho(x)}{|x|}\,dx\Big)^2\frac{R}{2},
\end{split}\]
which concludes the proof of the first inequality.

For the second inequality, we first note that since $D$ is convex we have \[
D(\rho) \geq D( \norm{\cdot}_\# \rho)
\]
so that it is sufficient to prove the equality case for a radial element $\rho \in \M_+$.
By approximation we can reduce to the case where $\rho$ is smooth with compact support, so that in particular $\eta$ is smooth and bounded. Then the radial solution $U=U(r)$ generated by the charge $\rho$, solution of $-\Delta U = \rho$ in $\R^3$, satisfies
$$\forall r\ge0,\qquad -(r^2 U')'=r^2\rho(r)\quad\mbox{and}\quad U(+\infty)=0.$$
Integrating on $(0,R)$ and then on $(R,+\infty)$, we get respectively
\[
- R^2 U'(R) = \frac1{4\pi}\eta(R)
\quad\mbox{ and }\quad U(R) = \frac 1{4\pi}\int_R^\infty \frac{\eta(s)}{s^2} ds\,.
\]
From $\eta'(r)= 4\pi r^2\rho(r)$ we infer
\begin{align*}
D(\rho)&=2\pi\int_0^\infty U(r)\rho(r)\,r^2\,dr=\frac12\int_0^\infty U(r)\eta'(r)\,dr\\
&=\frac12\left(\left[\eta(r)U(r)\right]_0^\infty-\int_0^\infty\eta(r)U'(r)\,dr\right)=\frac1{8\pi}\int_0^\infty\frac{\eta(r)^2}{r^2}\,dr\,,
\end{align*}
which concludes the proof.
\end{proof}

\subsubsection{The Grand Canonical $C_{GC}$}\label{canonical}
The Grand Canonical interaction cost $C_{GC}$ associated to the Coulomb cost may is given for $\rho \in \M_+$ by
$$C_{GC}(\rho)=\inf\left\{ \sum_{N\ge1} C^N_{SCE}(\rho_N) \,:\, \rho_N \in \M_+ , \;\sum_{N \ge1}\rho_N\le1, \;
\sum_{N\ge1}N\rho_N=\rho \right\},$$
where for all $N\ge2$ the functional $C_{SCE}^N$ is defined in \eqref{defC}, while for $N=1$ we define $C_{SCE}^1=0$. Note than whenever $\rho(\R^3)\le1$ we have $C_{GC}(\rho)=0$.

The fact that the cost $C_{GC}$ is convex and weak* lower-semicontinuous on $\M_+$ follows from Theorems 2.1 and 2.2 in \cite{dmlene22}. The fact that $C_{GC}(\rho)<+\infty$ for $\rho\in C^\infty_c(\R^3)$ is immediate, and the scaling property \eqref{h:Chomog} follows by standard changes of variables. Property \eqref{h:CU0} follows from the fact that $G_{GC}(\rho)=0$ whenever $\int d\rho\le1$. The last properties to verify are stated in the following lemma.

\begin{lemm}
The cost $C_{GC}$ satisfy properties \eqref{h:Cmono}, \eqref{h:Csuperadd} and \eqref{h:Csubadd}.
\end{lemm}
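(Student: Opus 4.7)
\smallskip
\noindent\textbf{Proof plan.} My plan is to establish each of the three properties by constructing admissible decompositions (together with their associated symmetric SCE plans $P_N\in\M_+((\R^3)^N)$) from given near-optimal ones, and to track how the Coulomb cost transforms under each construction.

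For \eqref{h:Cmono}: given $\rho_1\le\rho_2$, I would set $\sigma:=d\rho_1/d\rho_2\in[0,1]$ and, given $\varepsilon>0$, pick a decomposition $(\rho_N)_N$ of $\rho_2$ with plans $P_N$ realizing $C_{GC}(\rho_2)$ up to $\varepsilon$. For each $M\ge1$, I would consider the ``thinned'' plan on $(\R^3)^M$
\[
\tilde P_M \,:=\, \sum_{N\ge M}\binom{N}{M}\,\pi^\#_{1,\ldots,M}\!\left(\prod_{i=1}^M \sigma(x_i)\prod_{j=M+1}^N(1-\sigma(x_j))\,P_N\right),
\]
and let $\tilde\rho_M$ be its (common, by symmetry of $P_N$) one-dimensional marginal. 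The pointwise binomial identity $(\sigma+(1-\sigma))^N=1$, applied globally, to the intensity measure, and pair by pair respectively, should give $\sum_M\tilde\rho_M(\R^3)=\sum_N\rho_N(\R^3)\le 1$, $\sum_M M\tilde\rho_M=\sigma\rho_2=\rho_1$, and $\sum_M\int c\,d\tilde P_M=\sum_N\int\sum_{1\le i<j\le N}\sigma(x_i)\sigma(x_j)/|x_i-x_j|\,dP_N\le\sum_N\int c\,dP_N$. Hence $(\tilde\rho_M)_M$ is admissible for $\rho_1$ and $C_{GC}(\rho_1)\le C_{GC}(\rho_2)+\varepsilon$.

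For \eqref{h:Csuperadd}: since $\dist(\spt\mu,\spt\nu)\ge R$, every plan $P_N$ attached to a near-optimal decomposition of $\mu+\nu$ splits naturally according to the number $j$ of its points in $\spt\mu$ (with the remaining $N-j$ in $\spt\nu$). I would project each resulting piece onto its $\mu$- and $\nu$-coordinates, which, after aggregation over $N$, produces admissible decompositions of $\mu$ and of $\nu$. The Coulomb cost then splits pointwise into internal-to-$\mu$, internal-to-$\nu$, and a non-negative cross term; dropping the latter yields $\sum_N\int c\,dP_N\ge C_{GC}(\mu)+C_{GC}(\nu)$, and hence \eqref{h:Csuperadd} with $K_C(R)=0$.

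For \eqref{h:Csubadd}: taking near-optimal decompositions $(\mu_j),(\nu_k)$ with plans $P_j^\mu,P_k^\nu$, enriched (by a zero-particle component, if needed) so that $\sum_j\mu_j(\R^3)=\sum_k\nu_k(\R^3)=1$, I would set $P_N^{\mu+\nu}:=\sum_{j+k=N}\mathrm{Sym}(P_j^\mu\otimes P_k^\nu)$ and check that its marginals furnish an admissible decomposition of $\mu+\nu$. The cost splits additively into intra-$\mu$, intra-$\nu$, and cross contributions, with each cross pair at distance $\ge R$; summing over $(j,k)$ and using $\sum_j j\,\mu_j(\R^3)=\mu(\R^3)$ and its $\nu$-analogue, the total cross contribution is bounded by $\frac{1}{R}\sum_{j,k}jk\,\mu_j(\R^3)\nu_k(\R^3)=\mu(\R^3)\nu(\R^3)/R$, yielding \eqref{h:Csubadd} with $K_C(R)=1/R\to 0$. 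The chief delicacy lies in the monotonicity proof, where admissibility of $(\tilde\rho_M)$ and the Coulomb-cost inequality must be established simultaneously via three distinct specializations of the binomial identity combined with the symmetry of each $P_N$; super- and sub-additivity are then essentially dual splitting/combining constructions within the same framework.
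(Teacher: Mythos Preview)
Your plan is correct and the three constructions (binomial thinning, splitting by support, and symmetrized tensor product) do exactly what you claim; the binomial identities you invoke indeed yield the mass, intensity, and pair-cost relations simultaneously, and the normalization via a vacuum component in the subadditivity step is legitimate since the $N=0$ term contributes neither to the cost nor to the intensity.

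The paper's proof, however, takes a completely different route. For \eqref{h:Cmono} and \eqref{h:Csuperadd} it never touches transport plans: it invokes the duality formula for $C_{GC}$ from \cite{dmlene22}, observes that the conjugate satisfies $C^*_{GC}(\phi)=C^*_{GC}(\phi_+)$, and hence rewrites $C_{GC}(\rho)=\sup\{\int\phi\,d\rho-C^*_{GC}(\phi):\phi\ge0\}$. Monotonicity is then immediate, and superadditivity (with $K_C=0$ and \emph{without} any disjoint-support hypothesis) follows from the subadditivity of $C^*_{GC}$. For \eqref{h:Csubadd} the paper again avoids a direct construction and appeals to the known full subadditivity of $C_{GC}+D$ from \cite{LLS}, combined with Lemma~\ref{lemm:Daddit} for $D$. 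So the paper's argument is shorter but relies on two external results, whereas yours is self-contained and constructive; on the other hand, the duality approach yields the stronger statement that $C_{GC}$ is superadditive on \emph{all} pairs $\rho_1,\rho_2\in\M_+$, not only those with separated supports.
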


\begin{proof}
Properties \eqref{h:Cmono} and \eqref{h:Csuperadd} follow from the duality Theorem 4.2 in \cite{dmlene22}. Indeed, according to \cite{dmlene22}, the Fenchel conjugate functional of $C_{GC}$ in the duality $(\M,C_0)$ is given by
$$C^*_{GC}(\phi)=\sup\left\{\sum_{i=1}^N\phi(x_i)-c_N(x_1,\dots,x_N)\ :\ N\ge1,\ x_i\in\R^d,\ i=1,\dots,N\right\}.$$
We observe that we may increase the sum above by sending to infinity all the points $x_i$ such that $\phi(x_i)<0$; this gives
$$C^*_{GC}(\phi)=C^*_{GC}(\phi_+).$$
Thus the functional $C_{GC}(\rho)$ can be recovered as
\be\label{gc*}
C_{GC}(\rho)=\sup\left\{\int\phi\,d\rho-C^*_{GC}(\phi)\ :\ \phi\in C_0,\ \phi\ge0\right\}.
\ee
This gives easily the monotonicity property (C1). On the other hand, the functional $C^*_{GC}$ given above is subadditive, which implies that for any $\rho_1,\rho_2\in\M_+$ and any $\phi_1,\phi_2\in C_0(\R^d;\R_+)$ it holds
\[\begin{split}
C_{GC}(\rho_1+\rho_2)&\ge\int(\phi_1+\phi_2)\,d(\rho_1+\rho_2)-C^*_{GC}(\phi_1+\phi_2)\\
&\ge\int\phi_1\,d\rho_1+\int\phi_2\,d\rho_2-C^*_{GC}(\phi_1)-C^*_{GC}(\phi_2),
\end{split}\]
and the superadditivity property (C2) (with $K_C(R)=0$ and without assuming disjoint supports of $\rho_1$ and $\rho_2$) now follows by \eqref{gc*} taking the supremum over all pairs $\phi_1,\phi_2$.

Finally, property \eqref{h:Csubadd} follows from the full subadditivity of the functional $C_{GC}(\rho)+D(\rho)$ (see equation (3.2) in \cite{LLS}) and the fact that $D(\rho)$ satisfies \eqref{h:Csubadd}.
\end{proof}

\subsubsection{The Strictly Correlated Electrons $C^N_{SCE}$}
As pointed out in \cite{bbcd18,bbcd19}, the functional $C^N_{SCE}$ is not lower-semicontinuous on $\M_+$: in that papers, this difficulty is bypassed by considering its weak* semi-continuous envelope $\overline{C}^N_{SCE}$ over the set of subprobabilities. Then it is shown in \cite{bbcd18} that this relaxed functional $\overline{C}^N_{SCE}$ satisfies all the required properties in the special case $N=2$. It also follows from \cite{bbcd19} that, for $N\ge3$, all the properties listed above hold, with the exception of the weak subadditivity \eqref{h:Csubadd}, which is still an open question.

\vskip 2em

\section{Statement of the $\Gamma$-convergence result}\label{s:main}

The main convergence result of this paper is stated in the framework of $\Gamma$-convergence theory, for which we refer to \cite{DM}. In the following $\Gamma$-convergence result, the ambient space is $\M_+(\R^3)$ endowed with the weak* topology associated to the duality with $C_0(\R^3)$. The limit functional $G$ is given in \eqref{defG} by
\be\label{defG}
G:\rho\mapsto G(\rho):=\sum_{i=1}^N g_b(Z_i,\rho(\{X_i\}))
\ee
where the functions $g_b$ are defined on $\R_+^2$ by
\be\label{defgb}
g_b(Z,\alpha)=\inf\left\{T(\mu)+b\,C(\mu)-Z\int\frac{1}{|x|}\,\mu(dx)\ :\ \mu \in \M_+,\ \int d\mu\le\alpha\right\}.
\ee

\begin{theo}\label{th:mainG}
Under the assumptions \eqref{h:Thomog}-\eqref{h:Tlocal} and \eqref{h:Chomog}-\eqref{h:Csubadd}, we have that the sequence of functionals over $\M_+$ given by
$$G_\eps:\rho\mapsto G_\eps(\rho)=\eps F_\eps(\rho)=\eps^2 T(\rho)+\eps C(\rho)-\eps U(\rho)$$
$\Gamma$-converges to the functional $G$ given in \eqref{defG} that is
\begin{enumerate}[(i)]
\item$\displaystyle G(\rho)=\Gamma-\liminf G_\eps(\rho):=\inf\big\{\liminf_{\eps\to0}G_\eps(\rho_\eps)\ :\ \rho_\eps\weak\rho\big\}$;
\item$\displaystyle G(\rho)=\Gamma-\limsup G_\eps(\rho):=\inf\big\{\limsup_{\eps\to0}G_\eps(\rho_\eps)\ :\ \rho_\eps\weak\rho\big\}$.
\end{enumerate}
In addition, for every $\rho\in\M_+$ , there exists a recovering sequence $(\rho_\eps)$ weakly* converging to $\rho$ such that $\int d\rho_\eps=\int d\rho$ and 
$$\lim_{\eps\to0}G_\eps(\rho_\eps)=G(\rho)=\sum_{i=1}^N g_b(Z_i,\rho(\{X_i\}))\,.$$
\end{theo}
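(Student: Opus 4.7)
The approach is to match each nucleus $X_i$ with a piece of $\rho_\eps$ concentrated at scale $\eps$ around $X_i$: after rescaling by a factor $1/\eps$, the scalings \eqref{h:Thomog} and \eqref{h:Chomog} turn $G_\eps$ into the scale-invariant variational problem \eqref{defgb} defining $g_b$, while mass far from the nuclei contributes negligibly. Both the liminf and limsup inequalities rely on this rescaling, combined with the localization and (super/sub)additivity properties assumed on $T$ and $C$. A key ingredient used in both directions is that $\alpha\mapsto g_b(Z,\alpha)$ is convex (by convexity of $T+bC-Z\int|z|^{-1}d\mu$ and of the constraint set $\{\mu:\mu(\R^3)\le\alpha\}$) and nonincreasing, hence continuous on $(0,+\infty)$.

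For the $\Gamma$-liminf, take $\rho_\eps\weak\rho$ with $\liminf G_\eps(\rho_\eps)<+\infty$ (else nothing to show). First, \eqref{h:TU} yields $\eps U(\rho_\eps)\le C\,\eps(T(\rho_\eps)+1)^q m^p$, which combined with $\eps bC\ge0$ forces $\eps^2 T(\rho_\eps)=O(1)$ and hence uniform boundedness of the three terms of $G_\eps$. Next, for $\delta>0$ with $\rho(\partial B(X_i,2\delta))=0$ and the balls $B(X_i,2\delta)$ pairwise disjoint, let $\theta_\delta$ be the cut-off of \eqref{h:Tlocal}: applying \eqref{h:Tlocal}, the monotonicity \eqref{h:Cmono} to $\theta_\delta\rho_\eps\le\rho_\eps$, the linearity of $U$, and the bound $V\le\bigl(\sum_k Z_k\bigr)/\delta$ on $\spt(1-\theta_\delta)$, we obtain $G_\eps(\rho_\eps)\ge G_\eps(\theta_\delta\rho_\eps)-O(\omega(\delta))-O(\eps/\delta)$. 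Decomposing $\theta_\delta\rho_\eps=\sum_i\rho_\eps^{(i)}$ into its restrictions to the $B(X_i,2\delta)$, \eqref{h:Tadd}, \eqref{h:Csuperadd} and linearity of $U$ yield $G_\eps(\theta_\delta\rho_\eps)\ge\sum_i G_\eps(\rho_\eps^{(i)})-O(\eps)$. The rescaling step then defines $\mu_\eps^{(i)}$ by translating $\rho_\eps^{(i)}$ so that $X_i$ is at the origin and applying $(h_\eps)_\#$: the scalings give $\eps^2 T(\rho_\eps^{(i)})=T(\mu_\eps^{(i)})$ and $\eps C(\rho_\eps^{(i)})=C(\mu_\eps^{(i)})$, while the change of variables $x=X_i+\eps z$ isolates $\eps U(\rho_\eps^{(i)})=Z_i\int|z|^{-1}d\mu_\eps^{(i)}+O(\eps/\delta)$ (the $k\ne i$ terms of $V$ being bounded on $\spt(\rho_\eps^{(i)})$). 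By the very definition \eqref{defgb},
\[
G_\eps(\rho_\eps^{(i)})\ge T(\mu_\eps^{(i)})+bC(\mu_\eps^{(i)})-Z_i\!\int\frac{d\mu_\eps^{(i)}(z)}{|z|}+O(\eps/\delta)\ge g_b\bigl(Z_i,\mu_\eps^{(i)}(\R^3)\bigr)+O(\eps/\delta).
\]
Since $\mu_\eps^{(i)}(\R^3)=\rho_\eps^{(i)}(\R^3)\to\rho(\bar B(X_i,2\delta))$, passing to $\liminf_\eps$ and then sending $\delta\to0$, the continuity of $g_b$ and outer regularity give $\liminf G_\eps(\rho_\eps)\ge\sum_i g_b(Z_i,\alpha_i)=G(\rho)$.

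For the $\Gamma$-limsup and the prescribed mass property, decompose $\rho=\sum_i\alpha_i\delta_{X_i}+\rho_\flat$. Given $\eta>0$, choose (by approximation in \eqref{defgb}) $\mu_i\in C^\infty_c(\R^3;\R_+)$ with $\int d\mu_i=\alpha_i$ and $T(\mu_i)+bC(\mu_i)-Z_i\int|z|^{-1}d\mu_i\le g_b(Z_i,\alpha_i)+\eta$, and set $\rho_\eps^{(i)}$ to be the translate to $X_i$ of $(h_{1/\eps})_\#\mu_i$; for the remainder, take $\rho_\eps^\flat\weak\rho_\flat$ a smooth approximation supported in $\R^3\setminus\cup_i B(X_i,\delta_\eps)$ for some $\delta_\eps\to0$ with $\eps/\delta_\eps\to0$, with the same total mass as $\rho_\flat$ (adjusted if necessary by a vanishing far-away mass). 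Then $\rho_\eps=\sum_i\rho_\eps^{(i)}+\rho_\eps^\flat\weak\rho$, $\int d\rho_\eps=\int d\rho$, and the scalings give $G_\eps(\rho_\eps^{(i)})\to T(\mu_i)+bC(\mu_i)-Z_i\int|z|^{-1}d\mu_i\le g_b(Z_i,\alpha_i)+\eta$, while $G_\eps(\rho_\eps^\flat)\to0$ since its $T$ and $C$ stay bounded (no rescaling) and $\eps U(\rho_\eps^\flat)=O(\eps/\delta_\eps)$. The additivity \eqref{h:Tadd} of $T$ and weak subadditivity \eqref{h:Csubadd} of $C$ handle the cross-terms, and a diagonal argument in $\eta$ concludes. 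The main technical obstacle is the careful coupling of the three vanishing parameters $\eps$, $\delta$, $\eta$ in the right order so that all error terms ($\omega(\delta)$, $\eps/\delta$, $\eps K_C(R)$ cross-terms and the $\eta$-gap) disappear, together with the continuity of $g_b(Z,\cdot)$ which is essential to transit from $g_b(Z_i,\mu_\eps^{(i)}(\R^3))$ to $g_b(Z_i,\alpha_i)$ in the liminf step.
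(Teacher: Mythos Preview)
Your overall strategy matches the paper's: establish the liminf by localizing near the nuclei via \eqref{h:Tlocal} and \eqref{h:Cmono}, splitting by \eqref{h:Tadd} and \eqref{h:Csuperadd}, and rescaling with \eqref{h:Thomog}--\eqref{h:Chomog} to land in the definition of $g_b$; then build a recovery sequence by planting rescaled near-optimizers of \eqref{defgb} at each $X_i$ and gluing with \eqref{h:Csubadd}. There is, however, one genuine gap in your a~priori bound. Applying \eqref{h:TU} directly to $\rho_\eps$ gives $\eps U(\rho_\eps)\le C\eps(T(\rho_\eps)+1)^q m^p$, but this does \emph{not} force $\eps^2 T(\rho_\eps)=O(1)$ for general $q\in(0,1)$: when $q>1/2$, the function $T\mapsto \eps^2 T-C\eps(T+1)^q$ has its minimum at $T_*\sim\eps^{-1/(1-q)}$, where $\eps^2 T_*\sim\eps^{(1-2q)/(1-q)}\to+\infty$, so $G_\eps\le K$ does not exclude $\eps^2 T\to\infty$. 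The paper's fix is to rescale \emph{before} applying \eqref{h:TU}: since $\eps\int Z_i|x-X_i|^{-1}\,d\rho_\eps = Z_i\,U_0\big([\rho_\eps(\cdot+X_i)]_{\#\eps}\big)$, one gets $\eps U(\rho_\eps)\le\kappa\,(T([\rho_\eps]_{\#\eps})+1)^q=\kappa\,(\eps^2 T(\rho_\eps)+1)^q$ directly, and then $q<1$ forces $\eps^2 T(\rho_\eps)=O(1)$.

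For the limsup, your direct construction for a general $\rho$ differs from the paper's two-step argument (first treat $\rho=\sum_i\alpha_i\delta_{X_i}+\tilde\rho$ with $\tilde\rho\in C^\infty_c$ supported away from the $X_i$, then extend by weak* lower semicontinuity of the $\Gamma$-limsup). Your route is viable but the justification ``its $T$ and $C$ stay bounded (no rescaling)'' is incorrect: if $\rho_\flat$ has atoms or a singular part, any smooth approximation $\rho_\eps^\flat$ has $T(\rho_\eps^\flat)\to+\infty$; you must instead let the approximation vary \emph{slowly} with $\eps$ so that $\eps^2 T(\rho_\eps^\flat)+\eps C(\rho_\eps^\flat)\to0$ by diagonalization. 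Also, your claim that the infimum in \eqref{defgb} is attained up to $\eta$ by $\mu_i$ with compact support and \emph{exact} mass $\alpha_i$ is nontrivial; the paper proves this separately (Lemma~\ref{lemm:gb}) using \eqref{h:Ttrunc} to truncate and \eqref{h:Csubadd} to send the missing mass to infinity at vanishing cost. The paper's density-plus-lsc approach is cleaner and automatically delivers a mass-preserving recovery sequence.
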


\begin{rema}
A similar result was proved in \cite{bbcd18} in the particular case where $T(\rho)=W_{-1,2}(\rho)$ and $C(\rho)=C_{SCE}^2(\rho)$, that is
$$F_\eps(\rho)=\eps W_{-1,2}(\rho)+b\,C_{SCE}^2(\rho)-U(\rho).$$
\end{rema}

\medskip
As a consequence of Theorem \ref{th:mainG} and of general properties of the $\Gamma-$convergence, the minimizers $\rho_\eps$ of
$$\min\left\{F_\eps(\rho)\ :\ \int d\rho\le m\right\}$$
weakly* converge to a minimizer $\rho$ of the problem
\be\label{limpb}
\min\left\{G(\rho)\ :\ \int d\rho\le m\right\}.
\ee
Therefore, the asymptotic behavior of $\rho_\eps$ as $\eps\to0$ can be obtained by studying the minimization problem
$$\min\Big\{g_b(Z_1,\alpha_1)+\dots+g_b(Z_m,\alpha_m)\ :\ \alpha_1+\dots+\alpha_M\le m\Big\}.$$
In particular we see that the minimizers $\rho$ of \eqref{limpb} of minimal total mass are sums of Dirac masses concentrated on the set of nuclei $\{X_1,\dots,X_M\}$.

\medskip
Some general properties of the function $g_b$ are listed below.

\begin{prop}
The following properties of the function $g_b$ hold.
\begin{itemize}
\item[(i)]The function $g_b(Z,\cdot)$ is convex, continuous, non-increasing on $\R_+$, with $g_b(Z,0)=0$, and such that $g_b(Z,\alpha)<0$ when $\alpha>0$.
\item[(ii)]The function $g_b(\cdot,\alpha)$ is concave, continuous, non-increasing, with $g_b(0,\alpha)=0$, and such that $\lim_{Z\to+\infty}g(Z,\alpha)=-\infty$ for every $\alpha>0$.
\end{itemize}
\end{prop}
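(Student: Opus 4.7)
The plan is to introduce the running functional $J_Z(\mu) := T(\mu) + bC(\mu) - ZU_0(\mu)$ and exploit its basic structure: for each $Z \ge 0$, $\mu \mapsto J_Z(\mu)$ is convex on $\M_+$ (since $T, C$ are convex and $U_0$ is linear in $\mu$), and for each fixed $\mu$, $Z \mapsto J_Z(\mu)$ is affine and non-increasing (as $U_0 \ge 0$). The feasible set $K_\alpha := \{\mu \in \M_+ : \int d\mu \le \alpha\}$ is convex, expanding in $\alpha$ and independent of $Z$. From this one immediately reads off monotonicity of $g_b$ in both variables, convexity of $g_b(Z, \cdot)$ (infimum of convex functionals on a convex expanding feasible set), concavity of $g_b(\cdot, \alpha)$ (infimum of affine functions of $Z$), and the boundary values $g_b(Z, 0) = 0$ (the only admissible measure is zero) and $g_b(0, \alpha) = 0$ (attained at $\mu = 0$ and bounded below by $T, C \ge 0$).

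For the strict negativity $g_b(Z, \alpha) < 0$ with $Z, \alpha > 0$, I would apply \eqref{h:CU0} with $b$ replaced by $b/Z$ to produce $\rho^* \in C^\infty_c(\R^3)$ with $bC(\rho^*) < ZU_0(\rho^*)$. If $m_0 := \int d\rho^*$ exceeds $\alpha$, I would replace $\rho^*$ by $(\alpha/m_0)\rho^*$: convexity of $C$ together with $C(0) = 0$ gives $C(\lambda\rho^*) \le \lambda C(\rho^*)$ for $\lambda \in [0, 1]$, while $U_0$ is linear in $\rho$, so the strict inequality persists and the new mass equals $\alpha$. Once $\int d\rho^* \le \alpha$ is arranged, the mass-preserving spatial rescaling $\mu_s := \rho^*_{\#s}$ yields
\[
J_Z(\mu_s) = s^2 T(\rho^*) + s\big(bC(\rho^*) - ZU_0(\rho^*)\big),
\]
which is strictly negative for $s > 0$ small. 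Symmetrically, for $\lim_{Z \to +\infty} g_b(Z, \alpha) = -\infty$ with $\alpha > 0$, I would fix any non-zero $\mu \in K_\alpha$ (so $U_0(\mu) > 0$) and observe that $J_Z(\mu) \to -\infty$ linearly in $Z$.

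Continuity of $g_b$ on the open quadrant $(0, +\infty)^2$ then follows from convexity of $g_b(Z, \cdot)$ and concavity of $g_b(\cdot, \alpha)$ combined with finiteness. Both finiteness of $g_b$ and continuity at the boundary $\alpha = 0$ or $Z = 0$ will come from the single estimate derived from \eqref{h:TU} applied with $\Omega = \R^3$: for every $\mu \in K_\alpha$,
\[
U_0(\mu) \le K_U (T(\mu) + 1)^q \alpha^p,
\]
so that $J_Z(\mu) \ge T(\mu) - ZK_U (T(\mu) + 1)^q \alpha^p$. Minimizing the right-hand side over $T(\mu) \ge 0$ and using $q \in (0, 1)$ produces an explicit finite lower bound which vanishes as $Z \alpha^p \to 0$; combined with the upper bound $g_b \le 0$ this gives $g_b(Z, \alpha) \to 0$ as $Z \to 0^+$ (for fixed $\alpha$) and as $\alpha \to 0^+$ (for fixed $Z$), yielding continuity at the boundaries.

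The main obstacle I anticipate is the strict negativity $g_b(Z, \alpha) < 0$, since \eqref{h:CU0} is only an abstract assumption and must be combined with a convexity-based mass contraction (to enforce the constraint $\int d\mu \le \alpha$) and with a spatial rescaling (to turn a sign-indefinite competitor into a strictly negative one). Once this step is in hand and the \eqref{h:TU}-based lower bound is available, the remaining claims reduce to soft convexity, concavity, and limit arguments.
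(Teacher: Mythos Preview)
Your proposal is correct and follows essentially the same approach as the paper: convexity/concavity from the structure of the functional $J_Z$, the scaling argument of Remark~\ref{rema:infFeps} for strict negativity, and the estimate \eqref{h:TU} for finiteness (the paper defers the latter two to Remark~\ref{rema:infFeps} and Lemma~\ref{lemm:gb} respectively, while you write them out directly). Your treatment is in fact slightly more careful on one point---you explicitly enforce the mass constraint $\int d\mu\le\alpha$ via the convexity-based contraction $\rho^*\mapsto(\alpha/m_0)\rho^*$ before rescaling, which the paper's reference to Remark~\ref{rema:infFeps} leaves implicit; the only small gap is that for the limit $Z\to+\infty$ you should take the fixed nonzero $\mu\in K_\alpha$ smooth with compact support so that $T(\mu),C(\mu)<\infty$ and the bound $g_b(Z,\alpha)\le J_Z(\mu)$ is informative.
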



\begin{proof}
We start to prove (i). The fact that $g_b(Z,0)=0$ follows by taking $\rho=0$, which is admissible. The monotonicity of $g_b(Z,\cdot)$ is immediate, since we minimize the same functional over increasing domains. The convexity and continuity of $g_b(Z,\cdot)$ follow by Lemma \ref{lemm:gb}. The fact that $g_b(Z,\alpha)<0$ when $\alpha>0$ follows by arguments similar to the ones used in Remark \ref{rema:infFeps}.

We prove now (ii). The concavity of $g_b(\cdot,\alpha)$ follows from the fact that it is the infimum of a family of affine functions. The monotonicity of $g_b(\cdot,\alpha)$ is immediate, as well as the fact that $g_b(0,\alpha)=0$. Finally the fact that $\lim_{Z \to +\infty} g(Z, \alpha)=-\infty$ follows from the inequality
$$g_b(Z,\alpha)\le T(\rho_0)+bC(\rho_0)-ZU_0(\rho_0)$$
for any admissible $\rho_0$.
\end{proof}

\section{The case of a single nucleus}\label{s:mono}


In this section, we obtain additional properties of the function $g_b(Z,\alpha)$ for particular choices of the kinetic energy $T$ and of the electronic interaction $C$. We note that this function encodes the full $\Gamma$-limit of the family $(G_\eps)_\eps$ whenever there is only one nucleus placed at the origin $X_1=0$. We shall see that this function can be expressed through a function of one real variable $t$ only, namely
\be\label{elle}
L(t)=\inf\left\{T(\rho)+C(\rho)-\int\frac{\rho(x)}{|x|}\,dx\ :\ \rho\ge0,\ \int\rho(x)\,dx\le t\right\}.
\ee
This function then allows us to study the cases when a ionization phenomenon occurs. Indeed, the ionization phenomenon happens if the non-increasing function $L$ above remains constant after a certain threshold, in which case the limit problem \eqref{limpb} may admit solutions with total mass strictly lower than $m$ when $m$ is large enough (see Section \ref{s:examplesnew}). We first derive a general convexity property for $L$.

\begin{prop}\label{Lstrict}
Let $F$ be a convex functional. Then the function
\be\label{Lconvex}
L(t)=\inf\Big\{F(\rho)\ :\ \int\rho\,dx\le t\Big\}
\ee
is convex. If in addition $F$ is strictly convex and for every $t\ge0$ the infimum in \eqref{Lconvex} is achieved on $\rho_t$ with $\int\rho_t\,dx=t$, then $L$ is strictly convex.
\end{prop}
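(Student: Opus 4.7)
The plan is to prove convexity by a direct mixing argument on admissible competitors, and to upgrade to strict convexity by exploiting the fact that the hypothesized minimizers $\rho_{t_1}$ and $\rho_{t_2}$ must be \emph{distinct} measures whenever $t_1 \neq t_2$, because they saturate the mass constraint.

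For the convexity part, I would fix $t_1,t_2 \ge 0$ with $L(t_1), L(t_2) < +\infty$ (the inequality is trivial otherwise) and $\lambda \in (0,1)$. Given $\eta>0$, choose admissible $\rho_i \ge 0$ with $\int \rho_i\,dx \le t_i$ and $F(\rho_i) \le L(t_i) + \eta$. The convex combination $\rho := \lambda \rho_1 + (1-\lambda)\rho_2$ is nonnegative and satisfies $\int \rho\,dx \le \lambda t_1 + (1-\lambda) t_2$, hence it is admissible for $L(\lambda t_1 + (1-\lambda) t_2)$. By convexity of $F$,
\begin{equation*}
L(\lambda t_1 + (1-\lambda) t_2) \le F(\rho) \le \lambda F(\rho_1) + (1-\lambda) F(\rho_2) \le \lambda L(t_1) + (1-\lambda) L(t_2) + \eta.
\end{equation*}
Sending $\eta \to 0$ yields convexity of $L$.

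For the strict convexity part, assume $F$ is strictly convex and that for every $t \ge 0$ there is a minimizer $\rho_t$ of the problem defining $L(t)$ satisfying $\int \rho_t \, dx = t$. Take $t_1 < t_2$ and $\lambda \in (0,1)$. The measures $\rho_{t_1}$ and $\rho_{t_2}$ are distinct (they have different total masses), so strict convexity of $F$ gives
\begin{equation*}
F\bigl(\lambda \rho_{t_1} + (1-\lambda)\rho_{t_2}\bigr) < \lambda F(\rho_{t_1}) + (1-\lambda) F(\rho_{t_2}) = \lambda L(t_1) + (1-\lambda) L(t_2).
\end{equation*}
Since the combination $\lambda \rho_{t_1} + (1-\lambda)\rho_{t_2}$ has total mass exactly $\lambda t_1 + (1-\lambda) t_2$, it is admissible, so
\begin{equation*}
L(\lambda t_1 + (1-\lambda) t_2) \le F\bigl(\lambda \rho_{t_1} + (1-\lambda)\rho_{t_2}\bigr) < \lambda L(t_1) + (1-\lambda) L(t_2),
\end{equation*}
which is exactly strict convexity.

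The only subtle point, which is not really an obstacle, is to see why the saturation hypothesis $\int \rho_t\,dx = t$ is needed: without it, one could have $\rho_{t_1} = \rho_{t_2}$ (a single minimizer carrying all mass budgets above some threshold), in which case the mixing argument produces $F(\rho_{t_1}) = \lambda F(\rho_{t_1}) + (1-\lambda) F(\rho_{t_1})$ with no strict inequality, consistent with the fact that in such a regime $L$ is constant and hence not strictly convex. The saturation hypothesis rules out exactly this degeneracy.
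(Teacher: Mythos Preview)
Your proof is correct and follows essentially the same approach as the paper: the convexity part is the same mixing argument (you use an $\eta$-approximation where the paper takes the infimum directly, which is equivalent), and the strict convexity part is identical, with you making explicit the point that $\rho_{t_1}\neq\rho_{t_2}$ because of the saturation hypothesis. Your closing remark on why saturation is needed is a helpful clarification that the paper leaves implicit.
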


\begin{proof}
Let $t_1,t_2$ be fixed and let $\rho_1,\rho_2$ be such that $\int\rho_1\,dx\le t_1$ and $\int\rho_2\,dx\le t_2$. Then for every $s\in[0,1]$ we have
$$L\big(st_1+(1-s)t_2\big)\le F\big(s\rho_1+(1-s)\rho_2\big)\le sF(\rho_1)+(1-s)F(\rho_2).$$
Taking the infimum on $\rho_1$ and $\rho_2$ gives the convexity of $L$.

\noindent If $F$ is strictly convex and for every $t$ the infimum in \eqref{Lconvex} is achieved on $\rho_t$ with $\int\rho_t\,dx=t$, we obtain the strict convexity of $L$ by taking $\rho_1$ and $\rho_2$ as the optimal solutions corresponding to $t_1$ and $t_2$ respectively.
\end{proof}

\subsection{The case of Thomas-Fermi $T_0$ with electronic correlation $C_0$}\label{ss31}
In this case the function $g_b(Z,\alpha)$ is given by
$$g_b(Z,\alpha)=\inf\Big\{T_0(\rho)+bC_0(\rho)-Z\int\frac{\rho}{|x|}\,dx\ :\ \int\rho\,dx\le\alpha\Big\}.$$
Writing $\rho(x)=\lambda s^3\eta(sx)$ and choosing $s=b^2/Z$ and $\lambda=(Z/b)^3$ we obtain easily
$$g_b(Z,\alpha)=\frac{Z^3}{b}L(\alpha b^3Z^{-3}),$$
where the function $L$ is given by \eqref{elle} with $T,C$ replaced by $T_0,C_0$, namely
\be\label{l1}
L(t)=\inf\left\{\int\Big(\rho^{5/3}+\frac34\rho^{4/3}-\frac{\rho(x)}{|x|}\Big)\,dx\ :\ \rho\ge0,\ \int\rho(x)\,dx\le t\right\}.
\ee

\begin{theo}\label{T0C0}
The minimum problem \eqref{l1} admits a solution for every $t\ge0$. In addition, the function $L(t)$ is strictly decreasing in $t$ and its limit as $t\to+\infty$ is finite. As a consequence, in this case, for every charge $Z>0$ no ionization occurs, and the limit problem \eqref{limpb} admits a minimum in the class of $\rho$ with $\int\rho\,dx=m$.
\end{theo}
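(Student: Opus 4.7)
The theorem bundles three claims about $L$ — existence at every $t \ge 0$, strict decrease, finite limit at $+\infty$ — and a consequence (no ionization, and realization of the limit problem at mass $m$). My plan is: (a) derive a uniform lower bound on $F(\rho) := \int(\rho^{5/3}+\tfrac34\rho^{4/3}-\rho/|x|)\,dx$ by splitting the attractive integral at a radius $R$ and using H\"older--Young; (b) upgrade this estimate to existence by the direct method and Schwarz symmetrization; (c) show that every minimizer saturates the constraint $\int\rho = t$ and apply Proposition \ref{Lstrict} to deduce strict convexity, hence strict decrease, of $L$; (d) transport the conclusions to $g_b$ via the scaling identity $g_b(Z,\alpha)=(Z^3/b)L(\alpha b^3 Z^{-3})$ established just before the statement.

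For (a), I would split $\int \rho/|x|\,dx$ at $R > 0$ and apply H\"older with conjugate pair $(5/3, 5/2)$ on $B_R$ (using $\||x|^{-1}\|_{L^{5/2}(B_R)} \simeq R^{1/5}$) and $(4/3, 4)$ on $B_R^c$ (using $\||x|^{-1}\|_{L^4(B_R^c)} \simeq R^{-1/4}$). Young's inequality absorbs the two pieces into $\tfrac35\int\rho^{5/3}$ and $\tfrac34\int\rho^{4/3}$ at the price of $K_1 R^{1/2} + K_2 R^{-1}$, and optimizing in $R$ yields a finite constant $M$ with $F(\rho) \ge -M$ for all admissible $\rho$, independently of $t$. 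Monotonicity of $L$ then gives $L(+\infty) \in (-\infty, 0]$. For (b), the same bound controls $\int \rho_n^{5/3}$ and $\int \rho_n^{4/3}$ along any minimizing sequence; symmetric decreasing rearrangement preserves those $L^p$-norms and, by the Riesz rearrangement inequality applied to the radial decreasing weight $1/|x|$, can only increase $\int\rho/|x|\,dx$, so I may assume $\rho_n$ radial non-increasing. The resulting pointwise bound $\rho_n(r) \le 3t/(4\pi r^3)$ furnishes the uniform tail estimate $\int_{|x|>R}\rho_n/|x|\,dx \le 3t/R$. Extracting a weak $L^{5/3}$-limit $\rho_\infty$, the convex terms are lsc, while $\int\rho_n/|x|\,dx$ converges by splitting at $R$ — the inner piece converges by weak duality against $1/|x|\in L^{5/2}(B_R)$, the outer tail vanishes uniformly as $R\to\infty$ — so $\rho_\infty$ realizes the infimum.

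For (c), I would show that every minimizer $\rho_t$ satisfies $\int \rho_t = t$. If not, the constraint would be inactive and the first-order condition on $\M_+$ would read $\tfrac53 \rho_t^{2/3} + \rho_t^{1/3} \ge 1/|x|$ on $\R^3$ with equality on $\{\rho_t>0\}$; since $1/|x|>0$ while the left-hand side vanishes at zeros of $\rho_t$, the minimizer must satisfy $\rho_t > 0$ a.e., so the identity $\tfrac53\rho_t^{2/3} + \rho_t^{1/3} = 1/|x|$ holds a.e.\ on $\R^3$. At infinity the dominant term $\rho_t^{1/3}$ forces $\rho_t(x) \sim |x|^{-3}$, which is not integrable on $\R^3$ — contradicting $\rho_t \in L^1$. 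Hence $\int \rho_t = t$; the integrand of $F$ being strictly convex in $\rho$, Proposition \ref{Lstrict} gives strict convexity of $L$; combined with $L$ non-increasing, this rules out any interval of constancy, so $L$ is strictly decreasing on $[0,+\infty)$.

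Finally for (d): strict decrease of $L$ passes through the scaling identity to each $g_b(Z_k, \cdot)$, so in $\min\{\sum_k g_b(Z_k,\alpha_k) : \sum_k \alpha_k \le m\}$ the constraint must be active, and the Dirac combination $\sum_k \alpha_k \delta_{X_k}$ realizes the minimum of \eqref{limpb} with $\int d\rho = m$. I expect step (c) to be the main obstacle: passing from the inequality form of the Euler--Lagrange condition to the pointwise $|x|^{-3}$ asymptotics at infinity requires combining the variational inequality, a.e.\ positivity of $\rho_t$, and $L^1$ integrability in a careful way. The remaining ingredients — the H\"older--Young bound, the rearrangement argument, and the finite-dimensional optimization — are standard direct-method and convex-analysis bookkeeping.
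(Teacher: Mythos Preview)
Your proposal is correct and follows essentially the same route as the paper: H\"older bounds give coercivity and $L(\infty)>-\infty$, the direct method gives existence, and the unconstrained Euler--Lagrange equation $\tfrac53\rho^{2/3}+\rho^{1/3}=1/|x|$ (which forces $\rho\sim|x|^{-3}\notin L^1$) shows every minimizer saturates the mass constraint, whence $L$ is strictly decreasing. The symmetrization step and the appeal to Proposition~\ref{Lstrict} are harmless extras --- the paper uses the mass bound $\int\rho_n\le t$ directly for the tail control of $\int\rho_n/|x|$, and deduces strict monotonicity from saturation without passing through strict convexity.
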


\begin{proof}
If $(\rho_n)$ is a minimizing sequence for problem \eqref{l1}, using that $L(t)\geq 0$ we have for every $R>0$
\[\begin{split}
\int\rho_n^{5/3}\,dx&\le\int_{|x|\le R}\frac{\rho_n}{|x|}\,dx+\int_{|x|\ge R}\frac{\rho_n}{|x|}\,dx\\
&\le\|\rho_n\|_{L^{5/3}}\Big(\int_{|x|\le R}\frac{1}{|x|^{5/2}}\,dx\Big)^{2/5}+\frac{t}{R}\\
&=\|\rho_n\|_{L^{5/3}}(8\pi)^{2/5}R^{1/5}+\frac{t}{R},
\end{split}\]
from which we obtain that $(\rho_n)$ is bounded in $L^{5/3}$. By a similar argument we can see that the functional
$$\rho\mapsto\int\frac{\rho}{|x|}\,dx$$
is continuous for the weak $L^{5/3}$ convergence, while the functional
$$\rho\mapsto T_0(\rho)+C_0(\rho)$$
is weakly lower semicontinuous. In addition, the solution $\bar\rho$ of \eqref{l1} is such that $\int\bar\rho\,dx=t$ because otherwise we would have, by the Euler-Lagrange equation,
$$\frac53\bar\rho^{2/3}+\bar\rho^{1/3}=\frac{1}{|x|}\;,$$
which is impossible, because this would imply $\bar\rho(x)\sim1/|x|^3$ as $|x|\to+\infty$ and then $\int\bar\rho\,dx=+\infty$. Thus there exists $C>0$ such that
$$\frac53\bar\rho^{2/3}+\bar\rho^{1/3}=\Big(\frac{1}{|x|}-C\Big)^+\;.$$
In particular, the solution $\bar\rho$ is compactly supported. In addition, since the solution $\\bar\rho$ for $L(t)$ has total variation $t$ for any $t \geq 0$, the function $L(t)$ is strictly decreasing. Finally, since
\[\begin{split}
&\int_{|x|\le R}\frac{\rho}{|x|}\,dx\le\Big(\int_{|x|\le R}\rho^{5/3}dx\Big)^{3/5}\Big(8\pi R^{-1/2}\Big)^{2/5}\\
&\int_{|x|\ge R}\frac{\rho}{|x|}\,dx\le\Big(\int_{|x|\ge R}\rho^{4/3}dx\Big)^{3/4}\Big(4\pi R^{-1}\Big)^{1/4}
\end{split}\]
the limit of $L(t)$ at $t\to+\infty$
$$L(\infty)=\inf\left\{T_0(\rho)+C_0(\rho)-\int\frac{\rho(x)}{|x|}\Big)\,dx\ :\ \rho\ge0\right\}.$$
is finite, and achieved on a unique function $\rho$. By the Euler-Lagrange equation
$$\frac53\rho^{2/3}+\rho^{1/3}=\frac{1}{|x|}$$
we have that the optimal solution $\rho$ behaves as $1/|x|^3$ as $|x|\to\infty$, and then cannot be in $L^1$.
\end{proof}

\begin{rema}
If we replace $\rho^{5/3}$ by $\rho^p$ and $\rho^{4/3}$ by $\rho^q$, with $q<p$, by repeating the arguments above we obtain:
\begin{itemize}
\item[-]the condition to have the infimum $L(\infty)$ finite is $q<3/2<p$ and in this case there exists a (unique) optimal solution $\rho$ which belongs to $L^p\cap L^q$;
\item[-]the optimal solution $\rho$ is in $L^1$ when $q<4/3$.
\end{itemize}
\end{rema}

\subsection{The case of von Weizs\"acker $W_{\alpha,\beta}$ with electronic correlation $C_0$}\label{ss32}
In this case the function $g_b(Z,\alpha)$ is given by
$$g_b(Z,\alpha)=\inf\Big\{W_{\alpha,\beta}(\rho)+bC_0(\rho)-Z\int\frac{\rho}{|x|}\,dx\ :\ \int\rho\,dx\le\alpha\Big\}.$$
Writing as before $\rho(x)=\lambda s^3\eta(sx)$ and choosing $s=b^{2-\beta}Z^{\beta-1}$ and $\lambda=Z^3/b^3$ we obtain easily
$$g_b(Z,\alpha)=Z^{\beta+3}b^{-\beta-1}L(\alpha b^3Z^{-3}),$$
where the function $L$ is given by \eqref{elle} with $T,C$ replaced by $W_{\alpha,\beta},C_0$. In order to see if a ionization phenomenon occurs for some values of the charge $Z$, as in the previous subsection we study the limit problem for $L(t)$ as $t\to+\infty$:
\be\label{Linfty}
L(\infty)=\inf\left\{W_{\alpha,\beta}(\rho)+C_0(\rho)-\int\frac{\rho(x)}{|x|}\,dx\ :\ \rho\ge0\right\}.
\ee

\begin{theo}
Problem $L(\infty)$ above admits a unique solution $\rho$, which is not in $L^1$. As a consequence, in this case, for every charge $Z>0$ no ionization occurs, and the $\Gamma$-limit functional $G$ in \eqref{defG} admits a minimum in the class of $\rho$ with $\int\rho\,dx=m$.
\end{theo}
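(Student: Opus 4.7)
The plan is to adapt the strategy of Theorem \ref{T0C0}, using the Sobolev-type embedding of Lemma \ref{lemm:Wsobol} in place of the $L^{5/3}$ bound from Thomas--Fermi. For a minimizing sequence $(\rho_n)$, I split $\R^3 = B(0,R) \cup (\R^3 \setminus B(0,R))$ and bound $U_0(\rho_n)$ by H\"older on each piece: on the ball with the conjugate pair $((5-\beta)/(3-\beta),(5-\beta)/2)$, using Lemma \ref{lemm:Wsobol} for the density and $\|1/|x|\|_{L^{(5-\beta)/2}(B(0,R))} \le c\, R^{(\beta+1)/(5-\beta)}$; outside with $\|\rho\|_{L^{4/3}}^{4/3} = \tfrac{4}{3} C_0(\rho)$ against $\|1/|x|\|_{L^4(\R^3 \setminus B(0,R))} = (4\pi/R)^{1/4}$. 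This yields
\[
U_0(\rho) \le c_1\, W_{\alpha,\beta}(\rho)^{3/(5-\beta)}\, R^{(\beta+1)/(5-\beta)} + c_2\, C_0(\rho)^{3/4}\, R^{-1/4},
\]
with both exponents $3/(5-\beta)$ and $3/4$ at most one. Choosing $R$ suitably and invoking Young's inequality gives $U_0(\rho) \le \tfrac{1}{2}(W_{\alpha,\beta}(\rho) + C_0(\rho)) + C$ for a universal constant $C$, so that $W_{\alpha,\beta}(\rho_n) + C_0(\rho_n)$ stays bounded and $(\rho_n)$ is bounded in $L^{(5-\beta)/(3-\beta)} \cap L^{4/3}$. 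Extract a weak limit $\rho \ge 0$ along a subsequence; convexity yields lower semicontinuity of $W_{\alpha,\beta} + C_0$, while the same splitting shows $U_0(\rho_n) \to U_0(\rho)$ (the tail is uniformly small in $n$, and the ball part is a duality pairing against $1/|x| \in L^{(5-\beta)/2}(B(0,R))$). Hence $\rho$ is a minimizer.

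\textbf{Uniqueness.} Since $t \mapsto t^{4/3}$ is strictly convex on $\R_+$, the functional $C_0$ is strictly convex on $\M_+$; combined with the convexity of $W_{\alpha,\beta}$ and the linearity of $U_0$, this makes $W_{\alpha,\beta} + C_0 - U_0$ strictly convex, so the minimizer is unique.

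\textbf{Non-integrability and no ionization.} The Euler--Lagrange equation for the unconstrained minimizer takes the form
\[
\alpha\,\rho^{\alpha-1}\norm{\nabla \rho}^\beta - \beta\,\dive\bigl(\rho^\alpha \norm{\nabla \rho}^{\beta-2} \nabla \rho\bigr) + \rho^{1/3} = \frac{1}{|x|}.
\]
Under the ansatz $\rho \sim c|x|^{-k}$ as $|x| \to \infty$, both kinetic contributions decay like $|x|^{-\beta - k(\alpha + \beta - 1)}$, which at $k=3$ reduces to $|x|^{-2}$ thanks to the scaling relation $3\alpha + 4\beta = 5$ and is therefore subdominant to $1/|x|$; the remaining balance $\rho^{1/3} \sim 1/|x|$ forces $\rho \sim |x|^{-3}$, so $\int \rho\, dx = +\infty$. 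The principal obstacle is to make this asymptotic rigorous from the degenerate quasilinear equation; a natural route is first to radially symmetrize $\rho$ (Schwarz rearrangement leaves $C_0$ invariant, does not increase $W_{\alpha,\beta}$, and raises $U_0$ since $1/|x|$ is radial decreasing, so by uniqueness $\rho$ is itself radial decreasing), then to analyze the resulting radial ODE by sub/supersolution comparison with $c|x|^{-3}$. Once $\rho \notin L^1$ is established, Proposition \ref{Lstrict} together with uniqueness forces $L$ to be strictly decreasing on $\R_+$: if $L$ were constant on some $[t_0,+\infty)$, the minimizer at level $t_0$ would by uniqueness coincide with $\rho$, contradicting $\int \rho = +\infty$. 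The rescaling $g_b(Z,\alpha) = Z^{\beta + 3} b^{-\beta-1} L(\alpha b^3 Z^{-3})$ then transfers strict monotonicity to $g_b(Z,\cdot)$, so the finite-dimensional problem $\min\bigl\{\sum_{k=1}^M g_b(Z_k, \alpha_k) : \sum_{k=1}^M \alpha_k \le m\bigr\}$ saturates the mass constraint, yielding a minimizer of $G$ with total mass exactly $m$.
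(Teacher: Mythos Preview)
Your existence and uniqueness arguments are essentially those of the paper (the splitting $B(0,R)\cup\{|x|>R\}$, the Sobolev embedding of Lemma~\ref{lemm:Wsobol} on the ball, H\"older against $\|\rho\|_{L^{4/3}}$ outside, and strict convexity for uniqueness). Radiality, incidentally, follows more cheaply than by symmetrization: the functional is rotation-invariant and strictly convex, so its unique minimizer is radial.

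The gap is in the non-integrability step. You correctly identify the heuristic balance $\rho^{1/3}\sim 1/|x|$ and the expected decay $\rho\sim|x|^{-3}$, but you explicitly flag the passage to a rigorous statement as ``the principal obstacle'' and then only name a route (``sub/supersolution comparison with $c|x|^{-3}$'') without carrying it out. This is exactly where the work lies: for the degenerate quasilinear equation involved, there is no off-the-shelf comparison principle with explicit barriers of the form $c|x|^{-3}$, and indeed the paper does \emph{not} proceed that way. Instead it performs two substitutions, first $\rho=u^\gamma$ with $\gamma=\beta/(\alpha+\beta)$ to turn the kinetic term into $\gamma^\beta\int|\nabla u|^\beta$, and then $u(r)=v(r^{-p})$ with $p=(3-\beta)/(\beta-1)$, obtaining for $v$ the ODE
\[
A\,t^{\beta(p+1)/p}|v'|^{\beta-2}v''=v^{\gamma-1}\bigl(v^{\gamma/3}-t^{1/p}\bigr),
\]
from which one reads off that $v$ is convex above the curve $v=t^{3/(\gamma p)}$ and concave below it. The argument then splits into the three cases $\beta>5/3$, $\beta=5/3$, $\beta<5/3$ (according to whether this separating curve is convex, linear, or concave near $0$) and in each case extracts $v(t)\gtrsim t^{3/(\gamma p)}$ near $t=0$, which unwinds to $\rho(r)\gtrsim r^{-3}$ at infinity. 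The case $\beta<5/3$ in particular requires a bootstrap from the ODE to rule out $v^{\gamma/3}\ll t^{1/p}$. None of this is captured by a bare ``comparison with $c|x|^{-3}$'', so as written your proposal is a correct plan for existence/uniqueness but leaves the heart of the theorem unproved.
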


\begin{proof}
By Lemma \ref{lemm:TU} we deduce that the infimum $L(\infty)$ is finite. Indeed, we have for every $R>0$
$$\int_{|x|\ge R}\frac{\rho}{|x|}\,dx\le\Big(\int_{|x|\ge R}\rho^{4/3}dx\Big)^{3/4}\Big(\frac{4\pi}{R}\Big)^{1/4}\;,$$
while
\[\begin{split}
\int_{|x|\le R}\frac{\rho}{|x|}\,dx&\le c\big(W_{\alpha,\beta}(\rho)\big)^{1/2}\Big(\int_{|x|\le R}\rho\,dx\Big)^{(\beta+1)/6}\\
&\le c\big(W_{\alpha,\beta}(\rho)\big)^{1/2}\bigg(\Big(\int_{|x|\le R}\rho^{4/3}dx\Big)^{3/4}\Big(\frac{4\pi R^3}{3}\Big)^{1/4}\bigg)^{(\beta+1)/6},
\end{split}\]
which easily lead to the finiteness of $L(\infty)$. In addition, minimizing sequences $(\rho_n)$ are such that $W_{\alpha,\beta}(\rho_n)$ and $\int\rho_n^{4/3}dx$ are bounded, and this leads to the existence of a minimizer $\rho$, which is unique because of the strict convexity of the involved functional.

Our goal is to prove that this solution $\rho$ is not in $L^1$, which implies that in this case, for every charge $Z>0$ no ionization occurs, and the $\Gamma$-limit functional $G$ admits a minimum in the class of $\rho$ with $\int\rho\,dx=m$.

In order to do that, it is convenient to write the problem replacing $\rho$ by $u^\gamma$, in the form
\be\label{Linftyu}
\inf\left\{\gamma^\beta\int|\nabla u|^\beta dx+\frac34\int u^{4\gamma/3}dx-\int\frac{u^\gamma}{|x|}\,dx\ :\ u\ge0\right\}.
\ee
Since the functional in \eqref{Linfty} is convex, the solution $\rho$ is radial; hence the solution $u$ of \eqref{Linftyu} is also radial. Writing the problem in polar coordinates we obtain
$$4\pi\inf\left\{\gamma^\beta\int r^2|u'|^\beta dr+\frac34\int r^2u^{4\gamma/3}dr-\int ru^\gamma\,dr\ :\ u\ge0\right\}$$
whose Euler-Lagrange equation is
$$-\beta\gamma^\beta|u'|^{\beta-2}\Big((\beta-1)u''+\frac{2}{r}u'\Big)+\gamma u^{(4\gamma-3)/3}=\frac{\gamma}{r}u^{\gamma-1}.$$
It is convenient now to set $u(r)=v(r^{-p})$, with $p=(3-\beta)/(\beta-1)$. Denoting by $t$ the variable $r^{-p}$, an easy computation gives for $v(t)$ the equation
\be\label{ELv}
A t^{\beta(p+1)/p}|v'(t)|^{\beta-2}v''(t)=v(t)^{\gamma-1}\Big(v(t)^{\gamma/3}-t^{1/p}\Big),
\ee
with
$$A=\frac{\beta(\beta-1)p^\beta}{\gamma}>0.$$
Then
$$\begin{cases}
v(t)\text{ is convex over the curve }v=t^{3/(\gamma p)}\\
v(t)\text{ is concave under the curve }v=t^{3/(\gamma p)}.
\end{cases}$$
If near $t=0$ we have $v(t)\ge c>0$, then it is immediate to see that $u(r)\ge c$ near infinity and so $\rho(r)$ cannot be in $L^1$. So let us consider the case
$$v(0)=\lim_{t\to0^+}v(t)=0.$$
According to the values of the parameter $\beta$, for the curve $v=v=t^{3/(\gamma p)}$ we have the situations illustrated in Figure \ref{fig1}.

\begin{figure}[h!]
\centering
{\includegraphics[scale=0.53]{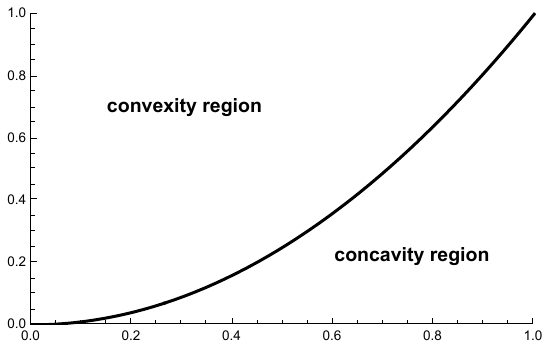}
\includegraphics[scale=0.53]{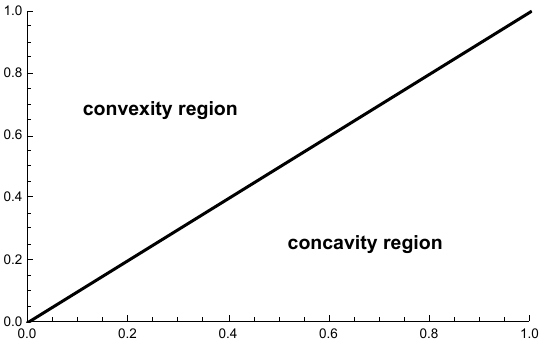}
\includegraphics[scale=0.53]{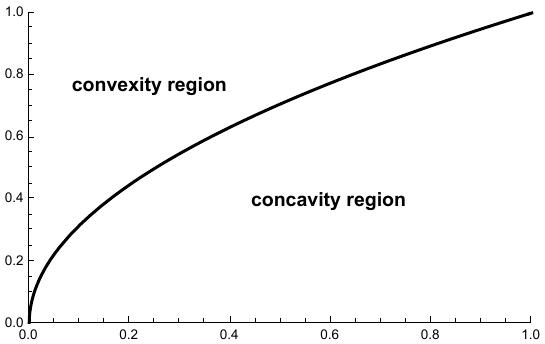}}
\caption{The case $\beta>5/3$ (left), $\beta=5/3$ (middle), $\beta<5/3$ (right).}\label{fig1}
\end{figure}

If $\beta>5/3$, the curve $v=t^{3/(\gamma p)}$ is convex. Since we are supposing $v(0)=0$, and since $v(t)$ is convex over the curve $v=t^{3/(\gamma p)}$, in a neighborhood of the origin we must have $v(t)\ge t^{3/(\gamma p)}$. Then $u(r)\ge r^{-3/\gamma}$ near infinity, which gives $\rho(r)\ge r^{-3}$. Hence $\rho$ is not in $L^1$.

\medskip

If $\beta=5/3$, the curve $v=t^{3/(\gamma p)}$ is linear, and in a neighborhood of the origin we must have $v(t)\sim kt$ for a suitable $k>0$. Then $u(r)\sim kr^{-2}$ near infinity, which gives $\rho(r)\sim k^{3/2}r^{-3}$. Hence $\rho$ is not in $L^1$.

\medskip

We consider now the case $\beta<5/3$ in which the curve $v=t^{3/(\gamma p)}$ is concave. Then $v(t)$ must have at least a linear growth near the origin, being in the concavity region of Figure \ref{fig1}. We cannot have $v(t)^{\gamma/3}\ll t^{1/p}$; indeed, from \eqref{ELv} this would imply
$$\big|(v')^{\beta-2}v''\big|\sim v^{\gamma-1}t^{(1-\beta p-\beta)/p}\ge t^k$$
near the origin, with
$$k=\gamma-1+\frac1p-\beta-\frac\beta p.$$
This implies
$$v'(t)\ge t^{(k+1)/(\beta-1)}$$
and
$$v(t)\ge t^{(k+\beta)/(\beta-1)}.$$
This is impossible, since we assumed $v(t)\ll t^{3/(\gamma p)}$, while 
$$\frac{3}{\gamma p}>\frac{k+\beta}{\beta-1}\qquad\text{when }\beta\in[5/4,5/3[,$$
as it can be easily verified. Then $v(t)\sim t^{3/(\gamma p)}$ near the origin, which gives $u(r)\sim r^{-3/\gamma}$ near infinity, so that $\rho(r)\sim r^{-3}$, which is not in $L^1$.
\end{proof}

\subsection{The case of Thomas-Fermi $T_0$ with electronic correlation $D(\rho)$}\label{ss33}
This case has been considered in \cite{LS}; we include it here for the sake of completeness. The function $g_b(Z,\alpha)$ is then given by
$$g_b(Z,\alpha)=\inf\Big\{T_0(\rho)+bD(\rho)-Z\int\frac{\rho}{|x|}\,dx\ :\ \int\rho\,dx\le\alpha\Big\}.$$
Writing as before $\rho(x)=\lambda s^3\eta(sx)$ and choosing $s=b^{2/3}Z^{1/3}$ and $\lambda=Z/b$ we obtain easily that
$$g_b(Z,\alpha)=Z^{7/3}b^{-1/3}L(\alpha b/Z),$$
where the function $L$ is given by \eqref{elle} with $T,C$ replaced by $T_0,D$. We summarize here below some properties of the function $L(t)$ in the case of the present subsection. A more extensive study can be found in \cite{LS}.

\begin{prop}\label{L(t)}
For the function
$$L(t)=\inf\Big\{T_0(\rho)+D(\rho)-\int\frac{\rho}{|x|}\,dx\ :\ \int\rho\,dx\le t\Big\}$$
we have:
\begin{itemize}
\item[(i)]the minimizer $\rho$ for $L(1)$ is supported on the entire $\R^3$, and $\rho(x)\approx(3/\pi)^3|x|^{-6}$ as $|x|\to\infty$;
\item[(ii)]$L$ is differentiable, and $L'(1)=0$;
\item[(iii)]for every $t<1$ the minimizer $\rho_t$ for $L(t)$ has a compact support, and $\spt\rho_t\subset B(0,R_t)$ with $R_t\le1/|L'(t)|$;
\item[(iv)]$L(t)\approx-3(\pi/2)^{4/3}t^{1/3}$ as $t\to0$; in particular $L'(0^+)=-\infty$.
\end{itemize}
\end{prop}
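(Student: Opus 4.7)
\medskip
\noindent\textbf{Proof plan.} Throughout I write $\phi_\rho(x) = \int \rho(y)/|x-y|\,dy$ and denote by $V_t(x) = 1/|x| - \phi_{\rho_t}(x)$ the net electrostatic potential generated by the minimizer $\rho_t$ of $L(t)$. By strict convexity of the functional, the minimizer $\rho_t$ is unique; by standard rearrangement (noting that the only anisotropic term $\int \rho/|x|\,dx$ is increased by symmetric decreasing rearrangement), $\rho_t$ is radially symmetric decreasing. The Euler--Lagrange equation with the mass constraint is
\[
\rho_t^{2/3}(x) = \bigl(V_t(x) - \mu(t)\bigr)_+ ,
\]
where $\mu(t) \ge 0$ is the Lagrange multiplier associated to $\int\rho\le t$, and $V_t$ satisfies $-\Delta V_t = 4\pi(\delta_0 - \rho_t)$. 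Convex duality (combined with uniqueness of $\rho_t$, hence of $\mu(t)$) yields $L$ convex and differentiable with $L'(t) = -\mu(t)$, already settling the differentiability part of (ii).

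\medskip
\noindent\textbf{Neutrality threshold at $t=1$, parts (i) and (ii).} The key is to identify when $\mu(t) = 0$. By Newton's theorem applied to the radial $\rho_t$, outside $\spt \rho_t$ one has $V_t(x) = (1-\int\rho_t)/|x|$. Integrating the PDE $-\Delta V_t = 4\pi(\delta_0 - \rho_t)$ over $B(0,R)$ and using the divergence theorem gives
\[
-R^2 V_t'(R) \;=\; 1 - \int_{B(0,R)} \rho_t \,dx \;\longrightarrow\; 1 - \textstyle\int \rho_t \quad\text{as } R \to \infty.
\]
If $\mu(t) > 0$, then $V_t \to 0$ at infinity forces $\spt \rho_t$ compact, and the identity above yields $\int \rho_t < 1$; whereas if $\mu(t) = 0$, then $\rho_t = V_t^{3/2}$ globally and we perform the ODE analysis of the Thomas--Fermi equation $V'' + (2/r)V' = 4\pi V^{3/2}$. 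The power-law ansatz $V \sim c/r^4$ yields $12c = 4\pi c^{3/2}$, hence $c = 9/\pi^2$; moreover $R^2 V'(R) \to 0$, so $\int \rho_t = 1$. This establishes that the ionization threshold is precisely $t = 1$: at $t=1$ we have $\mu(1)=0$ (so $L'(1)=0$), the minimizer has support on all $\R^3$, and $\rho_1(x) \sim (9/\pi^2)^{3/2}/|x|^6 = (3/\pi)^3/|x|^6$, proving (i) and (ii).

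\medskip
\noindent\textbf{Part (iii).} For $t < 1$, strict monotonicity of $L$ on $[0,1]$ (which follows from Proposition \ref{Lstrict} and the fact, established above, that the constraint $\int \rho_t \le t$ is saturated) forces $\mu(t) > 0$. Hence $\rho_t$ has compact support and its boundary $x_*$ satisfies $V_t(x_*) = \mu(t)$. Since $\phi_{\rho_t} \ge 0$ gives $V_t(x_*) \le 1/|x_*|$, we obtain $|x_*| \le 1/\mu(t) = 1/|L'(t)|$, which is exactly the claimed bound on $R_t$.

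\medskip
\noindent\textbf{Part (iv).} For $t \to 0$, the direct energy $D$ is subdominant and the problem reduces to the auxiliary one
\[
L_0(t) = \inf\Bigl\{ T_0(\rho) - \int \tfrac{\rho}{|x|}\,dx \;:\; \textstyle\int \rho\,dx\le t\Bigr\},
\]
whose EL equation $\rho^{2/3} = (1/|x| - \mu_0)_+$ admits the explicit minimizer $\rho_0 = (1/|x| - \mu_0)_+^{3/2}$ supported on $B(0, 1/\mu_0)$. A direct Beta-integral computation yields $\int \rho_0\,dx = \pi^2/(4\mu_0^{3/2})$, so $\mu_0 = (\pi^2/(4t))^{2/3}$; using the EL identity $\int \rho_0^{5/3}\,dx = \int \rho_0/|x|\,dx - \mu_0 t$ one gets $L_0(t) = -3(\pi/2)^{4/3}\,t^{1/3}$. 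Since $D \ge 0$ we have $L(t) \ge L_0(t)$, while testing with $\rho_0$ gives $L(t) \le L_0(t) + D(\rho_0)$; the bound $D(\rho_0) \le \tfrac{1}{2}\,t\,\phi_{\rho_0}(0) = O(t^{4/3})$ (using radiality and the explicit $\int \rho_0/|x|\,dx$) shows the remainder is negligible. Hence $L(t) = -3(\pi/2)^{4/3} t^{1/3} (1 + o(1))$, and the convexity of $L$ together with $L(0)=0$ forces $L'(0^+) = -\infty$.

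\medskip
\noindent\textbf{Main obstacle.} The delicate step is the second paragraph: matching the classical ODE asymptotics for the Thomas--Fermi equation with the Gauss integration to pin down the neutrality $\int\rho_1 = 1$ and the constant $(3/\pi)^3$. The other pieces (compact support for $t<1$, bound $R_t \le 1/|L'(t)|$, and the small-$t$ expansion) are then routine consequences of the Euler--Lagrange equation plus convex duality. These facts, together with finer regularity results, are developed in detail in \cite{LS}.
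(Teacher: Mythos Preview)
Your proposal is correct and develops a self-contained Euler--Lagrange/duality route, whereas the paper proceeds quite differently. For (i) and (iv) the paper simply cites \cite{LS} (Theorems IV.10 and II.32), while you sketch the actual arguments: the ODE asymptotics for the Thomas--Fermi equation in (i), and the explicit comparison with the auxiliary problem $L_0(t)$ (dropping $D$) via Beta integrals in (iv). For (ii), the paper bypasses the multiplier entirely: it tests $L(t)$ with $t\rho_1$ (where $\rho_1$ is the minimizer for $L(1)$), obtaining $L(t)\le h(t):=t^{5/3}T_0(\rho_1)+t^2D(\rho_1)-tU_0(\rho_1)$; since $h$ is smooth with $h(1)=L(1)$ and minimum at $t=1$, this pins $L'(1)=0$. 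Your route identifies $L'(t)=-\mu(t)$ by convex duality and then argues $\mu(1)=0$ from the Gauss/Newton computation. For (iii), the paper again avoids the Euler--Lagrange equation: it truncates $\rho_R=\rho_t 1_{B_R}$ and uses $L(t_R)\le L(t)+\tfrac{1}{R}(t-t_R)$ to extract $|L'(t)|\le 1/R_t$; your argument ($V_t(x_*)=\mu(t)\le 1/|x_*|$ at the free boundary) reaches the same bound more directly but requires the EL characterization. In short: the paper's arguments are variational/comparison-based and lighter on PDE input, while yours are EL-based and more informative about the structure of the minimizer; both are valid and your version is closer in spirit to how \cite{LS} itself proceeds.
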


\begin{proof}
The proof of (i) and (iv) can be found in \cite{LS}, Theorem IV.10 and Theorem II.32 respectively.

The proof of (ii) is also contained in Theorem II.32 of \cite{LS}. We give here a simple proof of $L'(1)=0$. If $\rho$ is the minimizer for $L(1)$, by taking $t\rho$ as a test function for $L(t)$, we have
$$L(1)\le L(t)\le t^{5/3}T_0(\rho)+t^2D(\rho)-t\int\frac{\rho}{|x|}\,dx.$$
Denoting by $h(t)$ the right-hand side in the inequality above, we have that $h(t)$ is differentiable, with minimum at $t=1$, and $h(1)=L(1)$. This proves that $L'(1)=0$.

Let us prove (iii). Let $t<1$ and let $\rho_t$ be the minimizer for $L(t)$. The function $\rho_t$ is radial and we denote by $R_t\le+\infty$ the radius of its support. For every $R<R_t$ we set
$$\rho_R=\rho_t1_{B_R},\qquad t_R=\int\rho_R\,dx<t.$$
Then
\[\begin{split}
L(t_R)&\le T_0(\rho_R)+D(\rho_R)-\int\frac{\rho_R}{|x|}\,dx\\
&\le T_0(\rho_t)+D(\rho_t)-\int\frac{\rho_t}{|x|}\,dx+\int_{\{|x|\ge R\}}\frac{\rho_t}{|x|}\,dx\\
&\le L(t)+\frac1R\int_{\{|x|\ge R\}}\rho_t\,dx\\
&=L(t)+\frac1R(t-t_R).
\end{split}\]
This implies that $|L'(t)|\le1/R_t$, so that whenever $L'(t)\ne0$ we get that $R_t$ is finite and $R_t\le1/|L'(t)|$. The fact that $L'(t)\ne0$ for $t<1$ follows from Theorem \ref{ioni1} below.
\end{proof}

In order to see if a ionization phenomenon occurs for some values of the charge $Z$, as in the previous subsections we study the limit problem for $L(t)$ as $t\to+\infty$:
\be\label{Linfty3}
L(\infty)=\inf\left\{\int\rho^{5/3}dx+D(\rho)-\int\frac{\rho(x)}{|x|}\,dx\ :\ \rho\ge0\right\}.
\ee

\begin{theo}\label{ioni1}
The infimum $L(\infty)$ above is finite, and this problem admits a unique solution $\rho$, which moreover satisfies $\int\rho\,dx=1$. As a consequence, in this case, when $m\le Z/b$ no ionization occurs, and the $\Gamma$-limit functional $G$ in \eqref{defG} admits a minimum in the class of $\rho$ with $\int\rho\,dx=m$. On the contrary, if $m>Z/b$, the ionization phenomenon occurs and the minimum of the $\Gamma$-limit functional $G$ in \eqref{defG} is attained in the class of $\rho$ with $\int\rho\,dx=\alpha$, with $\alpha=Z/b<m$.
\end{theo}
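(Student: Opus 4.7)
My plan is to reduce the unconstrained minimization \eqref{Linfty3} to the constrained problem $L(1)$ already analyzed in Proposition \ref{L(t)}, and then to identify the total mass of its minimizer via an Euler--Lagrange argument combined with Newton's theorem.

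The first step is to show $L(\infty)=L(1)$. By Proposition \ref{Lstrict} the function $L$ is convex, so $L'$ is non-decreasing; by construction $L$ is non-increasing, so $L'\le 0$. The identity $L'(1)=0$ from Proposition \ref{L(t)}(ii) then forces $L'\equiv 0$ on $[1,+\infty)$, whence $L(\infty)=L(1)<+\infty$. The minimizer $\rho_1$ of $L(1)$ from Proposition \ref{L(t)}(i) is admissible for \eqref{Linfty3} and attains its infimum; the pointwise decay $\rho_1(x)\sim(3/\pi)^3|x|^{-6}$ ensures $\int\rho_1<+\infty$. Uniqueness of the minimizer is immediate from the strict convexity of the functional $F(\rho):=T_0(\rho)+D(\rho)-U_0(\rho)$, since $T_0$ is strictly convex, $D$ convex and $U_0$ linear.

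The crux of the proof, and the main obstacle, is to show that $\int\rho_1=1$. Set $m:=\int\rho_1\in(0,1]$ and suppose for contradiction $m<1$. Then the mass constraint in $L(1)$ is inactive and $\rho_1$ satisfies the unconstrained Euler--Lagrange equation
$$\tfrac{5}{3}\rho_1(x)^{2/3}+\int\frac{\rho_1(y)}{|x-y|}\,dy=\frac{1}{|x|}\qquad\text{on }\spt(\rho_1)=\R^3.$$
Rotational invariance of $F$ together with uniqueness of the minimizer force $\rho_1$ to be radial, so Newton's theorem gives $\int\rho_1(y)/|x-y|\,dy\sim m/|x|$ as $|x|\to+\infty$, and hence $\tfrac{5}{3}\rho_1(x)^{2/3}\sim(1-m)/|x|$, i.e.\ $\rho_1(x)\sim C(1-m)^{3/2}|x|^{-3/2}$. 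This is incompatible with $\int\rho_1<+\infty$, proving $m=1$. As a by-product, for every $t<1$ the minimizer $\rho_t$ of $L(t)$ must satisfy $\int\rho_t=t$ (otherwise $\rho_t=\rho_1$ by uniqueness of the global minimizer, contradicting $\int\rho_t\le t<1$), so by Proposition \ref{Lstrict} $L$ is strictly convex on $[0,1]$, filling in the missing piece in the proof of Proposition \ref{L(t)}(iii).

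Finally, the ionization dichotomy follows from the rescaling $g_b(Z,\alpha)=Z^{7/3}b^{-1/3}L(\alpha b/Z)$ derived at the beginning of this subsection. Since $L$ is strictly decreasing on $[0,1]$ and constant on $[1,+\infty)$, the function $g_b(Z,\cdot)$ is strictly decreasing on $[0,Z/b]$ and constant on $[Z/b,+\infty)$. With a single nucleus $X_1$ of charge $Z$, the $\Gamma$-limit reads $G(\rho)=g_b(Z,\rho(\{X_1\}))$ and the limit problem is $\inf\{G(\rho):\int d\rho\le m\}$. When $m\le Z/b$, strict monotonicity of $g_b(Z,\cdot)$ on $[0,m]$ forces the infimum to be attained at $\rho(\{X_1\})=m$, yielding a minimizer of total mass $m$: no ionization. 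When $m>Z/b$, the constant regime of $g_b(Z,\cdot)$ is reached; the infimum equals $g_b(Z,Z/b)$ and is attained in particular by $\rho=(Z/b)\delta_{X_1}$, whose total mass $Z/b$ is strictly less than $m$: ionization takes place.
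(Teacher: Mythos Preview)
Your proof is correct, but it follows a genuinely different path from the paper's.

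\textbf{Finiteness and existence.} You obtain $L(\infty)=L(1)$ from the convexity of $L$ together with $L'(1)=0$ (Proposition~\ref{L(t)}(ii)), and then inherit both finiteness and the existence of a minimizer from Proposition~\ref{L(t)}(i). The paper instead establishes finiteness of $L(\infty)$ directly, splitting $\int\rho/|x|$ into near and far regions and bounding each piece by $T_0(\rho)$ and $D(\rho)$ via H\"older and Lemma~\ref{lemm:DU}; existence then follows from the direct method.

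\textbf{The identity $\int\bar\rho=1$.} Here the contrast is sharpest. You argue by contradiction via Newton's theorem: if $m=\int\rho_1<1$, the convolution potential satisfies $w(r)=m/r+o(1/r)$, so the Euler--Lagrange equation forces $\rho_1\sim c\,|x|^{-3/2}$ at infinity, contradicting $\rho_1\in L^1$. The paper instead takes the Laplacian of the Euler--Lagrange equation to obtain
\[
-\tfrac{5}{3}\Delta(\bar\rho^{2/3})+4\pi\bar\rho=4\pi\delta_0,
\]
integrates against a cut-off $\phi_R(x)=\psi(R/|x|)$ with an explicit choice of $\psi$, and shows the boundary term $\int_{R\le|x|\le2R}\bar\rho^{2/3}|\Delta\phi_R|\,dx$ vanishes by H\"older (using only $\bar\rho\in L^{5/3}$).

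\textbf{What each approach buys.} Your route is shorter and more conceptual, but it is load-bearing on Proposition~\ref{L(t)}(i) and (ii), hence on the Lieb--Simon results (full support, Sommerfeld decay $\sim|x|^{-6}$, and $L'(1)=0$). In particular, the equality form of the Euler--Lagrange equation you use at infinity relies on $\spt(\rho_1)=\R^3$. The paper's argument is self-contained: it never invokes the fine structure of the $L(1)$ minimizer and works directly on $L(\infty)$, at the price of constructing an explicit test function $\phi_R$. Your remark that the argument also yields $\int\rho_t=t$ for $t<1$ (hence strict convexity of $L$ on $[0,1]$ and the missing ingredient in Proposition~\ref{L(t)}(iii)) is a nice bonus that the paper obtains only as a consequence of the full theorem.
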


\begin{proof}
By H\"older inequality we have for every $R>0$ and $\eps>0$
\be\label{431}
\begin{split}
\int_{|x|\le R}\frac{\rho}{|x|}\,dx&\le C\Big(\int_{|x|\le R}\rho^{5/3}dx\Big)^{3/5}R^{1/5}\\
&\le\frac{\eps}{2}\int_{|x|\le R}\rho^{5/3}dx+\frac{C_\eps}{2}R^{1/2}
\end{split}
\ee
for a suitable constant $C_\eps$. On the other hand we infer from Lemma \ref{lemm:DU} that
\be\label{432}
\int_{|x|\ge R}\frac{\rho(x)}{|x|}\,dx\le\Big(\frac4R D(\rho)\Big)^{1/2}\le\frac{\eps}{2}D(\rho)+\frac{C_\eps}{2}\frac{1}{R}.
\ee
Putting together \eqref{431} and \eqref{432} we obtain
$$\int\frac{\rho(x)}{|x|}\,dx\le \eps\Big(\int\rho^{5/3}dx+D(\rho)\Big)+C_\eps\Big(R^{1/2}+\frac1R\Big),$$
which implies that $L(\infty)$ is finite. In addition, minimizing sequences $(\rho_n)$ are such that $\int\rho_n^{5/3}dx$ and $D(\rho_n)$ are bounded, and this leads to the existence of a minimizer $\bar\rho$, which is unique because of the strict convexity of the involved functional.

Our goal is to prove that this solution $\bar\rho$ is in $L^1$, which implies that in this case, for every charge $Z\ge mb\big(\|\bar\rho\|_{L^1}\big)^{-1/3}$ no ionization occurs, and the $\Gamma$-limit functional $G$ admits a minimum in the class of $\rho$ with $\int\rho\,dx=m$. On the contrary, below the threshold above, a fraction of electrons escapes to infinity (ionization phenomenon) and the $\Gamma$-limit functional $G$ in \eqref{defG} admits a minimum with total mass strictly less than $m$.

The Euler-Lagrange equation for the minimizer in \eqref{Linfty3} is
\be\label{ELTD}
\frac53\bar\rho^{2/3}+w=\frac{1}{|x|}
\ee
where $w$ is the solution of the PDE
$$-\Delta w=4\pi\bar\rho.$$
Taking the Laplacian of both sides in \eqref{ELTD} gives
\be\label{normu}
-\frac53\Delta(\bar\rho^{2/3})+4\pi\bar\rho=4\pi\delta_0.
\ee
We now multiply both sides of \eqref{normu} by a smooth function $\phi_R$ such that
$$\phi_R(x)=\begin{cases}
1&\text{if }|x|\le R\\
0&\text{if }|x|\ge2R
\end{cases}\qquad\text{and}\qquad0\le\phi_R(x)\le1$$
and integrate. We obtain
$$-\frac53\int_{R\le|x|\le2R}\bar\rho^{2/3}\Delta\phi_R\,dx+4\pi\int\bar\rho\phi_R\,dx=4\pi$$
which gives, as $R\to\infty$, $\|\bar\rho\|_{L^1}=1$, as soon as we find $\phi_R$ such that
$$\int_{R\le|x|\le2R}\bar\rho^{2/3}|\Delta\phi_R|\,dx\to0.$$
Since $\bar\rho$ is the minimizer of problem \eqref{Linfty3}, we have $\bar\rho\in L^{5/3}$ so that, by H\"older inequality, it is enough to find $\phi_R$ such that
\be\label{zero}
\int_{R\le|x|\le2R}|\Delta\phi_R|^{5/3}\,dx\to0.
\ee
In polar coordinates the integral in \eqref{zero} is
$$4\pi\int_R^{2R}r^2\Big|\phi_R''(r)+\frac2r\phi_r'(r)\Big|^{5/3}dr$$
which becomes, with $\phi_R(r)=\psi(R/r)$,
$$4\pi\int_R^{2R}r^2\Big|\psi''(R/r)\frac{R^2}{r^4}\Big|^{5/3}dr=4\pi R^{-1/3}\int_{1/2}^1 s^{8/3}|\psi''(s)|^{5/3}ds.$$
Taking for instance
$$\psi(s)=(2s-1)^2(3-2s^2)$$
we obtain that the right-hand side above vanishes as $R\to\infty$. Thus, in this case the ionization threshold is $m=Z/b$.
\end{proof}

\begin{rema}
Replacing the Thomas-Fermi term $\frac35\int\rho^{5/3}dx$ by $\frac1p\int\rho^pdx$ with $p>3/2$, we may repeat all the arguments in the proof of Theorem \ref{ioni1} and conclude that also in this more general case $L(\infty)$ is finite and there exists a unique solution $\rho$ of \eqref{Linfty3} which is in $L^1$. Moreover, we have $\int\rho\,dx=1$, so that for every $p>3/2$ the ionization threshold remains $m=Z/b$.
\end{rema}

\subsection{The case of von Weizs\"acker $W_{\alpha,\beta}$ with electronic correlation $D(\rho)$}
In this case the function $g_b(Z,\alpha)$ is given by
$$g_b(Z,\alpha)=\inf\Big\{W_{\alpha,\beta}(\rho)+bD(\rho)-Z\int\frac{\rho}{|x|}\,dx\ :\ \int\rho\,dx\le\alpha\Big\}.$$
Writing as before $\rho(x)=\lambda s^3\eta(sx)$ and choosing $s=b^{(2-\beta)/3}Z^{(\beta+1)/3}$ and $\lambda=Z/b$ we obtain easily
$$g_b(Z,\alpha)=Z^{(7+\beta)/3}b^{-(\beta+1)/3}L(\alpha b/Z),$$
where the function $L$ is given by \eqref{elle} with $T,C$ replaced by $W_{\alpha,\beta},D$. Again, the ionization phenomenon occurs when the problem
$$L(\infty)=\inf\left\{W_{\alpha,\beta}(\rho)+ D(\rho)-\int\frac{\rho(x)}{|x|}\,dx\ :\ \rho\ge0\right\}$$
admits a solution $\bar\rho$ which is in $L^1$, and in this case the ionization threshold is $m=\|\bar\rho\|_{L^1}Z/b$.

The following lemmas will be useful.

\begin{lemm}\label{deltaR}
Let $\rho\ge0$, with $\rho\in L^1_{loc}$, be such that $\int\rho\,dx>1$. Then there exist $R>0$ and $\delta>0$ such that
$$\rho*\frac{1}{|x|}\ge\frac{1+\delta}{|x|}\qquad\text{whenever }|x|>R.$$
\end{lemm}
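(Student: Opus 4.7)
The plan is to exploit the fact that for large $|x|$, the convolution $\rho*(1/|\cdot|)$ behaves like $M/|x|$ where $M=\int \rho\,dx$. Since $M>1$ by hypothesis, we have room to find $\delta>0$ with $1+\delta<M$, and we may even replace $M$ by a truncation $\int_{B(0,R_1)}\rho\,dx>1+2\delta$ via monotone convergence. The reservation of a buffer of size $2\delta$ is what allows absorbing the error coming from the finite shift inside the kernel $1/|x-y|$.

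More concretely, set $M:=\int\rho\,dx\in(1,+\infty]$, choose any $\delta>0$ with $2\delta<M-1$ (with the convention $M-1=+\infty$ if $M=+\infty$), and pick $R_1>0$ such that
\[
\int_{B(0,R_1)}\rho(y)\,dy\ >\ 1+2\delta,
\]
which is possible by monotone convergence since $\rho\in L^1_{loc}$. For $y\in B(0,R_1)$ we have the elementary bound $|x-y|\le |x|+R_1$, so that
\[
\left(\rho*\tfrac{1}{|\cdot|}\right)(x)\ \ge\ \int_{B(0,R_1)}\frac{\rho(y)}{|x-y|}\,dy\ \ge\ \frac{1}{|x|+R_1}\int_{B(0,R_1)}\rho(y)\,dy\ \ge\ \frac{1+2\delta}{|x|+R_1}.
\]

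The remaining step is a purely algebraic comparison: I want $\frac{1+2\delta}{|x|+R_1}\ge\frac{1+\delta}{|x|}$, which is equivalent to $\delta\,|x|\ge(1+\delta)R_1$, i.e., $|x|\ge R:=(1+\delta)R_1/\delta$. Taking this value of $R$ (and noting $R>R_1$ automatically), the desired inequality holds for all $|x|>R$.

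I do not foresee any real obstacle: the estimate $|x-y|\le|x|+R_1$ is crude but sufficient because we paid for it in advance by choosing a localization radius $R_1$ large enough to give a margin $2\delta$ above the threshold $1$. The only point to verify is that this margin can indeed be secured, which is precisely where the strict inequality $\int\rho\,dx>1$ is used.
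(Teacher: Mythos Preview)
Your proof is correct and follows essentially the same approach as the paper: both truncate $\rho$ to a ball where the mass exceeds $1$, use the crude bound $|x-y|\le|x|+R_1$ to get a lower bound of the form $\frac{m}{|x|+R_1}$, and then solve the elementary inequality for $R$ and $\delta$. The only cosmetic difference is the order in which parameters are selected (you fix $\delta$ first and then $R_1$, while the paper fixes the truncation radius $t$ first and then sets $\delta=(m(t)-1)/2$).
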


\begin{proof}
We have for every $t>0$
\[\begin{split}
\rho*\frac{1}{|x|}&=\int\frac{\rho(y)}{|x-y|}\,dy\ge\int_{\{|y|\le t\}}\frac{\rho(y)}{|x-y|}\,dy\\
&\ge\int_{\{|y|\le t\}}\frac{\rho(y)}{|x|+|y|}\,dy\ge\frac{1}{|x|+t}\int_{\{|y|\le t\}}\rho\,dy.
\end{split}\]
Setting
$$m(t)=\int_{\{|y|\le t\}}\!\!\!\rho\,dy$$
we have to find $R>0$ and $\delta>0$ such that
$$\frac{m(t)}{r+t}\ge\frac{1+\delta}{r}\qquad\text{whenever }r>R$$
or equivalently
$$m(t)\ge(1+\delta)\Big(1+\frac{t}{R}\Big).$$
By the assumption $\int\rho\,dx>1$ we can choose $t$ such that $m(t)>1$ and then take
$$\delta=\frac{m(t)-1}{2},\qquad R=t\,\frac{m(t)+1}{m(t)-1}$$
which concludes the proof.
\end{proof}

\begin{lemm}\label{rhoL1}
Let $u\ge0$ be a radial function such that
$$\begin{cases}
\ds-\Delta_\beta u+\frac{c}{|x|}u^{\gamma-1}\le0\quad\text{on }\{|x|>R\}\\
\ds\int\frac{u^\gamma(x)}{|x|}\,dx<+\infty,
\end{cases}$$
with $1<\beta\le2$, $\gamma=3\beta/(5-\beta)$, $c>0$, $R>0$. Then
$$\int u^\gamma\,dx<+\infty.$$
\end{lemm}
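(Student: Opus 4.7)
The plan is to estimate $\int u^\gamma\,dx=\int u\cdot u^{\gamma-1}\,dx$ by combining two pieces of information: a pointwise decay $u(r)\le C/r$ for $|x|>R$, obtained from the (to-be-proven) monotonicity of $u$ together with the assumption $\int u^\gamma/|x|\,dx<+\infty$, and a derived integrability $\int_{|x|>R} u^{\gamma-1}/|x|\,dx<+\infty$, obtained from the differential inequality itself. These give
\[
\int_{|x|>R} u^\gamma\,dx\;\le\; C\int_{|x|>R}\frac{u^{\gamma-1}}{|x|}\,dx<+\infty,
\]
while $\int_{|x|\le R}u^\gamma\,dx\le R\int u^\gamma/|x|\,dx<+\infty$ is trivial.

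The radial form of the inequality reads $\bigl(r^2|u'|^{\beta-2}u'\bigr)'\ge c\,r\,u^{\gamma-1}$ on $(R,\infty)$, so $\psi(r):=r^2|u'|^{\beta-2}u'(r)$ is nondecreasing with some limit $\psi_\infty\in\R\cup\{+\infty\}$. The key dichotomy is to rule out $\psi_\infty>0$: in that case there would exist $c_0>0$ and $r_0>R$ with $u'(r)\ge c_0\,r^{-2/(\beta-1)}$ for $r\ge r_0$, and since $2/(\beta-1)\ge 2$ for $\beta\in(1,2]$, integration would force $u$ to stay above a positive constant for large $r$, contradicting $\int u^\gamma/|x|\,dx<+\infty$. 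Therefore $\psi_\infty\le 0$, whence on one hand $u'\le 0$ on $(R,\infty)$, meaning $u$ is nonincreasing, and on the other hand $\psi$ is bounded, so integrating the inequality yields
\[
\int_R^\infty r\,u^{\gamma-1}(r)\,dr\;\le\;\frac{\psi_\infty-\psi(R)}{c}<+\infty,
\]
which is exactly $\int_{|x|>R}u^{\gamma-1}/|x|\,dx<+\infty$ in $\R^3$ variables.

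For the pointwise decay, the monotonicity combined with the hypothesis $\int_0^\infty r\,u^\gamma(r)\,dr<+\infty$ yields
\[
u^\gamma(r)\cdot\tfrac{3}{8}r^2\;\le\; \int_{r/2}^r s\,u^\gamma(s)\,ds\;\longrightarrow\; 0,
\]
so $r^{2/\gamma}u(r)\to 0$ as $r\to\infty$. As $\gamma=3\beta/(5-\beta)\le 2$ when $\beta\le 2$, one has $2/\gamma\ge 1$, hence $u(r)\,r\to 0$; enlarging $R$ if necessary, this gives $u(r)\le C/r$ on $\{|x|>R\}$, closing the estimate.

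The main technical obstacle is the dichotomy analysis on $\psi_\infty$, which simultaneously delivers the monotonicity of $u$ (needed for the pointwise decay) and the bounded variation of $\psi$ (needed for the extra integrability $\int u^{\gamma-1}/|x|\,dx<+\infty$); both rely critically on the bound $2/(\beta-1)>1$ afforded by the assumption $\beta\le 2$.
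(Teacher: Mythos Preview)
Your proof is correct and takes a genuinely different route from the paper's argument.

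The paper performs the change of variable $u(r)=v(r^{-p})$ with $p=(3-\beta)/(\beta-1)$, which turns the differential inequality into a statement that $v$ is convex on a neighbourhood of $t=0$. From $v(0)=0$ and convexity the authors multiply by $v'$ and integrate to obtain a pointwise bound of the form $v^\gamma(t)\le Ct^{(\beta p+\beta-1)/p}$, which they translate back into an integrability statement for $u^\gamma$; this works directly for $\beta<2$, while the borderline $\beta=2$ requires a separate computation yielding an exponential decay $v(t)\le\exp(-\sqrt{2c}\,t^{-1/2})$.

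Your argument stays in the original variable $r$ and exploits the monotonicity of the flux $\psi(r)=r^2|u'|^{\beta-2}u'$. This simultaneously delivers $u'\le0$ (hence the pointwise decay $u(r)\lesssim r^{-2/\gamma}\le r^{-1}$ via the averaging trick) and, by integrating $\psi'\ge c\,r\,u^{\gamma-1}$, the crucial finiteness of $\int_R^\infty r\,u^{\gamma-1}\,dr$. Writing $u^\gamma=u\cdot u^{\gamma-1}$ then closes the estimate. The appeal of your approach is that it avoids the change of variable and treats the whole range $1<\beta\le2$ uniformly, with no separate endpoint analysis. On the other hand, the paper's method gives sharper asymptotic information (a precise power/exponential decay rate for $v$), and the same substitution $t=r^{-p}$ reappears in their analysis of the $W_{\alpha,\beta}+C_0$ model, so it fits naturally into the paper's overall toolkit. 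A minor remark: in your dichotomy step, the bound $2/(\beta-1)\ge2$ is not actually needed to reach the contradiction --- once $\psi>0$ on a half-line one has $u'>0$, hence $u$ eventually exceeds some positive level, and $\int r\,u^\gamma\,dr=+\infty$ follows directly.
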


\begin{proof}
In polar coordinates we have
$$-\beta|u'|^{\beta-2}\Big((\beta-1)u''+\frac2r u'\Big)+\frac{c}{r}u^{\gamma-1}\le0.$$
Now we set $u(r)=v(r^{-p})$ with $p=(3-\beta)/(\beta-1)$, and we use the variable $t=r^{-p}$ with $T=R^{-p}$. We obtain
$$-\beta(\beta-1)p^\beta t^{\beta(p+1)/p}|v'(t)|^{\beta-2}v''(t)+ct^{1/p}v^{\gamma-1}(t)\le0\quad\text{on }\{]0,T[\}.$$
Since $u$ is supposed nonnegative, we have that the function $v(t)$ is convex, so the limit $v(0)=\lim_{t\to0^+}v(t)$ exists. By the assumption
$$\int_{\{|x|>R\}}\frac{u^\gamma(x)}{|x|}\,dx=\frac{4\pi}{p}\int_0^T t^{-(2+p)/p}v^\gamma(t)\,dt<+\infty$$
we deduce that $v(0)=0$ and, by convexity, $v'(t)\ge0$. Then, multiplying by $v'$ we obtain
$$\beta(\beta-1)p^\beta t^{(\beta p+\beta-1)/p}\Big(\frac{(v')^\beta}{\beta}\Big)'\ge c\Big(\frac{v^\gamma}{\gamma}\Big)'.$$
Integrating on $[0,t]$ we have
\be\label{ineq}
\begin{split}
c\frac{v^\gamma}{\gamma}&\le\beta(\beta-1)p^\beta\bigg[-\int_0^t\frac{\beta p+\beta-1}{p}s^{(\beta-1)(p+1)/p}\frac{(v')^\beta}{\beta}\,ds+t^{(\beta p+\beta-1)/p}\frac{(v')^\beta}{\beta}\bigg]\\
&\le(\beta-1)p^\beta t^{(\beta p+\beta-1)/p}(v')^\beta.
\end{split}\ee
Since $v(t)$ is convex with $v(0)=0$ we have that $v'(t)$ is bounded near the origin. Then
\be\label{bound}
v^\gamma\le Ct^{(\beta p+\beta-1)/p}
\ee
near the origin, for a suitable constant $C>0$. Then, in order to show that $u^\gamma$ is integrable, thanks to the assumption $\int\frac{u^\gamma(x)}{|x|}\,dx<+\infty$ it is enough to show that $u^\gamma$ is integrable near infinity. By \eqref{bound} we have
\[\begin{split}
\int_{|x|>M}u^\gamma\,dx&=4\pi\int_M^\infty r^2 u^\gamma(r)\,dr=4\pi\int_0^{M^{-p}}v^\gamma(t)t^{-(3+p)/p}dt\\
&\le4\pi C\int_0^{M^{-p}}t^{(\beta p+\beta-1)/p} t^{-(3+p)/p}dt,
\end{split}\]
which is integrable when $\beta<2$ since
$$\frac{\beta p+\beta-1}{p}-\frac{3+p}{p}>-1.$$
The case $\beta=2$ needs a different proof. From \eqref{ineq} we have
$$\frac{c}{2}v^2\le t^3(v')^2$$
from which
$$\sqrt{c/2}\,t^{-3/2}\le\frac{v'}{v}.$$
Integrating on $[t,t_0]$ we have
$$-\sqrt{2c}\big(t_0^{-1/2}-t^{-1/2}\big)\le\log v(t_0)-\log(v(t).$$
For $t$ close to the origin this gives
$$\log v(t)\le-\sqrt{2c}t^{-1/2}.$$
Hence
$$v(t)\le\exp\big(-\sqrt{2c}t^{-1/2}\big),$$
which implies the integrability of $u^\gamma$.
\end{proof}

We are now in a position to consider the limit problem for the function $L(t)$ as $t\to\infty$:
\be\label{WD}
\min\left\{W_{\alpha,\beta}(\rho)+ D(\rho)-\int\frac{\rho(x)}{|x|}\,dx\ :\ \rho\ge0\right\}.
\ee

\begin{theo}
The minimization problem \eqref{WD} admits a unique solution $\rho$ which is in $L^1$.
\end{theo}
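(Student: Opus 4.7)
The strategy parallels Theorem \ref{ioni1}: first establish existence and uniqueness of the minimizer $\bar\rho$ by the direct method, then derive its $L^1$-integrability via the Euler--Lagrange equation combined with Lemmas \ref{deltaR} and \ref{rhoL1}. For any admissible $\rho$ and any $R>0$, Lemma \ref{lemm:TU} controls $\int_{|x|<R}\rho/|x|\,dx$ by $W_{\alpha,\beta}(\rho)^{1/2}$ times a local mass, and Lemma \ref{lemm:DU} gives $\int_{|x|\ge R}\rho/|x|\,dx\le (2D(\rho)/R)^{1/2}$; combining these via Young's inequality yields an estimate of the form $U_0(\rho)\le \eps(W_{\alpha,\beta}(\rho)+D(\rho))+C_\eps$ for any $\eps>0$. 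In particular $L(\infty)>-\infty$ and a minimizing sequence $(\rho_n)$ has uniformly bounded $W_{\alpha,\beta}(\rho_n)$ and $D(\rho_n)$. By Lemma \ref{lemm:Wsobol}, $(\rho_n)$ is bounded in $L^p(\R^3)$ with $p=(5-\beta)/(3-\beta)\ge 15/7$; extracting a weakly converging subsequence $\rho_n\weak\bar\rho$, weak lower semicontinuity of $W_{\alpha,\beta}$ and $D$, together with continuity of $U_0$ (on $\{|x|\le R\}$ use that $1/|x|\in L^{p'}$, outside use the uniform smallness from Lemma \ref{lemm:DU}), show that $\bar\rho$ is a minimizer. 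Uniqueness follows from the strict convexity of $D$ as a positive definite quadratic form on measures of finite Coulomb energy.

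\emph{Integrability of $\bar\rho$.} If $\int\bar\rho\,dx\le 1$ there is nothing to prove. Assume therefore $\int\bar\rho\,dx>1$. Rotation invariance of the functional together with uniqueness forces $\bar\rho$ to be radial (in fact radially decreasing). Setting $\bar\rho=u^\gamma$ with $\gamma=3\beta/(5-\beta)$ gives $W_{\alpha,\beta}(\bar\rho)=\gamma^\beta\int|\nabla u|^\beta\,dx$, and the associated Euler--Lagrange equation reads
$$-\beta\gamma^\beta\,\dive\!\bigl(|\nabla u|^{\beta-2}\nabla u\bigr)+\gamma\,u^{\gamma-1}\!\left(w(x)-\frac{1}{|x|}\right)=0,$$
where $w=\bar\rho*1/|\cdot|$. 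By Lemma \ref{deltaR} there exist $R,\delta>0$ such that $w(x)\ge(1+\delta)/|x|$ on $\{|x|>R\}$, so the Euler--Lagrange equation yields
$$-\Delta_\beta u+\frac{c}{|x|}u^{\gamma-1}\le 0 \quad\text{on }\{|x|>R\},\qquad c=\frac{\gamma\delta}{\beta\gamma^\beta}>0.$$
Since $L(\infty)$, $W_{\alpha,\beta}(\bar\rho)$ and $D(\bar\rho)$ are all finite, so is $U_0(\bar\rho)=\int u^\gamma/|x|\,dx$. As $\beta\in[5/4,2]\subset(1,2]$ and $\gamma=3\beta/(5-\beta)$, Lemma \ref{rhoL1} applies and produces $\int u^\gamma\,dx=\int\bar\rho\,dx<+\infty$, as required.

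\emph{Main obstacles.} The delicate steps are the continuity of $U_0$ under the chosen weak convergence (which forces the two-scale splitting above), and the rigorous derivation of the Euler--Lagrange equation, particularly near the free boundary $\{u=0\}$. On the positivity set of $u$ standard variational arguments give the equation in the strong sense, which is enough since the differential inequality is only needed on $\{|x|>R\}$; should $\bar\rho$ happen to vanish outside some ball, the integrability claim is trivial. The quantitative mechanism behind the result is that the condition $\int\bar\rho>1$ makes the screened potential $w$ dominate $1/|x|$ at infinity, which in turn confines the density enough to be integrable --- exactly as in the Thomas--Fermi analysis of Theorem \ref{ioni1}.
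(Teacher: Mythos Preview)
Your proposal is correct and follows essentially the same route as the paper: direct method for existence/uniqueness, then the substitution $\bar\rho=u^\gamma$, the Euler--Lagrange equation, and the combination of Lemmas \ref{deltaR} and \ref{rhoL1} to obtain $\bar\rho\in L^1$. One point deserves more care. In your coercivity step you write that Lemma \ref{lemm:TU} bounds $\int_{|x|<R}\rho/|x|$ by $W_{\alpha,\beta}(\rho)^{1/2}$ times a local mass and then invoke Young's inequality to get $U_0(\rho)\le\eps(W_{\alpha,\beta}+D)+C_\eps$; but in problem \eqref{WD} there is no mass constraint, so the local mass $\int_{|x|<R}\rho$ is not a priori bounded, and Young alone does not close the estimate. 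The paper resolves this by the self-bootstrap $\int_{|x|<R}\rho\le R\int_{|x|<R}\rho/|x|$, which after substitution into Lemma \ref{lemm:TU} yields $\int_{|x|<R}\rho/|x|\le\kappa R^{(\beta+1)/(5-\beta)}W_{\alpha,\beta}(\rho)^{3/(5-\beta)}$ with exponent $3/(5-\beta)\le1$; choosing $R$ small then gives coercivity. Alternatively you could bound the local mass via Lemma \ref{lemm:Wsobol} and H\"older, which leads to the same exponent. Either way, this is the one place where your sketch needs to be made explicit.
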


\begin{proof}
We first prove that the infimum in \eqref{WD} is finite. Let $\rho \in L^1(\R^3)$ for which the energy functional in \eqref{WD} is finite, then from Lemma \ref{lemm:TU} we get that
\[
\forall R>0, \qquad 
\int_{|x|<R} \frac{1}{|x|}\rho(x)\,dx\le
\kappa\,\big(W_{\alpha,\beta}(\rho)\big)^{1/2}\Big(\int_{|x| < R}\rho\,dx\Big)^{(\beta+1)/6}
\]
and since for any $R>0$ one has
\[
\int_{|x| < R}\rho\,dx \;\le\; R \, \int_{|x|<R} \frac{1}{|x|}\rho(x)
\]
we conclude
\[
\forall R>0, \qquad 
\int_{|x|<R} \frac{1}{|x|}\rho(x)\,dx\le \kappa\,R^{(\beta+1)/(5-\beta)}\big(W_{\alpha,\beta}(\rho)\big)^{3/(5-\beta)}
\]
for some constant $\kappa$. Thanks to Lemma \ref{lemm:DU}, we then infer
\[
\int \frac{1}{|x|}\rho(x)\,dx\le \kappa\,R^{(\beta+1)/(5-\beta)}\big(W_{\alpha,\beta}(\rho)\big)^{3/(5-\beta)} + \left( \frac{R}{2} D(\rho) \right)^{1/2}\,. 
\]
Fixing $R>0$ small enough so that $\kappa\,R^{(\beta+1)/(5-\beta)}<1$, and noting that $3/(5-\beta)\le1$, we conclude that the infimum in $\eqref{WD}$ is finite, and that the problem has at least one minimizer $\rho$.

It is now convenient to set $\rho=u^\gamma$ with $\gamma=3\beta/(5-\beta)$; in this way the functional to be minimized becomes
$$\gamma^\beta\int|\nabla u|^\beta\,dx+ D(u^\gamma)-\int\frac{u^\gamma}{|x|}\,dx,$$
which gives the Euler-Lagrange equation
$$-\beta\gamma^\beta\Delta_\beta u+ \gamma u^{\gamma-1}\Big(u^\gamma*\frac{1}{|x|}\Big)-\gamma\frac{u^{\gamma-1}}{|x|}=0.$$
If $\int\rho\,dx=\int u^\gamma\,dx\le1$ we have the required summability of the solution $\rho$; otherwise, by Lemma \ref{deltaR} we obtain that for suitable $R>0$ and $\delta>0$
$$\rho*\frac{1}{|x|}=u^\gamma*\frac{1}{|x|}\ge\frac{1+\delta}{|x|}\qquad\text{whenever }|x|>R.$$
Then we have
$$-\beta\gamma^\beta\Delta_\beta u+\gamma\delta\frac{u^{\gamma-1}}{|x|}\le0\qquad\text{whenever }|x|>R.$$
Now Lemma \ref{rhoL1} applies, which provides the summability of $u^\gamma$, hence of $\rho$.
\end{proof}

\begin{rema}\label{rem:BBL}
In \cite{BBL}, a similar existence result in $L^1$ was obtained for the problem:
$$\inf\left\{W_{-1,2}(\rho)+\frac1{p}\int\rho^pdx+D(\rho)-ZU_0(\rho)\ :\ \rho\in L^1(\R^3),\ \rho\ge0\right\}.$$
where the additional term $\frac1p\int\rho^pdx$ is added (with $1<p<+\infty$). 
\end{rema}

\begin{rema}
It would be interesting to consider the case $C(\rho)=C_{GC}(\rho)$, with either the kinetic energy $T_0$ or $W_{\alpha,\beta}$, and to establish if in this case the ionization phenomenon occurs. Again, this amounts to consider the minimization problems
\[\begin{split}
&\inf\left\{\int\rho^{5/3}dx+C_{GC}(\rho)-\int\frac{\rho(x)}{|x|}\,dx\ :\ \rho\ge0\right\}\\
&\inf\left\{W_{\alpha,\beta}(\rho)+C_{GC}(\rho)-\int\frac{\rho(x)}{|x|}\,dx\ :\ \rho\ge0\right\}
\end{split}\]
and to see if their solutions are in $L^1$ or not. Some discussions on this problem are made in \cite{dmlene22}
when the kinetic energy is dropped while the Coulomb potential is replaced by a continuous function.
\end{rema}

\section{Some examples}\label{s:examplesnew}

In this section we present two examples that show how the electronic charge distributes when several nuclei are present. We consider the case of two nuclei and we take into account the energy $G$ in \eqref{defG} obtained as the $\Gamma$-limit of energies as $\eps\to0$. We also take, for simplicity, the parameter $b=1$. More precisely, we consider the case of two nuclei, with charge $Z_1$ and $Z_2$ respectively, and of a total electronic charge $m$ that has to be distributed around the two nuclei according to the minimum problem
\be\label{pbgb}
\min\big\{g_1(Z_1,m_1)+g_1(Z_2,m_2)\ :\ m_1+m_2\le m\big\}.
\ee
When no ionization occurs the inequality in \eqref{pbgb} is saturated, and $m_1+m_2=m$; on the contrary, in case of ionization the inequality can be strict.

\begin{exam}\label{exam1}
In this first example we consider the Thomas-Fermi kinetic energy
$$T_0(\rho)=\int\rho^{5/3}dx$$
together with the electronic repulsion
$$C_0(\rho)=\frac34\int\rho^{4/3}dx.$$
As we have seen in Subsection \ref{ss31} in this case no ionization occurs, and the function $g_1$ is given by
$$g_1(Z,m)=Z^3L(mZ^{-3}).$$
The problem is then reduced to the minimization
$$\min\Big\{Z_1^3L(m_1Z_1^{-3})+Z_2^3L(m_2Z_2^{-3})\ :\ m_1+m_2=m\Big\}.$$
By Proposition \ref{Lstrict}, in this case the function $L$ is strictly convex, hence
\[\begin{split}
Z_1^3 L\Big(\frac{m_1}{Z_1^3}\Big)+Z_2^3 L\Big(\frac{m_2}{Z_2^3}\Big)&\ge(Z_1^3+Z_2^3)L\Big(\frac{Z_1^3}{Z_1^3+Z_2^3}\frac{m_1}{Z_1^3}+\frac{Z_2^3}{Z_1^3+Z_2^3}\frac{m_2}{Z_2^3}\Big)\\
&=(Z_1^3+Z_2^3)L\Big(\frac{m}{Z_1^3+Z_2^3}\Big)\\
&=Z_1^3 L\Big(\frac{m}{Z_1^3+Z_2^3}\Big)+Z_2^3 L\Big(\frac{m}{Z_1^3+Z_2^3}\Big).
\end{split}\]
Then the optimal distribution of electrons is
$$m_1=\frac{mZ_1^3}{Z_1^3+Z_2^3},\qquad m_2=\frac{mZ_2^3}{Z_1^3+Z_2^3}.$$
\end{exam}

\begin{exam}\label{exam2}
In this second example we still consider the Thomas-Fermi kinetic energy $T_0$ as above, but with the electronic repulsion
$$D(\rho)=\int\int\frac{\rho(x)\rho(y)}{|x-y|}dx\,dy.$$
We still take for simplicity $b=1$. We have seen in Subsection \ref{ss33} that in this case the ionization phenomenon occurs for $Z<m$, which amounts to say that the corresponding function $L(t)$ is constant for $t\ge1$. Since in this case we have
$$g_1(Z,m)=Z^{7/3}L(m/Z),$$
the problem is reduced to the minimization
$$\min\Big\{Z_1^{7/3}L(m_1/Z_1)+Z_2^{7/3}L(m_2/Z_2):\ m_1+m_2\le m,\ m_1\le Z_1,\ m_2\le Z_2\Big\}.$$
Clearly, if $m\ge Z_1+Z_2$, the optimal distribution of electrons is
$$m_1=Z_1,\qquad m_2=Z_2.$$
We consider now the case $m<Z_1+Z_2$. By differentiating, we obtain
$$Z_1^{4/3}L'(m_1/Z_1)=Z_2^{4/3}L'(m_2/Z_2)$$
and the values of $m_1,m_2$ can be then obtained through a numerical approximation of the function $L$. By the properties seen in Proposition \ref{L(t)}, we have that
$$0<m_1<Z_1,\qquad0<m_2<Z_2;$$
in addition, by the strict convexity of the function $L$, if $Z_1<Z_2$ we deduce that
$$\frac{m_1}{Z_1}<\frac{m_2}{Z_2}.$$
\end{exam}

\section{Proof of Theorem \ref{th:mainG}}\label{s:proofG}

We now turn to the proof of Theorem \ref{th:mainG}, which follows directly from the Lemmata \ref{lemm:gliminf} and \ref{lemm:glimsup} below. We first state some usefull properties for the function $g_b$.

\begin{lemm}\label{lemm:gb}
For fixed $b,Z\ge0$, the function $\alpha\mapsto g_b(Z,\alpha)$ is convex and continuous on $\R_+$. Moreover, one has
\begin{multline}\label{e:gbcompact}
g_b(Z,\alpha)=\inf\bigg\{T(\rhotild)+b\,C(\rhotild)-Z\int\frac{\rhotild(x)}{|x|}\,dx\ :\\
\rhotild \in \M_+,\ supp(\rhotild) \mbox{ compact}, \ \rhotild(\R^3) = \alpha\bigg\}.
\end{multline}
\end{lemm}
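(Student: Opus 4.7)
The plan splits into two parts, which I would address in turn.

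First, for convexity of $\alpha \mapsto g_b(Z, \alpha)$: the admissible set $\{\mu \in \M_+ : \int d\mu \le \alpha\}$ is convex and monotone in $\alpha$, while $\mu \mapsto T(\mu) + bC(\mu) - ZU_0(\mu)$ is convex on $\M_+$ by convexity of $T$ and $C$ and linearity of $U_0$. Consequently, taking $\eps$-minimizers $\mu_i$ of $g_b(Z, \alpha_i)$ for $i=1,2$ and testing the convex combination $\lambda \mu_1 + (1-\lambda)\mu_2$ as admissible for $g_b(Z, \lambda \alpha_1 + (1-\lambda)\alpha_2)$ yields the convex inequality after letting $\eps \to 0$. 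Continuity on $(0,+\infty)$ is then automatic. For continuity at $0$, observe that $g_b(Z,0) = 0$ and $g_b(Z,\alpha) \le 0$ for all $\alpha \ge 0$ by testing $\mu = 0$; conversely, \eqref{h:TU} with $\Omega = \R^3$ gives $Z U_0(\mu) \le Z K_U (T(\mu)+1)^q \alpha^p$ for every admissible $\mu$, and minimizing $t - Z K_U (t+1)^q \alpha^p$ over $t \ge 0$ (attained at $t = 0$ for small $\alpha$, since $q < 1$) yields $g_b(Z, \alpha) \ge -Z K_U \alpha^p \to 0$.

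For the representation \eqref{e:gbcompact}, the inequality $\ge$ is immediate; the reverse is the substantial part. Given $\eps > 0$, I would pick $\mu \in \M_+$ with $\int d\mu \le \alpha$ and $T(\mu) + bC(\mu) - Z U_0(\mu) \le g_b(Z,\alpha) + \eps$, and construct compactly supported $\rhotild_n$ of mass exactly $\alpha$ whose energy converges to that of $\mu$. First truncate via $\rho_n := \theta_n \mu$ using \eqref{h:Ttrunc}: then $\spt(\rho_n) \subset B(0, n+1)$, $T(\rho_n) \to T(\mu)$ by \eqref{h:Ttrunc}, $C(\rho_n) \to C(\mu)$ by combining monotonicity \eqref{h:Cmono} (upper bound) with the weak$^*$ lower semi-continuity of $C$ (lower bound), $U_0(\rho_n) \to U_0(\mu)$ by monotone convergence, and $\beta_n := \rho_n(\R^3) \to \int d\mu \le \alpha$. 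Then compensate the defect $\gamma_n := \alpha - \beta_n$ by adding $\nu_n := (h_{s_n})_\# (\gamma_n \eta)$, where $\eta \in C_c^\infty(\R^3, \R_+)$ is a fixed unit-mass bump supported in a ball bounded away from $0$, $h_s(x) = x/s$, and $s_n \to 0^+$ is to be chosen. By \eqref{h:Thomog}, \eqref{h:Chomog} and the homogeneity $U_0((h_s)_\# \mu) = s\,U_0(\mu)$,
\[
T(\nu_n) = s_n^2\, T(\gamma_n \eta), \qquad C(\nu_n) = s_n\, C(\gamma_n \eta), \qquad U_0(\nu_n) = s_n\, U_0(\gamma_n \eta),
\]
and a diagonal choice of $s_n \to 0$ drives all three to $0$, while $\spt(\nu_n) = s_n^{-1}\spt(\eta)$ escapes to infinity. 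For $s_n$ small enough, $\rhotild_n := \rho_n + \nu_n$ has mass exactly $\alpha$, with the two summands having disjoint supports separated by $R_n \to \infty$. Applying \eqref{h:Tadd} for $T$, \eqref{h:Csubadd} for $C$ (whose error $K_C(R_n) \beta_n \gamma_n$ vanishes), and additivity of $U_0$ then gives $T(\rhotild_n) + bC(\rhotild_n) - ZU_0(\rhotild_n) \to T(\mu) + bC(\mu) - ZU_0(\mu) \le g_b(Z,\alpha) + \eps$, and letting $\eps \to 0$ produces the reverse inequality.

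The main obstacle is the second step of the construction: the added mass $\gamma_n$ must contribute negligibly to $T$, $C$, and $-ZU_0$ \emph{simultaneously}, and the subadditivity error in $C$ must also tend to zero. The dilation $h_s$ with $s \to 0$ is the right device because it scales $T$, $C$, $U_0$ by the favorable exponents $s^2, s, s$ while simultaneously pushing $\spt(\nu_n)$ to infinity, forcing $K_C(R_n) \to 0$ via \eqref{h:Csubadd}. A diagonal choice $s_n \to 0$ then handles all conditions at once.
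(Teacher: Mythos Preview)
Your proposal is correct and follows essentially the same route as the paper: convexity from convexity of the functional, continuity on $(0,\infty)$ automatic, and for \eqref{e:gbcompact} first truncate via \eqref{h:Ttrunc} (using \eqref{h:Cmono} and lower semicontinuity for $C$), then add a far-away rescaled bump whose $T$, $C$, $U_0$ contributions vanish by \eqref{h:Thomog}, \eqref{h:Chomog} and whose interaction with $\rho_n$ vanishes by \eqref{h:Csubadd}. The only cosmetic differences are that the paper uses two parameters (a scaling $\eps$ and a translation $X$) for the mass-filling bump whereas you achieve both effects with a single dilation $s_n$ of a bump supported away from the origin, and that your quantitative bound $g_b(Z,\alpha)\ge -ZK_U\alpha^p$ for continuity at $0$ is more explicit than the paper's appeal to lower semicontinuity.
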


\begin{proof}
The convexity of $\alpha\mapsto g_b(Z,\alpha)$ follows from the convexity of the functional $T+b\,C-Z\,U_0$. We note that $g_b(Z,0)=0$, so that $g_b$ is non-positive on $\R_+$. From \eqref{h:TU} we infer that for all $\rhotild \in \M_+$ for which $T(\rhotild)<+\infty$ it holds 
\[
T(\rhotild)+b \ C(\rhotild)-Z\,\int\frac{\rhotild(x)}{|x|}\,dx \;\geq\;
T(\rhotild)+b \ C(\rhotild)-Z\,K_U\,(T(\rhotild+1)^q \rhotild(\R^3)^p
\]
for some $q\in]0,1]$, from wich we infer that $g_b(Z,\alpha)$ takes finite values. As a consequence it is continuous on $]0,+\infty[$. The continuity at $0^+$ follows from the lower semi-continuity of the functional $T+b\,C-Z\,U_0$.

We now turn to \eqref{e:gbcompact}, and first prove the following
\begin{multline}\label{e:gbcompactint}
g_b(Z,\alpha)=\inf\bigg\{T(\rhotild)+b\,C(\rhotild)-Z\int\frac{\rhotild(x)}{|x|}\,dx\ :\\
\rhotild\in\M_+,\ supp(\rhotild)\mbox{ compact},\ \rhotild(\R^3)\le\alpha\bigg\}
\end{multline}
Remind that $g_b(Z,\alpha)$ is finite, and fix $\rhotild\in\M_+$ almost optimal for $g_b(Z,\alpha)$. Consider a sequence $(\theta_n)_n$ of cut-off functions as in \eqref{h:Ttrunc}, so that $T(\theta_n \rhotild) \to T(\rhotild)$ as $n \to +\infty$. Then from the monotonicity \eqref{h:Cmono} of $C$ we have that
$$\forall n,\qquad C(\theta_n\rhotild)\le\C(\rhotild)$$
and by the lower-semicontinuity of $C$ we also get $C(\theta_n \rhotild) \to C(\rhotild)$ as $n \to +\infty$. Finally we note that
$$\forall n\ge1,\qquad\norm{\int\frac{\theta_n(x)\rhotild(x)}{|x|}\,dx-\int\frac{\rhotild(x)}{|x|}\,dx}
\le\frac{\alpha}{n}$$
which concludes the proof of \eqref{e:gbcompactint}.

Now assume that $\rhotild$ has compact support and is almost optimal in \eqref{e:gbcompactint}. Let $\rhotild^*$ be smooth, compactly supported with total mass $\alpha - \rhotild(\R^3)$. We consider the family in $\M_+$ given by
\[
\rhotild_{\eps,X} = [\rhotild^*(\cdot-X)]_{\#\eps}
\]
where $\eps>0$ and $X \in \R^3$.
Then we have $[\rhotild+\rhotild_{\eps,X}](\R^3) = \alpha$, and we only need to prove that for adequate choices of the parameters $\eps$ and $X$ it holds
\[
(T+b\,C-Z\,U_0)(\rhotild+\rhotild_{\eps,X}) \simeq (T+b\,C-Z\,U_0)(\rhotild)
\]
We infer from the scaling properties \eqref{h:Thomog} and \eqref{h:Chomog}
\[
T(\rhotild_{\eps,X})\to0\quad\mbox{and}\quad C(\rhotild_{\eps,X})\to0\quad\mbox{as}\quad\eps\to0.
\]
Now we consider $\norm{X}$ big enough so that the supports of $\rhotild$ and $\rhotild_{\eps,X}$ do not intersect. In this case we infer from \eqref{h:Tadd} that
$T(\rhotild+\rhotild_{\eps,X}) \simeq T(\rhotild)$.
From \eqref{h:Cmono} and \eqref{h:Csubadd} we get that
\[
C(\rhotild)\le C(\rhotild+\rhotild_{\eps,X})\le C(\rhotild)+C(\rhotild_{\eps,X}) + K(R) \rhotild(\R^3)(1-\rhotild(\R^3))
\]
where $R$ is the distance between the supports of $\rhotild$ and $\rhotild_{\eps,X}$.
Since $R \to +\infty$ and $K(R) \to 0$ as $\norm{X} \to +\infty$, we infer
that $C(\rhotild+\rhotild_{\eps,X}) \simeq C(\rhotild)$ for $\norm{X}$ big enough.

Finally we notice that
\[
\norm{U_0(\rhotild_{\eps,X})}\le\frac{1-\rhotild(\R^3)}{dist(0,support(\rhotild_{\eps,X}))}\to0\quad\mbox{as}\quad\norm{X}\to+\infty.
\]
which concludes the proof since $U_0$ is linear.
\end{proof}

\begin{lemm}[$\Gamma-\liminf$ inequality]\label{lemm:gliminf}
Under the hypotheses of Theorem \ref{th:mainG}, for every $\rho\in\M_+$
it holds
$$\Gamma-\liminf_{\eps \to 0} G_\eps (\rho) \geq G(\rho) := \sum_{i=1}^N g_b(Z_i,\rho(\{X_i\}))\,.$$
\end{lemm}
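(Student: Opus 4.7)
My plan is to start from an arbitrary sequence $\rho_\eps \weak \rho$, localize around each nucleus via the cut-off of hypothesis \eqref{h:Tlocal}, rescale at the length-scale $\eps$, and identify each localized piece as an admissible candidate in the definition \eqref{defgb} of $g_b$. First I suppose, without loss of generality, that $\ell:=\liminf_\eps G_\eps(\rho_\eps)$ is finite and I extract a subsequence along which $G_\eps(\rho_\eps)\to \ell$. Banach--Steinhaus ensures $\sup_\eps \rho_\eps(\R^3)<+\infty$. Combining the upper bound $G_\eps(\rho_\eps)\le \ell+1$ with \eqref{h:TU} (applied, after translation, at each nucleus so as to control the negative contribution $-\eps U(\rho_\eps)$), a Young-type argument produces the a priori estimate $\sup_\eps \eps^2\,T(\rho_\eps)<+\infty$, which will be the key to keeping the localization error of \eqref{h:Tlocal} under control.

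Next, for $\delta>0$ small enough that the balls $B(X_i,2\delta)$ are pairwise disjoint, let $\theta_\delta$ be the cut-off given by \eqref{h:Tlocal} and set $\rho_\eps^i := \theta_\delta\,\rho_\eps\,1_{B(X_i,2\delta)}$. By \eqref{h:Tlocal} together with the above bound, $\eps^2\,T(\rho_\eps)\ge \eps^2\,T(\theta_\delta \rho_\eps) - c\,\omega(\delta)$. Properties \eqref{h:Tadd}, \eqref{h:Cmono}, and \eqref{h:Csuperadd}, applied on the pairwise disjoint supports of the $\rho_\eps^i$, then give $\eps^2\,T(\theta_\delta\rho_\eps) = \sum_i \eps^2\,T(\rho_\eps^i)$ and $\eps\, b\,C(\rho_\eps)\ge \sum_i \eps\, b\,C(\rho_\eps^i) - o_\eps(1)$. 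Splitting $U(\rho_\eps)$ between $\theta_\delta$ and $1-\theta_\delta$, using that $V$ is uniformly bounded on the support of $(1-\theta_\delta)$ and that on $B(X_i,2\delta)$ the remainder $V(x)-Z_i/|x-X_i|$ stays bounded, one reaches
\[
G_\eps(\rho_\eps)\;\ge\; \sum_{i=1}^M \left\{\eps^2\,T(\rho_\eps^i) + \eps\, b\,C(\rho_\eps^i) - \eps Z_i\!\int \frac{d\rho_\eps^i(x)}{|x-X_i|}\right\} \;-\; o_\eps(1)\;-\; c\,\omega(\delta).
\]

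I then zoom in at scale $\eps$ around each nucleus by defining $\mu^i_\eps(E):=\rho_\eps^i(X_i+\eps E)$. The scaling relations \eqref{h:Thomog}--\eqref{h:Chomog}, combined with the elementary identity $\int|y|^{-1}d\mu^i_\eps(y)=\eps\int|x-X_i|^{-1}d\rho_\eps^i(x)$, show that each bracket in the display above equals $T(\mu^i_\eps)+b\,C(\mu^i_\eps)-Z_i\int|y|^{-1}d\mu^i_\eps(y)$, which by the definition of $g_b$ is at least $g_b(Z_i,\mu^i_\eps(\R^3))=g_b(Z_i,\rho_\eps^i(\R^3))$. Weak-$*$ convergence yields $\rho_\eps^i(\R^3)=\int \theta_\delta\,1_{B(X_i,2\delta)}\,d\rho_\eps \to \alpha_i(\delta):=\int\theta_\delta\,1_{B(X_i,2\delta)}\,d\rho$, and the continuity of $g_b(Z_i,\cdot)$ from Lemma \ref{lemm:gb} leads to $\liminf_\eps G_\eps(\rho_\eps)\ge \sum_i g_b(Z_i,\alpha_i(\delta))-c\,\omega(\delta)$. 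Since $\alpha_i(\delta)\to \rho(\{X_i\})$ and $\omega(\delta)\to 0$ as $\delta\to 0$, a last pass to the limit through the continuity of $g_b$ in its second variable produces the desired inequality.

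The main obstacle is securing the a priori bound $\sup_\eps \eps^2\,T(\rho_\eps)<+\infty$: without it, the localization error $\eps^2 K_T(T(\rho_\eps)+1)^r\omega(\delta)$ need not collapse to $O(\omega(\delta))$ uniformly in $\eps$. This bound follows from \eqref{h:TU} by a Young inequality as soon as the exponent $q$ therein satisfies $q\le 1/2$, which is precisely what Lemma \ref{lemm:TU} provides in all the concrete kinetic-energy models considered. Once this estimate is in hand, the rest of the argument reduces to a careful bookkeeping of the scaling relations and of the definition of $g_b$.
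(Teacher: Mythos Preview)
Your argument follows the paper's proof closely: extract a subsequence with finite limit, establish an a priori bound on $\eps^2 T(\rho_\eps)$, localize via the cut-off of \eqref{h:Tlocal}, split into pieces around the nuclei using \eqref{h:Tadd}, \eqref{h:Cmono}, \eqref{h:Csuperadd}, rescale each piece at scale $\eps$ and recognize an admissible competitor in the definition of $g_b$, then let $\delta\to0$ using the continuity of $g_b(Z,\cdot)$. The only substantive discrepancy is your derivation of the bound $\sup_\eps \eps^2 T(\rho_\eps)<+\infty$.

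You apply \eqref{h:TU} directly after translation, obtaining $U(\rho_\eps)\le C\,(T(\rho_\eps)+1)^q$ and then $\eps^2 T(\rho_\eps)-C\,\eps\,(T(\rho_\eps)+1)^q\le K_G$; absorbing the second term indeed forces $q\le 1/2$, whereas the lemma is stated under the general hypothesis $q\in\,]0,1[$. The paper removes this restriction by a one-line device: rescale \emph{before} invoking \eqref{h:TU}. Since
\[
\eps\int\frac{Z_i}{|x-X_i|}\,d\rho_\eps(x)=\int\frac{Z_i}{|y|}\,d\big[\rho_\eps(\cdot+X_i)\big]_{\#\eps}(y),
\]
one may apply \eqref{h:TU} to the rescaled measure and use $T\big([\rho_\eps]_{\#\eps}\big)=\eps^2 T(\rho_\eps)$ (translation invariance plus \eqref{h:Thomog}) to obtain directly
\[
\eps^2 T(\rho_\eps)-\kappa\big(\eps^2 T(\rho_\eps)+1\big)^q\le G_\eps(\rho_\eps)\le K_G,
\]
which yields the bound for every $q<1$. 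With this adjustment your proof covers the full generality of the hypotheses and matches the paper's.
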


\begin{proof}
Let $\rho \in \M_+$
, and denote by $(\rho_\eps)_\eps$ a family weakly* converging in $\M_+$ to $\rho$, we aim to show
$$\liminf_{\eps \to 0} G_\eps (\rho_\eps)\ge G(\rho)\,.$$
Without loss of generality we may assume that the left hand side is finite, and up to considering a subsequence, we can also suppose that the $\liminf$ is a in fact a limit, so that there exists a constant $K_G$ such that
$$\forall\eps>0,\qquad G_\eps(\rho_\eps)\le K_G<+\infty\,.$$
We also note that $(\rho_\eps(\R^3))_\eps$ is bounded as $\eps \to 0$ since it converges weakly to $\rho$.

\emph{Step 1.} We first prove that the sequence $(\eps^2 T(\rho_\eps))_\eps$ is bounded as $\eps \to 0$. To this end, we use the property \eqref{h:TU} to compute
\begin{align*}
\eps U(\rho_\eps)
&=\sum_{i=1}^M \eps\int\frac{Z_i\rho_\eps}{\norm{\,\cdot-X_i}}\\
&=\sum_{i=1}^M\eps\int\frac{Z_i\rho_\eps(\cdot+X_i)}{\norm{\,\cdot\,}}\\
&=\sum_{i=1}^M\int\frac{Z_i[\rho_\eps(\cdot+X_i)]_{\#\eps}}{\norm{\,\cdot\,}}\\
&\le M K_U(T([\rho_\eps(\cdot+X_i)]_{\#\eps})+1)^q\,\rho_\eps(\R^3)^p \sum_i Z_i
\end{align*}
for some constants $K_U$, $q\in \,]0,1[\,$ and $p \in \,]0,1]$.
From the translation invariance of $T$ and the fact that $(\rho_\eps(\R^3))_\eps$ is bounded as $\eps \to 0$ we get that
$$\eps U(\rho_\eps)\le\kappa\,(T([\rho_\eps]_{\# \eps})+1)^q$$
for some constant $\kappa$. Using the scaling property \eqref{h:Thomog} for $T$ and the fact that $C$ is non negative we get
$$T([\rho_\eps]_{\# \eps})-\kappa(T([\rho_\eps]_{\# \eps})+1)^q\le\eps^2 T(\rho_\eps)-\eps U(\rho_\eps)\le G_\eps(\rho_\eps) \leq K_G\,.$$
Since $q<1$ this implies that $T([\rho_\eps]_{\# \eps})=\eps^2 T(\rho_\eps)$ is bounded as $\eps \to 0$.

\emph{Step 2.} We now localize the sequence $(\rho_\eps)_\eps$ around the poles $X_i$. To this end, let $\theta_\delta$ be a cut-off function verifying \eqref{h:Tlocal}, with $0<\delta < \frac{1}{5} \min\{\norm{X_i-X_j} : i \neq j\}$, so that
\[
T(\theta_\delta\,\rho_\eps)\le T(\rho_\eps) + K_T (T(\rho_\eps) + 1)^r \omega(\delta)
\]
with $r \in [0,1]$.
From the monotonicity \eqref{h:Cmono} of $C$ we get
\[
C(\theta_\delta \rho_\eps)\le C(\rho_\eps)\,.
\]
From the fact that $(\rho_\eps(\R^3))_\eps$ is bounded and $\norm{\,\cdot-X_i} > \delta$ on the support of $1-\theta_\delta$ we have
\[
-U(\theta_\delta\rho_\eps)=-U(\rho_\eps)+U((1-\theta_\delta)\rho_\eps)\le-U(\rho_\eps)-\frac{\kappa}{\delta}
\]
for some constant $\kappa$. As a consequence, since $\eps^2 T(\rho_\eps)$ is bounded we obtain
\[
\liminf_{\eps\to0}G_\eps(\theta_\delta\rho_\eps)\le\kappa\omega(\delta)+\liminf_{\eps\to0}G_\eps(\rho_\eps)
\]
for some constant $\kappa$ independant of $\delta$.

\emph{Step 3.} We now prove 
\be\label{e:Gdeltaeps}
\liminf_{\eps \to 0} G_\eps (\theta_\delta \rho_\eps) \geq \sum_{i=1}^M g_b(Z_i,\rho(B(X_i,2\delta))
\ee
To this end, we define 
\[
\theta_{\delta,i} := 1_{B(X_i,2\delta)} \theta_\delta
\]
and note that $\theta_\delta = \sum \theta_{\delta,i}$.
Since the supports of the functions $\theta_{\delta_i}$ are at distance at least $\delta$, we can use the additivity \eqref{h:Tadd} of $T$ and the superadditivity \eqref{h:Csuperadd} of $C$ respectively, as well as their invariance by translation, to get
\[
T(\theta_\delta \rho_\eps) = \sum_{i=1}^M T(\theta_{\delta,i} \rho_\eps)
=\sum_{i=1}^M T([\theta_{\delta,i} \rho_\eps](\cdot+X_i))
\]
and
\[
C(\theta_\delta \rho_\eps) \geq \sum_{i=1}^M C(\theta_{\delta,i} \rho_\eps) - \kappa K_C(\delta) = \sum_{i=1}^M C([\theta_{\delta,i} \rho_\eps](\cdot+X_i)) - \kappa K_C(\delta)
\]
where $\kappa$ is a uniform bound for $\rho_\eps(\R^3)\rho(\R^3)$.
On the other hand we have, since $(\rho_\eps(\R^3))$ is bounded we also have
\begin{align*}
- U(\theta_\delta \rho_\eps)
& = - \sum_{i,j=1}^M \int \frac{Z_j \theta_{\delta,i}\rho_\eps}{\norm{\,\cdot-X_j}} \\
& = - \sum_{i,j=1}^M \int \frac{Z_j [\theta_{\delta,i}\rho_\eps](\cdot+X_i)}{\norm{\,\cdot+X_i-X_j}}
\geq -\sum_{i=1}^M U_0([\theta_{\delta,i}\rho_\eps](\cdot+X_i)) - \frac{\kappa}{\delta}
\end{align*}
where $\kappa$ is a constant.

From the respective scaling properties of $T$, $C$ and $U_0$, we get that for all $i,\eps$ it holds
\[\begin{split}
&\eps^2 T([\theta_{\delta,i} \rho_\eps](\cdot+X_i)) + \eps C([\theta_{\delta,i} \rho_\eps](\cdot+X_i)) -\eps U_0([\theta_{\delta,i} \rho_\eps](\cdot+X_i))\\
&\qquad=T([[\theta_{\delta,i} \rho_\eps](\cdot+X_i)]_{\sharp \eps} )
+C([[\theta_{\delta,i} \rho_\eps](\cdot+X_i)]_{\sharp \eps} )
-U_0([[\theta_{\delta,i} \rho_\eps](\cdot+X_i)]_{\sharp \eps})\\
&\qquad\ge g_b(Z_i,[[\theta_{\delta,i}\rho_\eps](\cdot+X_i)]_{\sharp\eps}(\R^3))\\
&\qquad\ge g_b([[\theta_{\delta,i} \rho_\eps](\cdot+X_i)]_{\sharp \eps}(\R^3),Z_i)\ge g_b(\rho_\eps(B(X_i,2\delta)),Z_i),
\end{split}\]
where the last line holds since the map $\alpha\mapsto g_b(Z,\alpha)$ is non-increasing. Finally, since $\alpha\mapsto g_b(Z,\alpha)$ is continuous, we infer from the previous estimates that \eqref{e:Gdeltaeps} holds.

From Steps 2 and 3 we obtain that
$$\liminf_{\eps\to0}G_\eps (\rho_\eps)\ge\sum_i g_b(Z_i,\rho(B(X_i,2\delta)))-\kappa\,\omega(\delta)$$
and we conclude the proof by using again the fact that $\alpha \mapsto g_b(Z,\alpha)$ is continuous and $\omega(\delta)\to 0$ as $\delta\to0$.
\end{proof}

\begin{lemm}[$\Gamma-\limsup$ inequality]\label{lemm:glimsup}
Under the hypotheses of Theorem \ref{th:mainG}, for every $\rho \in \M_+$
it holds
\be\label{e:glimsup}
\Gamma-\limsup_{\eps\to0}G_\eps(\rho)\le G(\rho):=\sum_{i=1}^N g_b(Z_i,\rho(\{X_i\}))\,.
\ee
\end{lemm}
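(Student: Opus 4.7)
The plan is to build, for any $\rho \in \M_+$ with $G(\rho) < +\infty$, a recovering sequence $\rho_\eps \weak \rho$ with $\int d\rho_\eps = \int d\rho$ and $\limsup_\eps G_\eps(\rho_\eps) \le G(\rho)$. First I would decompose $\rho = \rho^{at} + \rho^{rest}$, where $\rho^{at} := \sum_{i=1}^M \alpha_i \delta_{X_i}$ with $\alpha_i := \rho(\{X_i\})$ and $\rho^{rest}(\{X_i\}) = 0$ for every $i$. Then, using the compact-support characterization \eqref{e:gbcompact} of Lemma \ref{lemm:gb}, fix for any $\eta > 0$ and each $i$ a compactly supported $\mu_i \in \M_+$ with $\mu_i(\R^3) = \alpha_i$ that is $\eta$-optimal for $g_b(Z_i, \alpha_i)$.

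For the atomic part, the concentration ansatz is $\sigma_\eps^i := [\mu_i]_{\#\,1/\eps}(\,\cdot - X_i)$, whose support shrinks into a ball $B(X_i, \eps R_i)$ and weak*-converges to $\alpha_i \delta_{X_i}$. The scaling properties \eqref{h:Thomog}, \eqref{h:Chomog} together with translation invariance, and a direct change of variables for $U_0$, yield
$$\eps^2 T(\sigma_\eps^i) = T(\mu_i), \qquad \eps C(\sigma_\eps^i) = C(\mu_i), \qquad \eps \int \frac{d\sigma_\eps^i(x)}{|x - X_i|} = U_0(\mu_i),$$
while for $j \ne i$ the contribution $\eps Z_j \int |x - X_j|^{-1}\, d\sigma_\eps^i(x)$ is $O(\eps)$, since $|X_i - X_j|$ is bounded below. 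Summing, $\sigma_\eps := \sum_i \sigma_\eps^i$ has mutually disjoint supports separated by at least $\tfrac{1}{2}\min_{i\ne j} |X_i - X_j|$ for small $\eps$, so \eqref{h:Tadd} gives $T(\sigma_\eps) = \sum_i T(\sigma_\eps^i)$ and iterating \eqref{h:Csubadd} gives $C(\sigma_\eps) \le \sum_i C(\sigma_\eps^i) + K$ for some $\eps$-independent constant $K$; multiplication by $\eps$ kills these cross-terms, so
$$\limsup_\eps G_\eps(\sigma_\eps) \le \sum_{i=1}^M \bigl[T(\mu_i) + b\,C(\mu_i) - Z_i U_0(\mu_i)\bigr] \le G(\rho^{at}) + M\eta.$$

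To also recover $\rho^{rest}$, I would approximate it by a sequence $(\tau_n)$ of smooth, compactly supported nonnegative measures with $\tau_n \weak \rho^{rest}$, $\tau_n(\R^3) = \rho^{rest}(\R^3)$, and $\dist(\spt \tau_n, \{X_1, \ldots, X_M\}) > 0$; such a sequence exists precisely because $\rho^{rest}$ charges none of the nuclei. For each fixed $n$, all of $T(\tau_n)$, $C(\tau_n)$ and $U(\tau_n)$ are finite, so $G_\eps(\tau_n) \to 0$ as $\eps \to 0$, and a diagonal extraction yields $n(\eps) \to \infty$ with $\tau_{n(\eps)} \weak \rho^{rest}$ and $G_\eps(\tau_{n(\eps)}) \to 0$. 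Setting $\rho_\eps := \sigma_\eps + \tau_{n(\eps)}$, both $\rho_\eps \weak \rho$ and $\int d\rho_\eps = \int d\rho$ are satisfied, and a last application of \eqref{h:Tadd}, \eqref{h:Csubadd} (again with a bounded cross-term absorbed by the $\eps$-factor) and linearity of $U$ gives $\limsup_\eps G_\eps(\rho_\eps) \le G(\rho) + M\eta$. A final diagonal extraction in $\eta \to 0$ then concludes.

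The main technical burden is the consistent bookkeeping of three distinct cross-term contributions: between the concentration blobs $\sigma_\eps^i$ and $\sigma_\eps^j$ in $C$, between $\sigma_\eps$ and $\tau_{n(\eps)}$ in $C$, and between the concentrated mass at $X_i$ and the Coulomb singularity at $X_j$ (for $j\ne i$) in $U$. For the $C$-terms the key observation is that all relevant interpoint distances stay bounded below, so that $K_C(R)$ in \eqref{h:Csubadd} stays bounded, and the decisive $\eps$-factor that makes each cross-term asymptotically negligible is supplied by the rescaling $G_\eps = \eps F_\eps$, which is precisely tuned via \eqref{h:Thomog} and \eqref{h:Chomog} to absorb the blow-ups of $T$, $C$ and $U_0$ on the concentrated blobs into the finite limiting quantities $T(\mu_i) + bC(\mu_i) - Z_i U_0(\mu_i)$.
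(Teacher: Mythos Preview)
Your proposal is correct and follows essentially the same approach as the paper: the same concentration ansatz built from compactly supported $\eta$-optimizers via Lemma~\ref{lemm:gb}, the same use of the scaling identities \eqref{h:Thomog}--\eqref{h:Chomog}, additivity \eqref{h:Tadd}, and subadditivity \eqref{h:Csubadd} to control cross-terms. The only organizational difference is that the paper first proves the inequality for $\rho$ of the special form \eqref{e:rhoreg} (atomic part plus a \emph{fixed} smooth remainder) and then extends to general $\rho$ by the lower semicontinuity of $\Gamma\text{-}\limsup$, whereas you construct the recovery sequence directly via a diagonal extraction in $n(\eps)$ and then in $\eta$; when carrying this out, just be sure to let $n(\eps)\to\infty$ slowly enough that $\eps\,K_C\big(\tfrac12\dist(\spt\tau_{n(\eps)},\{X_i\})\big)\to 0$, since that distance need not be bounded below uniformly in $n$.
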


\begin{proof}
Let $\rho\in\M_+$. 

\emph{Step 1.} We prove \eqref{e:glimsup} in the special case where $\rho$ is of the form
\be\label{e:rhoreg}
\rho = \sum_{i=1}^M \rho(\{X_i\}) \delta_{X_i} + \rhotild
\ee
where $\rhotild$ is in $C^\infty_c$ with support at positive distance from $\{X_1,\ldots,X_M\}$. We fix $\delta>0$, then from \eqref{e:gbcompact} we infer that for
every $i$ there exists $\rhotild_i \in \M_+$ with compact support such that
\[
g_b(Z_i,\rho(\{X_i\}))\ge T(\rhotild_i)+b \ C(\rhotild_i)-Z_i\,\int\frac{\rhotild_i(x)}{|x|}\,dx - \delta
\]
and $\rhotild_i(\R^3) = \rho(\{X_i\})$.
We now set
\[
\rho_\eps := \sum_{i=1}^M \eps^{-3}\theta_k\Big(\frac{\cdot-X_k}{\eps}\Big)+\rhotild = \sum_{i=1}^M [\rhotild_i(\cdot-X_i)]_{\# \eps^{-1}}+\rhotild\, 
\]
Note that $\rho_\eps$ is such that $\int d\rho_\eps =\int d\rho$.
For $\eps >0$ small enough, the supports of the functions in the previous sum are mutually disjoint, so we may assume that they are at distance $\Delta>0$ from each other. Then the additivity of $T$, as well as its scaling property \eqref{h:Thomog}, give
\[
\eps^2 T(\rho_\eps) = \eps^2 \sum_{i=1}^M T([\rhotild_i(\cdot-X_i)]_{\# \eps^{-1}})+ \eps^2 T(\rhotild)
= \sum_{i=1}^M T(\rhotild_i)+ \eps^2 T(\rhotild)\,.
\]
On the other hand, we use the subadditivity \eqref{h:Csubadd} of $C$ and its invariance by translation to get
\begin{align*}
\eps C(\rho_\eps)&\le\eps\sum_{i=1}^M C([\rhotild_i(\cdot-X_i)]_{\# \eps^{-1}})+\eps C(\rhotild)+\eps\kappa\\
&=\sum_{i=1}^M C(\rhotild_i)+\eps C(\rhotild)+\eps\kappa
\end{align*}
where the constant $\kappa$ depends only on the constant $K_C(\Delta)$ as well as the masses of the $\rhotild_i$ and $\rhotild$. Finally we have that
\begin{align*}
-\eps U(\rho_\eps)&=-\eps\sum_{i,j=1}^M Z_j\int\frac{\eps^{-3}\rhotild_i((x-X_i)/\eps)}{\norm{x -X_j}}\,dx-\eps U(\rhotild)\\
&\le-\sum_{i=1}^M Z_i\int\frac{\rhotild_i(x)}{\norm{x}}\,dx+\eps\frac{\kappa}{\Delta}
\end{align*}
where the constant $\kappa$ depends only on the masses of the $\rhotild_i$ and $\rhotild$.
As a consequence of the preceding estimates we infer that for $\eps >0$ small enough we have
\[
G_\eps(\rho_\eps)\le\sum_{i=1}^M g_b(Z_i,\rho(\{X_i\}))+M\delta+\eps C(\rhotild)+\eps\kappa+\eps\frac{\kappa}{\Delta}\,.
\]
Passing to the limit as $\eps\to0$ gives
\[
\limsup_{\eps\to0}G_\eps(\rho_\eps)\le\sum_{i=1}^M g_b(Z_i,\rho(\{X_i\}))+M \delta
\]
and we obtain \eqref{e:glimsup} in this case since $\delta>0$ is arbitrary.

\emph{Step 2.} Let now $\rho\in\M_+$. Then there exists a sequence $\rho_n$ of the form \eqref{e:rhoreg} such that
\[
\rho_n\weak\rho\qquad\mbox{and}\qquad\forall i,n,\quad\rho_n(\{X_i\})=\rho(\{X_i\}).
\]
with $\int d\rho_n = \int d\rho$ for all $n$.
Hence $G(\rho_n)=G(\rho)$. Therefore, the inequality \eqref{e:glimsup} can be extended to $\rho$ by the lower semicontinuity of the functional $\Gamma-\limsup G_\eps$.
\end{proof}

%
%
%
%
%

%
%
%
%
%

\bigskip\ack
This paper has been written during some visits of the authors at the Departments of Mathematics of Universities of Firenze and of Pisa, and at the Laboratoire IMATH of University of Toulon. The authors gratefully acknowledge the warm hospitality and support of these institutions. In particular, the second author warmly thanks the French CNRS and the FRUMAM of Marseille, which through the {\it``postes rouges''} program allowed him to spend a research stay in Toulon.

The work of the second author is part of the PRIN project {\it Gradient flows, Optimal Transport and Metric Measure Structures} (2017TEXA3H), funded by the Italian Ministry of Education and Research (MIUR). The work of the fourth author is partially financed by the {\it``Fondi di ricerca di ateneo''} of the University of Firenze and partially financed by the EU-Next Generation EU, (Missione 4, Componente 2, Investimento 1.1 {\it Progetti di Ricerca di Rilevante Interesse Nazionale} (PRIN), CUPB53D23009310006 - (2022J4FYNJ). The second and fourth authors are members of the research group GNAMPA of INdAM.


\bigskip
{\small\noindent
Guy Bouchitt\'e:
Laboratoire IMATH, Universit\'e de Toulon\\
BP 20132, 83957 La Garde Cedex - FRANCE\\
{\tt bouchitte@univ-tln.fr}

\bigskip\noindent
Giuseppe Buttazzo:
Dipartimento di Matematica, Universit\`a di Pisa\\
Largo B. Pontecorvo 5, 56127 Pisa - ITALY\\
{\tt giuseppe.buttazzo@unipi.it}\\
{\tt http://www.dm.unipi.it/pages/buttazzo/}

\bigskip\noindent
Thierry Champion:
Laboratoire IMATH, Universit\'e de Toulon\\
BP 20132, 83957 La Garde Cedex - FRANCE\\
{\tt champion@univ-tln.fr}\\
{\tt http://champion.univ-tln.fr}

\bigskip\noindent
Luigi De Pascale:
Dipartimento di Matematica e Informatica, Universit\`a di Firenze\\
Viale Morgagni 67/a, 50134 Firenze - ITALY\\
{\tt luigi.depascale@unifi.it}\\
{\tt http://web.math.unifi.it/users/depascal/}

\end{document}